\documentclass{article}
\usepackage{authblk}

\usepackage{amsmath}
\usepackage{amsthm}
\usepackage{amssymb}
\usepackage{latexsym}
\usepackage{colonequals}
\usepackage[inline]{enumitem}
\usepackage{quiver}
\usepackage{hyperref}

\usepackage[capitalize]{cleveref}
\usepackage{thmtools}

\usepackage{virginialake}
\vlnosmallleftlabels
\vlnostructuressyntax

\usepackage{ifthen}
\usepackage{xparse}

\newcommand{\paaras}[1]{}
\newcommand{\silvia}[1]{}

\newcommand{\AND}{\mathbin{\wedge}}
\newcommand{\OR}{\mathbin{\vee}}
\newcommand{\IMPLIES}{\mathbin{\supset}}
\newcommand{\IMP}{\IMPLIES}

\newcommand{\BOX}[1][]{\Box_{#1}}
\newcommand{\bottom}{\mathord{\bot}}
\newcommand{\BOT}{\mathord{\bottom}}

\newcommand{\BIGAND}{\bigwedge}

\newcommand{\BNFequals}{\coloncolonequals}

\newcommand{\langjust}{\mathcal{L}_{\J}}
\newcommand{\langpl}{\mathcal{L}_{\IPL}}
\newcommand{\langipl}{\langpl}
\newcommand{\J}{\mathsf{J}_{\lambda}}
\newcommand{\IPL}{\mathsf{IPL}}
\newcommand{\NDIPL}{\lambda\mathsf{IPL}}
\newcommand{\LP}{\mathsf{LP}}
\newcommand{\iLP}{\mathsf{iLP}}
\newcommand{\sfour}{\mathsf{S4}}
\newcommand{\PA}{\mathsf{PA}}
\newcommand{\HA}{\mathsf{HA}}
\newcommand{\axrule}[1]{\mathsf{#1}}
\newcommand{\iplax}[1]{\axrule{PL}_{#1}}
\newcommand{\jk}{\axrule{jk}}
\newcommand{\jimpi}{\axrule{j}_{\intrle{\IMP}}}
\newcommand{\jandi}{\axrule{j}_{\intrle{\AND}}}
\newcommand{\jandel}{\axrule{j}_{\elimrleleft{\AND}}}
\newcommand{\jander}{\axrule{j}_{\elimrleright{\AND}}}

\newcommand{\intrle}[1]{#1_\mathsf{I}}

\newcommand{\elimrle}[1]{#1_\mathsf{E}}
\newcommand{\elimrleleft}[1]{#1_\mathsf{E_l}}
\newcommand{\elimrleright}[1]{#1_\mathsf{E_r}}
\newcommand{\jtax}{\axrule{jt}}
\newcommand{\jfax}{\axrule{j4}}
\newcommand{\Prop}{\mathsf{Prop}}

\newcommand{\pv}[1][]{{x_{#1}}}
\newcommand{\pdot}[2]{#1 \mathbin{\cdot} #2}
\newcommand{\plambda}[2][\pv]{\lambda #1 . #2}
\newcommand{\pl}[2][x]{\plambda[#1]{#2}}
\newcommand{\pbang}[1]{\mathord{!}#1}
\newcommand{\pael}[1]{\pi_l(#1)}
\newcommand{\paer}[1]{\pi_r(#1)}
\newcommand{\pai}[2]{\langle #1 , #2 \rangle}

\newcommand{\prfvar}{\mathsf{PVar}}
\newcommand{\prfterms}{\mathsf{PTerms}}
\newcommand{\terms}{\lambda\mathsf{PTerms}}
\newcommand{\iplterms}{\mathsf{IPLTerms}}
\newcommand{\IPLterms}{\iplterms}
\newcommand{\Vars}{\lambda \mathsf{Var}}

\newcommand{\boxterm}[1]{\BOX (#1)}
\newcommand{\just}[2]{[#1]{#2}}

\newcommand{\hprove}[2][]{#1 \vdash_{\J} #2}
\newcommand{\hiplprove}[2][]{#1 \vdash_{\IPL} #2}
\renewcommand{\mp}{\mathsf{mp}}

\newcommand{\sset}{\subseteq}
\newcommand{\set}[2][]{%
	\ifthenelse{\equal{#1}{}}%
	{\{#2\}}%
	{\{#1 \mid #2\}}%
}
\newcommand{\function}[3]{#1 : #2 \rightarrow #3}

\newcommand{\UNI}{\cup}

\newcommand{\ptttsymb}{*}
\newcommand{\pttt}[1]{#1^{\ptttsymb}}
\newcommand{\ttptsymb}{\star}
\newcommand{\ttpt}[1]{#1^{\ttptsymb}}

\newcommand{\ctxfm}[2]{#1 {:} #2}
\newcommand{\fm}[2]{\ctxfm{#1}{#2}}
\newcommand{\iplctxvdash}{\vdash_{\NDIPL}}
\newcommand{\iplctx}[2]{#1 \iplctxvdash #2}
\newcommand{\iplctxset}[1][\Gamma]{#1}
\newcommand{\NDJ}{\mathsf{N} \J}
\newcommand{\utctxvdash}{\vdash_{\NDJ}}
\newcommand{\utctx}[2]{#1 \utctxvdash #2}
\newcommand{\nutctx}[2]{#1 \utctxvdash^{\mathsf{n}} #2}
\newcommand{\laJ}{\lambda \J}
\newcommand{\ctxvdash}{\vdash_{\laJ}}
\newcommand{\ctx}[2]{#1 \ctxvdash #2}
\newcommand{\nctx}[2]{#1 \ctxvdash^{\mathsf{n}} #2}

\newcommand{\var}[1][]{a_{#1}}
\newcommand{\app}[2]{#1 #2}
\newcommand{\la}[2][\var]{\lambda #1 . #2}
\newcommand{\projl}[1]{\pi_l (#1)}
\newcommand{\projr}[1]{\pi_r (#1)}
\newcommand{\andc}[2]{\langle #1, #2 \rangle}
\newcommand{\bang}[1]{{!}#1}
\newcommand{\prom}[2]{P_{#1}(#2)}
\newcommand{\boxapp}[2]{A^{\BOX}(#1, #2)}
\newcommand{\boxla}[2][\var]{\lambda^{\BOX} #1 . #2}
\newcommand{\boxprojl}[1]{\pi_l^{\BOX} (#1)}
\newcommand{\boxprojr}[1]{\pi_r^{\BOX} (#1)}
\newcommand{\boxandc}[2]{\langle #1, #2 \rangle^{\BOX}}
\newcommand{\unbox}[1]{U(#1)}
\newcommand{\typset}[1]{\overrightarrow{#1}}
\newcommand{\subst}[3]{#1[#2 \colonequals #3]}

\vlnosmallleftlabels

\newcommand{\vlstr}[3]{\vltr{#1}{#2}{\vlhy{}}{#3}{\vlhy{}}}
\newcommand{\vlhtr}[2]{\vlstr{#1}{#2}{\vlhy{\hskip1.5em}}}
\newcommand{\seqarrow}{\Rightarrow}
\newcommand{\seq}[2]{#1 \seqarrow #2}
\newcommand{\form}[1]{\text{form}(#1)}
\newcommand{\seqprove}[1]{\vdash_{\LJ} #1}
\newcommand{\seqprovecut}[1]{\vdash_{\LJ + \cut} #1}
\newcommand{\rr}[1]{#1_{\mathsf{r}}}
\newcommand{\lr}[1]{#1_{\mathsf{l}}}
\newcommand{\id}{\mathsf{id}}
\newcommand{\gid}{\mathsf{gid}}
\newcommand{\cont}{\mathsf{c}}
\newcommand{\boxrule}[1][]{[#1]}
\newcommand{\cut}{\mathsf{cut}}
\newcommand{\LJ}{\mathsf{L}\J}
\newcommand{\rle}{\mathsf{r}}
\newcommand{\ann}[1]{#1^{\annotationsymb}}
\newcommand{\annotationsymb}{\downarrow}
\newcommand{\rank}[2]{\text{rank}(#1, #2)}
\renewcommand{\deg}[1]{\text{deg}(#1)}
\renewcommand{\max}[2]{\text{max}(#1, #2)}
\newcommand{\cutrank}[1]{\text{cutrank}(#1)}
\newcommand{\height}[1]{\text{h}(#1)}

\newtheorem{theorem}{Theorem}
\newtheorem{lemma}[theorem]{Lemma}
\newtheorem{proposition}[theorem]{Proposition}

\theoremstyle{definition}
\newtheorem{definition}[theorem]{Definition}
\newtheorem{example}[theorem]{Example}

\theoremstyle{remark}
\newtheorem{remark}[theorem]{Remark}

\makeatletter
\def\@maketitle{%
	\newpage
	\null
	\vskip 2em%
	\begin{center}%
		\let \footnote \thanks
		{\Large\bfseries \@title \par}%
		\vskip 1.5em%
		{\normalsize
			\lineskip .5em%
			\begin{tabular}[t]{c}%
				\@author
			\end{tabular}\par}%
		\vskip 1em%
		{\normalsize \@date}%
	\end{center}%
	\par
	\vskip 1.5em}
\makeatother

\title{Justification Logic of the Lambda Calculus\\ 
}

\author{Silvia Ghilezan}
\affil{Faculty of Technical Sciences, University of Novi Sad
\\
Mathematical Institute of the Serbian Academy of Sciences and Arts}

\author{Paaras Padhiar}
\affil{School of Computer Science, University of Birmingham}

\date{}

\begin{document}

\maketitle

\begin{abstract}
The simply typed $\lambda$-calculus is a model of computation where typed terms correspond to proofs of intuitionistic propositional logic ($\IPL$) via the Curry–Howard correspondence. 
Justification logic is an operational modal logic in which the standard box modality is replaced by an explicit proof term, allowing the logic itself to reason directly about proofs of its formulas. 
Standard justification logics reason about proofs of $\IPL$ by embedding Hilbert-style axiomatic proofs of $\IPL$ as proof terms of the logic. 
We instead introduce a justification logic of the $\lambda$-calculus, in which the proof terms of the modality are exactly the typed $\lambda$-terms themselves: a modal logic that reasons about computation and proof simultaneously, as both notions coincide under the Curry–Howard interpretation.

First, we provide an axiomatisation of this logic. 
We then propose a natural deduction system and a Curry–Howard interpretation, where a formal connection between the term calculus and the proof terms of the logic is provided. 
To complete the picture, we give a Gentzen-style sequent calculus for which we prove cut-elimination, and consequently obtain a normalisation result by working in the negative fragment.
\end{abstract}

\section{Introduction}
Justification logic is an explicit modal logic where the $\BOX$ modality is replaced with an explicit proof term $t$.
The modal formula $\BOX A$ read as `$A$ is \emph{provable}' is replaced with $\just{t}{A}$ interpreted as `$t$ is a \emph{proof} of $A$'.
Proof terms are then manipulated through \emph{proof operations}.
In standard justification logics, the usual operations include the \emph{proof application} operation which \emph{internalises} modus ponens.
The first justification logic is the \emph{logic of proofs} ($\LP$) introduced in~\cite{artemov_operational_1995,artemov_explicit_2001} which is an explicit counterpart to the classical modal logic $\sfour$.
Via the G{\"o}del translation of intuitionistic propositional logic ($\IPL$) into $\sfour$, $\LP$ can also be read as a provability semantics for $\IPL$. 
Under this reading, Hilbert-style proofs of $\IPL$ embed directly into $\LP$: 
a theorem $A$ of $\IPL$ is translated to a theorem $\just{t}{A}$ of $\LP$ for some proof term $t$. 
These proof terms can in turn be constructed through the lens of combinatory logic, where they correspond directly to manipulations of axiomatic reasoning and modus ponens~\cite{shamkanov_strong_2011}.

With a similar motivation, we propose a \emph{Justification Logic of the $\lambda$-calculus}, $\J$, an intuitionistic logic, where we incorporate terms of the $\lambda$-calculus of the negative fragment, more specifically the $(\AND, \IMP)$-fragment, of $\IPL$\footnote{This is also a fragment of minimal propositional logic but we will not be making such a distinction.}.
We do so through embedding proofs of the $\lambda$-calculus into the logic, similarly to how Hilbert proofs of $\IPL$ are embedded into $\LP$.

A feature of justification logic is \emph{internalisation}, where the logic can internally reason about its own proofs.
We utilise this for $\J$ where the logic is then able to reason about its own computation.
As a natural next step, we then explore a natural deduction formulation of $\J$ to explore the computational content of that system and see how such content and proof terms relate.

The quest to explore a connection between proof terms and $\lambda$-terms began with
Artemov~\cite{artemov_unified_2002} where $\LP$ was studied on an intuitionistic base giving a Gentzen-style sequent calculus and natural deduction system.
The computational content of the resulting intuitionistic logic of proofs ($\iLP$) was investigated through a Curry-Howard interpretation, although only a limited argument for normalisation was given.
In~\cite{artemov_intensional_2007} strong normalisation for ($\iLP$) is achieved by using an extended natural deduction system which traces `historical' computation through judgements.
Other attempts include~\cite{alt_reflective_2001}, using ideas of the $\lambda$-calculus, and~\cite{deboer_justification_2022,deboer_justification_2022-1} through the lens of type theory.

The system in~\cite{artemov_intensional_2007} is later utilised in~\cite{bavera_justification_2010,steren_intuitionistic_2014,bavera_justification_2018,ricciotti_strongly_2017} to study of other justification logics of computations.
Of note, 
\cite{steren_intuitionistic_2014} studies the \emph{Hypothetical Logic of Proofs}, a logic of natural deduction judgements of justification logic 
in which terms of the $\lambda$-calculus of the traced computation are incorporated into the proof terms of the justification logic. 
The obstacle that recurs throughout this line of work is that the $\lambda$-calculus is strictly more expressive than proof terms of $\LP$: establishing a connection requires notions of term equality that have no direct counterpart on the justification logic side.
As our logic is distinct to $\LP$, and is constructed directly from the $\lambda$-calculus, we do not need to make such considerations.

\textbf{Contributions and outline of the paper}.
\cref{sec:JL,sec:ND,sec:pt} are our contributions to the literature.
\cref{sec:ipl} recalls the basic facts of the $(\AND, \IMP)$-fragment of $\IPL$ along with its $\lambda$-calculus.
\cref{sec:JL} introduces $\J$ axiomatically, discuss its internalisation features and how the $\lambda$-calculus \emph{embeds} into the logic.
\cref{sec:ND} introduces the untyped $\NDJ$ and typed $\laJ$ natural deduction calculi.
In that section, we also relate the proof terms of the language of $\J$ to the term calculus of $\laJ$,
\cref{sec:pt} introduces the sequent calculus formulation of $\J$, $\LJ$, where $\LJ + \cut$ is the system with the \emph{cut} rule, where a cut-elimination argument is made, and is used to achieve \emph{normalisation}.
These results are visualised as a `tour' in \cref{fig:results}.

\begin{figure}[t]
    \centering
\[\begin{tikzcd}
	&&&& \laJ && {\text{Normal $\laJ$}} \\
	\\
	&&&& \NDJ && {\text{Normal $\NDJ$}} \\
	\\
	{\text{$\lambda$-calculus}} && \J && {\LJ + \cut} && \LJ
	\arrow[dashed, from=1-5, to=1-7]
	\arrow["{\text{\cref{prop:jlcurryhoward}}}", tail reversed, from=3-5, to=1-5]
	\arrow[dashed, from=3-5, to=3-7]
	\arrow["{\text{\cref{lem:NDtoSC}}}", from=3-5, to=5-5]
	\arrow["{\text{\cref{prop:jlcurryhoward}}}"', tail reversed, from=3-7, to=1-7]
	\arrow["{\text{\cref{prop:iplintoj}}}"', hook, from=5-1, to=5-3]
	\arrow["{\text{\cref{thm:ndsoundcomp}}}", tail reversed, from=5-3, to=3-5]
	\arrow["{\text{\cref{prop:seqcomplete}}}"', from=5-3, to=5-5]
	\arrow["{\text{\cref{thm:cutelim}}}"', from=5-5, to=5-7]
	\arrow["{\text{\cref{lem:SCtoND}}}"', from=5-7, to=3-7]
	\arrow["{\text{\cref{prop:soundness}}}", shift left=4, curve={height=-30pt}, from=5-7, to=5-3]
\end{tikzcd}\]
    \caption{Tour of results}
    \label{fig:results}
\end{figure}

\section{Intuitionistic Propositional Logic}\label{sec:ipl}
To begin, we recall the axiomatisation of the $(\AND, \IMP)$-fragment of intuitionistic propositional logic~($\IPL$) and its typed $\lambda$-calculus.

\subsection{Axiomatisation}
Formulas of the language~$\langpl$ of~$\IPL$ are given by the grammar:
$$
    A
    \BNFequals
    p
    \mid
    (A \AND A)
    \mid
    (A \IMP A)
$$
where $p$ ranges over a countable set of propositional atoms $\Prop$.
We assume~$\IMP$ is right associative and drop the outermost brackets when it is clear to do so.

$\IPL$ is given by the following axiom schema
$$
    \begin{array}{rclrcl}
        \iplax{1} & {:} & A \IMP (B \IMP A) &
        \iplax{2} & {:} & (A \IMP (B \IMP C)) \IMP (A \IMP B) \IMP (A \IMP C) \\
        \iplax{3} & {:} & A \IMP B \IMP (A \AND B) &
        \iplax{4} & {:} & (A \AND B) \IMP A \\
        \iplax{5} & {:} & (A \AND B) \IMP B & & &
    \end{array}
$$
where consecutions $\hiplprove[\Gamma]{A}$ for $\Gamma \sset \langpl$ and $A \in \langpl$ is defined inductively as follows:
$$
    \vlinf{}{}
    {
        \hiplprove[\Gamma]{A}
    }
    {
        \text{$A$ is an axiom instance}
    }
    \quad
    \vlinf{}{}
    {
        \hiplprove[\Gamma]{A}
    }
    {
        A \in \Gamma
    }
    \quad
    \vliinf{\mp}{}
    {
        \hiplprove[\Gamma]{B}
    }
    {
        \hiplprove[\Gamma]{A}
    }
    {
        \hiplprove[\Gamma]{A \IMP B}
    }
$$
We say $A$ is a \emph{theorem} of $\IPL$ if $\hiplprove{A}$ as usual.

\subsection{Typed $\lambda$-Calculus}
The proof terms of the annotated natural deduction are defined by the grammar:
$$
    t 
    \BNFequals 
    \pv 
    \mid 
    \pdot{t}{t}
    \mid
    \pl{t}
    \mid
    \pael{t}
    \mid
    \paer{t}
    \mid
    \pai{t}{t}
$$
where $x$ ranges over a countable set of proof variables $\prfvar$.
A variable~$\pv$ in~$t$ is \emph{free} if it is not contained in any subterm of $t$ of the form $\pl{s}$.
We denote the set of these terms as~$\IPLterms$.
A term is \emph{closed} if it does not contain any free variables.
We write~$t(\pv[1], \dots, \pv[n])$ if~$t$ is a proof term constructed from variables~$\pv[1], \dots, \pv[n]$.
We \emph{type} formulas~$A \in \langpl$ as standard with proof terms $t$ and write $\ctxfm{t}{A}$.

\begin{definition}
    An $\IPL$-\emph{judgement} is written $\iplctx{\typset{\Gamma}}{\ctxfm{t}{A}}$, where $\typset{\Gamma}$ is a set of typed formulas~$\ctxfm{\pv[i]}{A_i}$ where~$\pv[i] \in \prfvar$ is defined inductively by
    the natural deduction calculus~$\NDIPL$ given by the rules in \cref{fig:iplnd} with~$\id$ as the base case, and a \emph{proof} in $\NDIPL$ is constructed as trees from these rules with root $\iplctx{\typset{\Gamma}}{\ctxfm{t}{A}}$ and leaves closed by $\id$.
\end{definition}
\begin{figure}[t]
    \fbox
    {
        \begin{minipage}{.95\textwidth}
            \centering
            $
                \vlinf{\id}{}
                {
                    \iplctx{\iplctxset, \ctxfm{\pv}{A}}{\ctxfm{\pv}{A}}
                }
                {}
            $
            \\[1ex]
            $
                \vlinf{\intrle{\IMP}}{}
                {
                    \iplctx{\iplctxset}{\ctxfm{\pl{t}}{A \IMP B}}
                }
                {
                    \iplctx{\iplctxset, \ctxfm{\pv}{A}}{\ctxfm{t}{B}}
                }
                \qquad
                \vliinf{\elimrle{\IMP}}{}
                {
                    \iplctx{\iplctxset}{\ctxfm{\pdot{s}{t}}{B}}
                }
                {
                    \iplctx{\iplctxset}{\ctxfm{s}{A \IMP B}}
                }
                {
                    \iplctx{\iplctxset}{\ctxfm{t}{A}}
                }
            $
            \\[1ex]
            $
                \vliinf{\intrle{\AND}}{}
                {
                    \iplctx{\iplctxset}{\ctxfm{\pai{s}{t}}{A \AND B}}
                }
                {
                    \iplctx{\iplctxset}{\ctxfm{s}{A}}
                }
                {
                    \iplctx{\iplctxset}{\ctxfm{t}{B}}
                }
                \qquad
                \vlinf{\elimrleleft{\AND}}{}
                {
                    \iplctx{\iplctxset}{\ctxfm{\pael{t}}{A}}
                }
                {
                    \iplctx{\iplctxset}{\ctxfm{t}{A \AND B}}
                }
                \qquad
                \vlinf{\elimrleright{\AND}}{}
                {
                    \iplctx{\iplctxset}{\ctxfm{\paer{t}}{B}}
                }
                {
                    \iplctx{\iplctxset}{\ctxfm{t}{A \AND B}}
                }
            $
        \end{minipage}
    }
    \caption{Natural deduction calculus $\NDIPL$}
    \label{fig:iplnd}
\end{figure}
By the Curry-Howard correspondence, we have the following (see \cite{sorensen_lectures_2006}).
\begin{theorem}
    Let $\Gamma \sset \langpl$ and $A \in \langpl$.
    We write $\typset{\Gamma}$ typing the formulas $\Gamma$ with distinct proof variables.
    Then
    \begin{enumerate}
        \item If $\hiplprove[\Gamma]{A}$, then $\iplctx{\typset{\Gamma}}{\ctxfm{t}{A}}$ for some proof term $t$ constructed from proof variables contained in $\typset{\Gamma}$.
        \item If $\iplctx{\typset{\Gamma}}{\ctxfm{t}{A}}$, then $\hiplprove[\Gamma]{A}$.
    \end{enumerate}
\end{theorem}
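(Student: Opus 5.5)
Both parts go by induction on derivations, one direction each.

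For (1), I induct on the derivation of $\hiplprove[\Gamma]{A}$ and build a $\NDIPL$-derivation of $\iplctx{\typset{\Gamma}}{\ctxfm{t}{A}}$ whose term uses only variables of $\typset{\Gamma}$.
\begin{itemize}
\item If $A \in \Gamma$, then $A$ is typed by some $\pv[i]$ in $\typset{\Gamma}$ and $\id$ gives $\ctxfm{\pv[i]}{A}$.
\item For $\mp$, the induction hypothesis gives $\ctxfm{s}{A \IMP B}$ and $\ctxfm{t}{A}$ in the same context, and $\elimrle{\IMP}$ yields $\ctxfm{\pdot{s}{t}}{B}$.
\item For axiom instances, I exhibit a closed term and derivation, with $x,y,z$ fresh:
\begin{itemize}
\item $\iplax{1}$: $\pl[x]{\pl[y]{x}}$.
\item $\iplax{2}$: $\pl[x]{\pl[y]{\pl[z]{\pdot{\pdot{x}{z}}{(\pdot{y}{z})}}}}$.
\item $\iplax{3}$: $\pl[x]{\pl[y]{\pai{x}{y}}}$.
\item $\iplax{4}$: $\pl[x]{\pael{x}}$.
\item $\iplax{5}$: $\pl[x]{\paer{x}}$.
\end{itemize}
Each is typed by $\id$ at the leaves, the matching eliminations, and then $\intrle{\IMP}$ steps.
\end{itemize}
Some care is needed in the $\id$ axioms for these closed terms. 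The context there is $\typset{\Gamma}$ extended by the fresh variables, and the $\id$ rule allows arbitrary extra context, so weakening is built in. The resulting terms contain no variables outside $\typset{\Gamma}$.

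For (2), I induct on the $\NDIPL$ derivation and show the stronger claim: if $\iplctx{\typset{\Delta}}{\ctxfm{t}{A}}$, then $\hiplprove[\Delta]{A}$ for the underlying formula set $\Delta$. I need this general form because $\intrle{\IMP}$ enlarges the context, so the context is arbitrary during the induction.
\begin{itemize}
\item $\id$ is the assumption rule.
\item $\elimrle{\IMP}$ is $\mp$.
\item $\intrle{\AND}$ uses $\iplax{3}$ and two applications of $\mp$.
\item $\elimrleleft{\AND}$ and $\elimrleright{\AND}$ use $\iplax{4}$ and $\iplax{5}$ with $\mp$.
\end{itemize}

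The main step is $\intrle{\IMP}$. From $\hiplprove[\Delta, A]{B}$ I must conclude $\hiplprove[\Delta]{A \IMP B}$, which is the deduction theorem for this Hilbert system. I prove it first as a lemma, by induction on the derivation of $\hiplprove[\Delta, A]{B}$:
\begin{itemize}
\item If $B$ is an axiom or lies in $\Delta$, use $\iplax{1}$ and $\mp$.
\item If $B = A$, derive $A \IMP A$ from $\iplax{1}$ and $\iplax{2}$ in the usual SKK fashion.
\item If $B$ comes by $\mp$ from $C$ and $C \IMP B$, the hypotheses give $A \IMP C$ and $A \IMP C \IMP B$, and $\iplax{2}$ with two applications of $\mp$ gives $A \IMP B$.
\end{itemize}

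One bookkeeping issue is that typed contexts are sets of typed variables. Two distinct variables may carry the same formula, and after erasing the variables the underlying set simply merges them. This is harmless for Hilbert consequence. In (1), the hypothesis that $\typset{\Gamma}$ uses distinct variables makes the $\id$ lookup well defined. The only real content is the deduction theorem; everything else is a case check.
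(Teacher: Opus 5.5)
Your proposal is correct and is the standard textbook argument the paper defers to by citing \cite{sorensen_lectures_2006}. One direction uses closed $\lambda$-terms for the axioms (the same ones the paper later uses in its appendix proof of Self-internalisation) together with $\elimrle{\IMP}$ for $\mp$; the other is induction on $\NDIPL$-derivations over arbitrary contexts, with the Hilbert deduction theorem handling $\intrle{\IMP}$. One wording fix: your terms contain the bound variables $x,y,z$, so the closing remark of part (1) should say that no \emph{free} variables lie outside $\typset{\Gamma}$, which is how the statement's ``constructed from proof variables contained in $\typset{\Gamma}$'' must be read anyway.
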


\section{Justification Logic $\J$}\label{sec:JL}
In this section, we introduce our justification logic~$\J$ for $\NDIPL$ axiomatically with a discussion of its internalization features.
We also briefly discuss the relations to other justification logics in the literature which include forms of $\lambda$-abstraction in the language of the logic.

\subsection{Axiomatisation}
We extend \emph{proof terms} of $\NDIPL$ with the $\pbang{}$ operator:
$$
    t 
    \BNFequals 
    \pv 
    \mid 
    \pdot{t}{t}
    \mid
    \pl{t}
    \mid
    \pael{t}
    \mid
    \paer{t}
    \mid
    \pai{t}{t}
    \mid
    \pbang{t}
$$
where $x$ ranges over a countable set of proof variables $\prfvar$.
We denote the set of proof terms as~$\prfterms$ and note~$\iplterms \sset \prfterms$.
Free variables and closed terms are similarly defined.
\emph{Formulas} of the language~$\langjust$ are given by the grammar
$$
    A
    \BNFequals
    p
    \mid
    (A \AND A)
    \mid
    (A \IMP A)
    \mid
    \just{t}{A}
$$
where $p$ ranges over a countable set of propositional atoms $\Prop$.

The logic $\J$ is defined by extending intuitionistic propositional logic ($\IPL$) with:
$$
    \begin{array}{rclrcl}
         \jk & {:} & \just{s}{(A \IMP B)} \IMP (\just{t}{A} \IMP \just{\pdot{s}{t}}{B})  &
         \jimpi & {:} & (\just{x}{A} \IMP \just{t}{B}) \IMP \just{\pl{t}}{(A \IMP B)} \\
         \jandi & {:} & \just{s}{A} \IMP (\just{t}{B} \IMP \just{\pai{s}{t}}{(A \AND B)}) &
         \jandel & {:} & \just{t}{(A \AND B)} \IMP \just{\pael{t}}{A} \\
         \jander & {:} & \just{t}{(A \AND B)} \IMP \just{\paer{t}}{B} &
         \jtax & : & \just{t}{A} \IMP A \\
         \jfax & : & \just{t}{A} \IMP \just{\pbang{t}}{\just{t}{A}}
         & & &
    \end{array}
$$
Consecutions $\hprove[\Gamma]{A}$ where $\Gamma \sset \langjust$ and $A \in \langjust$ is defined inductively as follows:
$$
    \vlinf{}{}
    {
        \hprove[\Gamma]{A}
    }
    {
        \text{$A$ is an axiom instance}
    }
    \quad
    \vlinf{}{}
    {
        \hprove[\Gamma]{A}
    }
    {
        A \in \Gamma
    }
    \quad
    \vliinf{\mp}{}
    {
        \hprove[\Gamma]{B}
    }
    {
        \hprove[\Gamma]{A}
    }
    {
        \hprove[\Gamma]{A \IMP B}
    }
$$
We say $A$ is a \emph{theorem} of $\J$ if $\hprove{A}$.

For shorthand, given $\Delta \sset \langjust$ and $B \in \langjust$, we write
$
    \hprove[\Gamma, B]{A}
$
and
$
    \hprove[\Gamma, \Delta]{A}
$
for
$
    \hprove[\Gamma \UNI \set{B}]{A}
$
and
$
    \hprove[\Gamma \UNI \Delta]{A}
$.

\begin{remark}
    The logic restricted to $\jk$, $\jtax$ and $\jfax$ is a fragment of the intuitionistic Logic of Proofs~\cite{artemov_unified_2002}.
    The axiom $\jk$ \emph{internalises} modus ponens at the level of the modality, $\jtax$ gives an \emph{provability} reading of the modality and $\jfax$ allows for reasoning about proofs at a higher level where the $\pbang{}$ operator denotes \emph{positive introspection} or \emph{proof checking}.
    The other axioms along with~$\jk$ reason about the introduction and elimination rules of $\NDIPL$.
\end{remark}

\subsection{Properties and Internalisation of Justification Logic}
As is standard in justification logic, we have the following
\begin{theorem}[Deduction Theorem]\label{thm:deduction}
    Let $\Gamma \sset \langjust$ and $A \in \langjust$.
    Then
    $$
        \hprove[\Gamma, B]{A} \iff \hprove[\Gamma]{B \IMP A}.
    $$
\end{theorem}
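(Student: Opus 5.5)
We prove the two directions separately; the right-to-left direction is immediate and the left-to-right direction is the standard induction on derivations. Note that $\J$ has no necessitation rule: axioms are admitted in every context, and internalisation is not a primitive rule. This is what makes the usual Hilbert-style argument go through unchanged.

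For the right-to-left direction, assume $\hprove[\Gamma]{B \IMP A}$. By monotonicity (any derivation from $\Gamma$ is a derivation from $\Gamma \UNI \set{B}$, since the rules only consult membership in the context), we get $\hprove[\Gamma, B]{B \IMP A}$. We also have $\hprove[\Gamma, B]{B}$ by the assumption rule, so an application of $\mp$ yields $\hprove[\Gamma, B]{A}$.

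For the left-to-right direction, we proceed by induction on the derivation of $\hprove[\Gamma, B]{A}$, using only the $\IPL$ axioms $\iplax{1}$ and $\iplax{2}$, which are available in $\J$ for all formulas of $\langjust$.
\begin{enumerate*}[label=(\roman*)]
\item If $A$ is an axiom instance, or $A \in \Gamma$, then $\hprove[\Gamma]{A}$. Together with the instance $A \IMP (B \IMP A)$ of $\iplax{1}$ and $\mp$, this gives $\hprove[\Gamma]{B \IMP A}$.
\item If $A = B$, we derive $\hprove{B \IMP B}$ in the usual way from $\iplax{1}$ and $\iplax{2}$: instantiate $\iplax{2}$ as $(B \IMP ((B \IMP B) \IMP B)) \IMP (B \IMP (B \IMP B)) \IMP (B \IMP B)$ and discharge its premises with two instances of $\iplax{1}$ by $\mp$.
\item If $A$ was obtained by $\mp$ from $\hprove[\Gamma, B]{C}$ and $\hprove[\Gamma, B]{C \IMP A}$, the induction hypothesis gives $\hprove[\Gamma]{B \IMP C}$ and $\hprove[\Gamma]{B \IMP (C \IMP A)}$. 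The instance $(B \IMP (C \IMP A)) \IMP (B \IMP C) \IMP (B \IMP A)$ of $\iplax{2}$ and two applications of $\mp$ then give $\hprove[\Gamma]{B \IMP A}$.
\end{enumerate*}

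There is no real obstacle here. The only point to check is that the justification axioms $\jk$ through $\jfax$ are schemata usable under arbitrary contexts with no side conditions, so they fall under case (i). In particular, $\jimpi$ is an axiom containing an implication between justification formulas, not a rule that binds the variable $x$, so it introduces no eigenvariable-style restriction that could break the induction.
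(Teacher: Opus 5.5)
Your proof is correct and follows the approach the paper has in mind. The paper gives no explicit proof: it calls the result standard and notes that $\J$ has no necessitation rule, so the usual complications do not arise. Your induction, which uses only $\iplax{1}$ and $\iplax{2}$ and treats every justification axiom as a context-free axiom instance, is exactly that standard Hilbert-style argument.
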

As $\J$ doesn't require some form of necessitation rule that is usually required for modal logics, it avoids the usual considerations surrounding the Deduction Theorem~\cite{hakli_does_2012}.
However like standard justification logics, $\J$ is able to internalise its own reasoning similar to a necessitation rule.
\begin{theorem}[Self-internalisation]\label{thm:internalisation}
    If $\hprove{A}$, then there exists a \emph{closed} proof term $t$ such that $\hprove{\just{t}{A}}$. 
\end{theorem}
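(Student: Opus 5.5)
The proof is by induction on the Hilbert-style derivation of $\hprove{A}$. Since the context is empty, only two cases arise: $A$ is an axiom instance, or $A$ comes from $\mp$. The $\mp$ case is immediate. By the induction hypothesis we have closed $s, t$ with $\hprove{\just{s}{(B \IMP A)}}$ and $\hprove{\just{t}{B}}$. The axiom $\jk$ and two applications of $\mp$ then give $\hprove{\just{\pdot{s}{t}}{A}}$, and $\pdot{s}{t}$ is closed. So all the work lies in the axiom case: for each axiom instance we must exhibit a closed term $c$ with $\hprove{\just{c}{A}}$.

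The main tool is $\jimpi$, read as a lifting principle. To get $\just{\pl{u}}{(C \IMP D)}$ it suffices to prove $\just{x}{C} \IMP \just{u}{D}$, where $x$ is a fresh proof variable. For $D$ itself an implication we iterate, using the Deduction Theorem (\cref{thm:deduction}) and the fact that $\jimpi$ is an implication to pass hypotheses through. For the $\IPL$ axioms I will use their standard $\lambda$-terms.
\begin{itemize}
\item $\iplax{1}$, with term $\pl[x]{\pl[y]{x}}$. The $\jimpi$ instance $(\just{y}{B} \IMP \just{x}{A}) \IMP \just{\pl[y]{x}}{(B \IMP A)}$ together with the $\IPL$ tautology $\just{x}{A} \IMP (\just{y}{B} \IMP \just{x}{A})$ yields $\just{x}{A} \IMP \just{\pl[y]{x}}{(B\IMP A)}$. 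One more application of $\jimpi$ gives the result.
\item $\iplax{3}$, with term $\pl[x]{\pl[y]{\pai{x}{y}}}$. This is handled the same way, using $\jandi$.
\item $\iplax{4}$ and $\iplax{5}$, with terms $\pl[x]{\pael{x}}$ and $\pl[x]{\paer{x}}$. These use $\jandel$ and $\jander$.
\item $\iplax{2}$, with term $\pl[x]{\pl[y]{\pl[z]{\pdot{(\pdot{x}{z})}{(\pdot{y}{z})}}}}$. This needs three nested uses of $\jimpi$ and several uses of $\jk$.
\end{itemize}
In general this is the observation that a closed $\lambda$-term $t$ with $\iplctx{}{\ctxfm{t}{A}}$ can be lifted to $\hprove{\just{t}{A}}$. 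I would isolate this as a lemma proved by induction on the $\NDIPL$ derivation, with hypotheses $\just{x_i}{A_i}$ in the context.

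Next come the genuinely modal axioms: $\jk$, $\jimpi$, $\jandi$, $\jandel$, $\jander$, $\jtax$ and $\jfax$. Here I would use $\jtax$ and $\jfax$ to manufacture the inner justifications. For example, take $\jtax$, i.e.\ $\just{t}{C} \IMP C$. Under the hypothesis $\just{x}{\just{t}{C}}$, the axiom $\jtax$ gives $\just{t}{C}$. So $\just{x}{\just{t}{C}} \IMP \just{t}{C}$, and $\jimpi$ yields $\just{\pl[x]{t}}{(\just{t}{C}\IMP C)}$. Similarly for $\jk$ the plan is:
\begin{enumerate}
\item assume $\just{x}{\just{s}{(C\IMP D)}}$ and $\just{y}{\just{t}{C}}$;
\item strip the outer justifications with $\jtax$ to get $\just{s}{(C\IMP D)}$ and $\just{t}{C}$;
\item apply $\jk$ to get $\just{\pdot{s}{t}}{D}$;
\item re-justify with $\jfax$ to get $\just{\pbang{(\pdot{s}{t})}}{\just{\pdot{s}{t}}{D}}$;
\item close off with $\jimpi$ twice, giving the term $\pl[x]{\pl[y]{\pbang{(\pdot{s}{t})}}}$.
\end{enumerate}

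The main obstacle is exactly this last family of cases. The resulting terms mention the terms $s,t$ occurring in the axiom instance, so they are closed only if those terms are closed, whereas the axiom schemes allow arbitrary, possibly open, $s,t$. I would first try to avoid this by building the inner witness only from the bound variables $x,y$, via some higher-level combination of $x$ and $y$. If that is not expressible without constants, I would instead restrict the claim, or strengthen the induction, to derivations whose axiom instances contain only closed terms, or to a lifting lemma in which the free variables of the result are tracked explicitly. Settling which of these options matches the intended reading of ``closed'' is the step I expect to require the most care.
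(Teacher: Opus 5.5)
Your decomposition is the same as the paper's. You do induction on the derivation, handle $\mp$ via $\jk$, lift the $\IPL$ axioms via their standard $\lambda$-terms with $\jimpi$, and treat the modal axioms by assuming $\just{x_i}{A_i}$, stripping with $\jtax$, applying the axiom, re-justifying with $\jfax$ and closing with $\jimpi$.

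The obstacle you flag is genuine, and the paper's proof does not resolve it either. It produces exactly $\pl[x_1]{\cdots\pl[x_{n-1}]{\pbang{t_n}}}$ (and $\pl{t}$ for $\jtax$) and asserts these are closed. That fails, for example, for the $\jtax$ instance $\just{y}{p} \IMP p$, where one gets $\pl{y}$.

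Still, as it stands your proposal does not prove the theorem. The modal-axiom case with open terms is left undecided. Your fallbacks (restricting to closed-term instances, or tracking free variables) would establish a weaker statement than the one claimed.

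The missing idea is \cref{prop:typeformula}: $\hprove{C \IMP \just{x}{C}}$ for every formula $C$ and every proof variable $x$. It uses only the axioms. $\mp$ of the $\jfax$ instance $\just{x}{C} \IMP \just{\pbang{x}}{\just{x}{C}}$ into $\jimpi$ gives $\hprove{\just{\pl{\pbang{x}}}{(C \IMP \just{x}{C})}}$, and $\jtax$ strips the outer justification. So invoking it here is not circular.

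It makes your ``first option'' work: re-justify with a bound variable instead of with $\pbang{t_n}$. In fact it handles every axiom uniformly, since each axiom instance is an implication $C \IMP D$:
\begin{itemize}
\item chaining $\hprove{\just{x}{C} \IMP C}$ (by $\jtax$), $\hprove{C \IMP D}$ and $\hprove{D \IMP \just{x}{D}}$ gives $\hprove{\just{x}{C} \IMP \just{x}{D}}$;
\item the $\jimpi$ instance with $t \colonequals x$ then yields $\hprove{\just{\pl{x}}{(C \IMP D)}}$.
\end{itemize}
Here $\pl{x}$ is closed whatever terms occur in $C$ and $D$. Together with your $\mp$ case, this proves the statement as written.

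If you want the informative witnesses described in \cref{rem:iplthms}, keep your $\IPL$-axiom terms, which are closed anyway. Use this trick only for the modal axioms; for $\jk$, for instance, it gives $\pl[x]{\pl[y]{y}}$ in place of $\pl[x]{\pl[y]{\pbang{(\pdot{s}{t})}}}$.
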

\begin{proof}
    We proceed by induction on $\hprove A$.
    For the base cases we consider when $A$ is an axiom instance.
    For the full proof, see \cref{sec:appjl} of the Appendix.
    We highlight the following base cases:
    
    Case: $A$ is an instance of the $\jk ,\jimpi, \jandi , \jandel , \jander ,\jfax$, we have
    $$
        A \colonequals {A_1} \IMP {A_2} \IMP \dots \IMP A_{n-1} \IMP \just{t_n}{A_n} 
    $$
    for some proof term $t_n$ and formulas $A_1, \dots, A_n$.
    For each $i = 1, \dots, n$ we have
    $$
        \hprove[\just{x_1}{A_1}, \just{x_2}{{A_2}}, \dots \just{x_{n-1}}{A_{n-1}}]
        {\just{x_i}{{A_i}}}
    $$
    Using the $\jtax$ axiom, we get
    $
        \hprove[\just{x_1}{A_1}, \just{x_2}{{A_2}}, \dots \just{x_{n-1}}{A_{n-1}}]
        {{A_i}}
    $.
    Using the axiom $A$ and $\mp$ repeatedly, we get
    $$
        \hprove[\just{x_1}{A_1}, \just{x_2}{{A_2}}, \dots \just{x_{n-1}}{A_{n-1}}]
        {
            \just{t_n}{A_n}
        }
    $$
    Using the $\jfax$ axiom, we have
    $$
        \hprove[\just{x_1}{A_1}, \just{x_2}{{A_2}}, \dots \just{x_{n-1}}{A_{n-1}}]
        {
            \just{\pbang{t_n}}{\just{t_n}{A_n}}
        }
    $$
    Applying the Deduction Theorem (\cref{thm:deduction}) and $\jimpi$ several times, we get
    $$
        \hprove
        {
            \just{\pl[x_1]{\pl[x_2]{ \dots \pl[x_{n-1}}}]{\pbang{t_n}}}{({A_1} \IMP {A_2} \IMP \dots \IMP A_{n-1} \IMP \just{t_n}{A_n} )}
        }
    $$
    as desired.
\end{proof}

\begin{remark}\label{rem:iplthms}
    In the proof of \cref{thm:internalisation}, when $A$ is a theorem of $\IPL$, the proof term obtained is precisely a proof term where $\iplctx{}{\fm{t}{B}}$ where $B$ is a formula of the same shape as $A$ in the language of $\langipl$.
\end{remark}

Self-internalisation can be extended to lifting assumptions with proof terms as well.
\begin{theorem}[Lifting Lemma]\label{thm:lifting}
    If $\hprove[B_1, \dots, B_n]{A}$, then for proof variables $\pv[1], \dots, \pv[n]$, there exists a proof term $t(\pv[1], \dots, \pv[n])$ such that
    $$
        \hprove[\just{\pv[1]}{B_1}, \dots, \just{\pv[n]}{B_n}]{\just{t(\pv[1], \dots, \pv[n])}{A}}.
    $$
\end{theorem}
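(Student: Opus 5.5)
The plan is to induct on the derivation of $\hprove[B_1, \dots, B_n]{A}$, writing $\Delta$ for $\just{\pv[1]}{B_1}, \dots, \just{\pv[n]}{B_n}$. For each formula $C$ in the derivation, we build a proof term $t_C(\pv[1], \dots, \pv[n])$ with $\hprove[\Delta]{\just{t_C}{C}}$. Only the variables $\pv[1], \dots, \pv[n]$ may occur free in $t_C$. We assume the $\pv[i]$ are chosen distinct from any variables bound later, renaming bound variables if necessary.

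There are three cases.

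\emph{Axiom instance.} If $C$ is an axiom instance, then $\hprove{C}$. Self-internalisation (\cref{thm:internalisation}) gives a closed term $t$ with $\hprove{\just{t}{C}}$, and by monotonicity of consecutions in the assumption set $\hprove[\Delta]{\just{t}{C}}$.

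\emph{Hypothesis.} If $C = B_i$, we take $t_C \colonequals \pv[i]$, since $\just{\pv[i]}{B_i} \in \Delta$.

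\emph{Modus ponens.} If $C$ is obtained by $\mp$ from $D$ and $D \IMP C$, the induction hypothesis gives $\hprove[\Delta]{\just{s}{(D \IMP C)}}$ and $\hprove[\Delta]{\just{u}{D}}$. The axiom $\jk$ and two applications of $\mp$ then give $\hprove[\Delta]{\just{\pdot{s}{u}}{C}}$. We set $t_C \colonequals \pdot{s}{u}$, which is again built only from $\pv[1], \dots, \pv[n]$.

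The only delicate point is the axiom case. It needs the full strength of \cref{thm:internalisation} for every axiom schema, including the $\IPL$ axioms and $\jtax$, and this is exactly the base-case analysis carried out there (with the remaining cases in the appendix). The $\IPL$ axioms and $\jtax$ are handled there differently from the case shown, by abstracting over hypotheses via $\jimpi$. Since we invoke that theorem as given, the induction itself is routine.

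The hypothesis $A \in \Gamma$ when $\Gamma$ is empty never arises, so the closed-term statement of \cref{thm:internalisation} is the special case $n = 0$ of the present lemma.
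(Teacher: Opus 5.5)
Your proposal is correct and is essentially the paper's proof. Both induct on the derivation of $\hprove[B_1, \dots, B_n]{A}$, using \cref{thm:internalisation} (plus monotonicity) for axiom instances, $\pv[i]$ for a hypothesis $B_i$, and $\jk$ with two applications of $\mp$ for modus ponens.

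Like the paper, you rely on the closedness clause of \cref{thm:internalisation} to keep the free variables among $\pv[1], \dots, \pv[n]$. The construction given there (e.g.\ $\pl[x_1]{\pl[x_2]{\pbang{(\pdot{s}{t})}}}$ for a $\jk$ instance) is closed only when the free variables of $s, t$ are among the abstracted ones. This is harmless: by \cref{prop:typeformula}, $\jimpi$ and $\jk$, every theorem $C$ satisfies $\hprove{\just{\pdot{(\pl[z]{z})}{(\pl[y]{y})}}{C}}$.
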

\begin{proof}
    Similar to the proof of Self-internalisation (\cref{thm:internalisation}), this proceeds by induction on $\hprove[B_1, \dots, B_n]{A}$.

    For the base case, if $A$ is an axiom of $\J$, then construct the proof term using \cref{thm:internalisation}.
    Otherwise, $A = B_i$ for some $i = 1, \dots, n$ and set $t(x_1, \dots, x_n) \colonequals x_i$.

    The inductive case of a conclusion of $\mp$, follows the same argument as in the proof of \cref{thm:internalisation}.
\end{proof}

We should note in the Self-internalisation Theorem (\cref{thm:internalisation}) and Lifting Lemma (\cref{thm:lifting}) that when the formulas are restricted to the language of~$\IPL$ ($\langpl$), then the proof term~$t$ obtained is precisely a proof term which is obtained in the natural deduction calculus for~$\IPL$ ($\NDIPL$).
This is formulated precisely in the following.

\begin{proposition}\label{prop:iplintoj}
    Let $A_1, \dots, A_n, A \in \langpl$
    and $\pv[1], \dots, \pv[n] \in \prfvar$.
    Then for some proof term~$t(\pv[1], \dots, \pv[n])$,
    $$
        \iplctx{\ctxfm{\pv[1]}{A_1}, \dots, \ctxfm{\pv[n]}{A_n}}{\ctxfm{t}{A}}
        \implies
        \hprove[\just{\pv[1]}{A_1}, \dots, \just{\pv[n]}{A_n}]{\just{t}{A}}
    $$
\end{proposition}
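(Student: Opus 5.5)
We prove the statement by induction on the derivation of $\iplctx{\typset{\Gamma}}{\ctxfm{t}{A}}$ in $\NDIPL$, where $\typset{\Gamma}$ is any context $\ctxfm{\pv[1]}{A_1}, \dots, \ctxfm{\pv[n]}{A_n}$. We write $\just{\typset{\Gamma}}{}$ for the set $\just{\pv[1]}{A_1}, \dots, \just{\pv[n]}{A_n}$. The claim is that $\hprove[\just{\typset{\Gamma}}{}]{\just{t}{A}}$ holds with the \emph{same} term $t$ produced by the natural deduction derivation. The statement is therefore read with $t$ being the term of the judgement. The induction must range over arbitrary contexts, because the $\intrle{\IMP}$ rule enlarges the context.

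The base case $\id$ is immediate: $t = \pv[i]$ and $\just{\pv[i]}{A_i}$ is a hypothesis. The elimination and pairing cases each go by a single axiom plus $\mp$.
\begin{itemize}
\item For $\elimrle{\IMP}$, the induction hypotheses give $\just{s}{(A \IMP B)}$ and $\just{t}{A}$ from $\just{\typset{\Gamma}}{}$. Two applications of $\mp$ with the axiom $\jk$ give $\just{\pdot{s}{t}}{B}$.
\item For $\intrle{\AND}$, the same pattern works with $\jandi$.
\item For $\elimrleleft{\AND}$ and $\elimrleright{\AND}$, one application of $\mp$ with $\jandel$ or $\jander$ suffices.
\end{itemize}

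The main step is the case $\intrle{\IMP}$. The premise is $\iplctx{\typset{\Gamma}, \ctxfm{\pv}{A}}{\ctxfm{t}{B}}$ and the conclusion has term $\pl{t}$. The induction hypothesis gives $\hprove[\just{\typset{\Gamma}}{}, \just{\pv}{A}]{\just{t}{B}}$. By the Deduction Theorem (\cref{thm:deduction}) we get $\hprove[\just{\typset{\Gamma}}{}]{\just{\pv}{A} \IMP \just{t}{B}}$. Applying $\mp$ with $\jimpi$ then yields $\just{\pl{t}}{(A \IMP B)}$.

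The obstacle is variable bookkeeping. The axiom $\jimpi$ is stated with a variable $x$, and we read it schematically for any proof variable, in particular the $\pv$ bound by the rule. We also need $\pv$ to be distinct from $\pv[1], \dots, \pv[n]$. This is the usual freshness convention for $\NDIPL$ contexts. If a clash with some $\pv[i]$ arises, we first $\alpha$-rename the bound variable in the derivation. Such a renaming leaves $\pl{t}$ unchanged up to $\alpha$-equivalence, which we identify as usual.

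No other property of $\J$ is needed. In particular we never use $\jtax$, $\jfax$ or Self-internalisation, which fits \cref{rem:iplthms}.
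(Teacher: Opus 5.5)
Your proof is correct, but it takes a different route from the paper. The paper gives no separate argument. It presents the proposition as a reading of the Self-internalisation Theorem and the Lifting Lemma (\cref{thm:internalisation,thm:lifting}): pass from the $\NDIPL$ judgement to a Hilbert derivation $\hiplprove[A_1, \dots, A_n]{A}$ via Curry--Howard, lift it, and observe (cf.\ \cref{rem:iplthms}) that the resulting term lies in $\iplterms$ and is $\NDIPL$-typable. That term is built from hypotheses, $\cdot$, and the closed terms internalising $\iplax{1}$--$\iplax{5}$, e.g.\ $\pl{\pl[y]{\pl[z]{\pdot{(\pdot{x}{z})}{(\pdot{y}{z})}}}}$. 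So in general it is \emph{not} the term $t$ of the original judgement, which is presumably why the statement says ``for some proof term''.

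You instead induct directly on the $\NDIPL$ derivation and match each rule with its own justification axiom: $\jk$, $\jandi$, $\jandel$, $\jander$, and the Deduction Theorem plus $\jimpi$ for $\intrle{\IMP}$. This gives a strictly stronger, term-preserving embedding: every $\NDIPL$ proof term $t$ of $A$ yields $\just{t}{A}$ in $\J$. It also avoids the detour through the Hilbert system and does not rely on the paper's informal claim about the shape of lifted terms. It is essentially the $\IPL$-fragment analogue of the paper's later proof of \cref{thm:tmtoptm}. The paper's route, in exchange, reuses machinery that works uniformly for all of $\J$, rather than a rule-by-rule correspondence specific to $\NDIPL$.

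One small correction: the freshness requirement and the $\alpha$-renaming fallback in your $\intrle{\IMP}$ case are unnecessary. $\jimpi$ is a schema with no side condition on $x$, and the Deduction Theorem holds for arbitrary hypothesis sets. So the step goes through verbatim even if $x$ coincides with some $x_i$. Dropping the renaming is also preferable, since renaming changes $t$ syntactically, which sits awkwardly with your reading that exactly the same $t$ is preserved.
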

More importantly, we have shown here an embedding of proofs of $\IPL$ into~$\J$.

\begin{remark}
    $\J$ also allows for $\alpha$-renaming like the $\lambda$-calculus.
    Additionally, the substitution property of propositional atoms also holds, that is if $\hprove{A}$, then $\hprove{A[P = B]}$ for any $B \in \langjust$ for occurrences of $P \in \Prop$ in $A$.
    This is in contrast to~$\LP$ and standard justification logics in both classical and intuitionistic settings due to justification logic normally being \emph{hyperintensional} (see~\cite{artemov_justification_2024}).
\end{remark}

Unlike the hypothetical logic of proofs~\cite{steren_intuitionistic_2014} and the history based style computations of justification logic~\cite{bavera_justification_2010,bavera_justification_2018,barenbaum_rewrites_2020}, we have the following as a theorem of~$\J$.
\begin{proposition}\label{prop:typeformula}
    $\hprove{A \IMP \just{x}{A}}$ for $x \in \prfvar$.
\end{proposition}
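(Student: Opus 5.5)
The plan is to obtain $A \IMP \just{x}{A}$ as the $\jtax$-image of a justified formula $\just{s}{(A \IMP \just{x}{A})}$. I will produce that justified formula with $\jimpi$, taking its conclusion formula to be $\just{x}{A}$ itself and using $\jfax$ to supply its antecedent. Note that in the schema $\jimpi$ the variable $x$ bound by the $\lambda$ is the same $x$ that labels the hypothesis, so the choice of $x$ in the statement is unconstrained.

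\textbf{Step 1.} Take the instance of $\jfax$ with term $x$ and formula $A$:
$$\hprove{\just{x}{A} \IMP \just{\pbang{x}}{\just{x}{A}}}.$$

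\textbf{Step 2.} Take the instance of $\jimpi$ with $B \colonequals \just{x}{A}$ and $t \colonequals \pbang{x}$:
$$\hprove{(\just{x}{A} \IMP \just{\pbang{x}}{\just{x}{A}}) \IMP \just{\pl{\pbang{x}}}{(A \IMP \just{x}{A})}}.$$
Applying $\mp$ to Steps 1 and 2 gives $\hprove{\just{\pl{\pbang{x}}}{(A \IMP \just{x}{A})}}$.

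\textbf{Step 3.} Take the instance of $\jtax$ with term $\pl{\pbang{x}}$ and formula $A \IMP \just{x}{A}$:
$$\hprove{\just{\pl{\pbang{x}}}{(A \IMP \just{x}{A})} \IMP (A \IMP \just{x}{A})}.$$
One more application of $\mp$ yields $\hprove{A \IMP \just{x}{A}}$.

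The only real difficulty is spotting which instance of $\jimpi$ to use. The naive attempt takes $B \colonequals A$ and $t \colonequals x$, which only yields $\just{\pl{x}}{(A \IMP A)}$ and is useless here. The key choice is $B \colonequals \just{x}{A}$. With it, the antecedent $\just{x}{A} \IMP \just{t}{B}$ of $\jimpi$ becomes exactly the shape of $\jfax$, which is available outright as an axiom, and $\jtax$ then strips the outer justification. Nothing else is needed: neither the Deduction Theorem nor the Lifting Lemma is used.
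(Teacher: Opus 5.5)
Your proof is correct and is exactly the paper's argument: the $\jfax$ instance for $x$, the $\jimpi$ instance with $B \colonequals \just{x}{A}$ and $t \colonequals \pbang{x}$ giving $\just{\pl{\pbang{x}}}{(A \IMP \just{x}{A})}$, and then $\jtax$ to discharge the outer justification. The paper labels that final axiom instance as $\jfax$, which is a slip; $\jtax$, as you use it, is the correct axiom.
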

\begin{proof}
    We begin with the axioms 
    $\hprove{\just{x}{A} \IMP \just{\pbang{x}}{\just{x}{A}}}$ 
    and 
    $\hprove{(\just{x}{A} \IMP \just{\pbang{x}}{\just{x}{A}}) \IMP \just{\pl{\pbang{x}}}{(A \IMP \just{x}{A})}}$.
    Applying $\mp$ we get
    $\hprove{\just{\pl{\pbang{x}}}{(A \IMP \just{x}{A})}}$.
    Apply $\mp$ with the following instance of the $\jfax$ axiom 
    $\hprove{\just{\pl{\pbang{x}}}{(A \IMP \just{x}{A})} \IMP (A \IMP \just{x}{A})}$ 
    to get
    $\hprove{A \IMP \just{x}{A}}$.
\end{proof}
This theorem can be simply interpreted that any well-formed formula of the language can be typed by a variable -- a property unavailable to systems that track history~\cite{bavera_justification_2010,steren_intuitionistic_2014,bavera_justification_2018,barenbaum_rewrites_2020}, since the identity of the variable there is tied to how the proof was built.

\section{Natural Deduction for $\J$}\label{sec:ND}
In this section we introduce the untyped~$\NDJ$ and typed~$\laJ$ natural deduction calculi.
We explore the local reductions and explore how the term calculus relates to the proof terms~$\prfterms$ of the logic. 

\subsection{Untyped Natural Deduction Calculus $\NDJ$}
An untyped judgement is defined as follows.
\begin{definition}
	An~\emph{untyped judgement} is written $\utctx{\Gamma}{A}$, where $\Gamma$ is an (unordered) multiset of formulas in $\langjust$ and $A \in \langjust$ is defined inductively by
	the natural deduction calculus~$\NDJ$ given by the rules in \cref{fig:jlnd} with~$\id$ as the base case, and a \emph{proof} in $\NDJ$ is constructed as trees from these rules with root $\utctx{\Gamma}{A}$ and leaves closed by $\id$.
\end{definition}

\begin{figure}[t]
    \fbox
    {
        \begin{minipage}{.95\textwidth}
            \centering
            $
                \vlinf{\id}{}
                {
                    \utctx{\Gamma, A}{A}
                }
                {}
            \qquad
                \vlinf{\intrle{\IMP}}{}
                {
                    \utctx{\Gamma}{A \IMP B}
                }
                {
                    \utctx{\Gamma, A}{B}
                }
                \qquad
                \vliinf{\elimrle{\IMP}}{}
                {
                    \utctx{\Gamma}{B}
                }
                {
                    \utctx{\Gamma}{A \IMP B}
                }
                {
                    \utctx{\Gamma}{A}
                }
            $
            \\[1ex]
            $
                \vliinf{\intrle{\AND}}{}
                {
                    \utctx{\Gamma}{A \AND B}
                }
                {
                    \utctx{\Gamma}{A}
                }
                {
                    \utctx{\Gamma}{B}
                }
                \qquad
                \vlinf{\elimrleleft{\AND}}{}
                {
                    \utctx{\Gamma}{A}
                }
                {
                    \utctx{\Gamma}{A \AND B}
                }
                \qquad
                \vlinf{\elimrleright{\AND}}{}
                {
                    \utctx{\Gamma}{B}
                }
                {
                    \utctx{\Gamma}{A \AND B}
                }
            $
            \\[1ex]
            $
                \vlinf{\intrle{\boxrule[\ ]}}{}
                {
                    \utctx{\Gamma}{\just{x}{A}}
                }
                {
                    \utctx{\Gamma}{A}
                }
                \qquad
                \vlinf{\intrle{\boxrule[\pbang{} ]}}{}
                {
                    \utctx{\Gamma}{\just{\pbang{t}}{\just{t}{A}}}
                }
                {
                    \utctx{\Gamma}{\just{t}{A}}
                }
                \qquad
                \vlinf{\elimrle{\boxrule[\ ]}}{}
                {
                    \utctx{\Gamma}{A}
                }
                {
                    \utctx{\Gamma}{\just{t}{A}}
                }
            $
            \\[1ex]
            $
                \vliinf{\intrle{\boxrule[\pdot{}{}]}}{}
                {
                    \utctx{\Gamma}{\just{\pdot{s}{t}}{B}}
                }
                {
                    \utctx{\Gamma}{\just{s}{(A \IMP B)}}
                }
                {
                    \utctx{\Gamma}{\just{t}{A}}
                }
                \qquad
                \vlinf{\intrle{\boxrule[\pl{}]}}{}
                {
                    \utctx{\Gamma}{\just{\pl{t}}{(A \IMP B)}}
                }
                {
                    \utctx{\Gamma, \just{x}{A}}{\just{t}{B }}
                }
            $
            \\[1ex]
            $
                \vlinf{\intrle{\boxrule[\pael{}]}}{}
                {
                    \utctx{\Gamma}{\just{\pael{t}}{A}}
                }
                {
                    \utctx{\Gamma}{\just{t}{(A \AND B)}}
                }
                \quad
                \vlinf{\intrle{\boxrule[\paer{}]}}{}
                {
                    \utctx{\Gamma}{\just{\paer{t}}{B}}
                }
                {
                    \utctx{\Gamma}{\just{t}{(A \AND B)}}
                }
            $
            \\[1ex]
            $
                \vliinf{\intrle{\boxrule[\pai{}{}]}}{}
                {
                    \utctx{\Gamma}{\just{\pai{s}{t}}{(A \AND B)}}
                }
                {
                    \utctx{\Gamma}{\just{s}{A}}
                }
                {
                    \utctx{\Gamma}{\just{t}{B}}
                }
            $
        \end{minipage}
    }
    \caption{Natural deduction calculus $\NDJ$}
    \label{fig:jlnd}
\end{figure}

\begin{theorem}[Equivalence $\J$ and $\NDJ$]\label{thm:ndsoundcomp}
	$\hprove[\Gamma]{A} \iff \utctx{\Gamma}{A}$
\end{theorem}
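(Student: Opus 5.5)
We prove the two directions separately, each by induction on derivations. Throughout, we identify the set $\Gamma$ of a consecution with the multiset of an untyped judgement. This is harmless for two reasons. Duplicates in a multiset make no difference to $\hprove[\Gamma]{A}$. In the other direction, a set $\Gamma$ read as a multiset is a legitimate context of $\NDJ$.

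For the direction $\hprove[\Gamma]{A} \implies \utctx{\Gamma}{A}$ we argue by induction on the Hilbert-style derivation. First we establish that weakening is admissible in $\NDJ$: if $\utctx{\Gamma}{A}$ then $\utctx{\Gamma, B}{A}$. This follows by a routine induction on the $\NDJ$ proof, since every rule carries its context uniformly. The case $A \in \Gamma$ is an instance of $\id$. The case $\mp$ is exactly $\elimrle{\IMP}$ applied to the two induction hypotheses.

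It then remains to derive every axiom instance in the empty context, after which weakening transports it to $\Gamma$.
\begin{itemize}
\item The axioms $\iplax{1}$–$\iplax{5}$ are derived by the standard natural deduction proofs using the $\IMP$ and $\AND$ rules.
\item For $\jk$: introduce the assumptions $\just{s}{(A\IMP B)}$ and $\just{t}{A}$, apply $\intrle{\boxrule[\pdot{}{}]}$, and discharge twice by $\intrle{\IMP}$.
\item For $\jandi$, $\jandel$ and $\jander$: do the same, using $\intrle{\boxrule[\pai{}{}]}$, $\intrle{\boxrule[\pael{}]}$ and $\intrle{\boxrule[\paer{}]}$ respectively.
\item For $\jtax$: use $\elimrle{\boxrule[\ ]}$.
\item For $\jfax$: use $\intrle{\boxrule[\pbang{}]}$.
\item For $\jimpi$: assume $C \colonequals \just{x}{A} \IMP \just{t}{B}$. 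In the context $C, \just{x}{A}$ we obtain $\just{t}{B}$ by $\elimrle{\IMP}$ from two $\id$ leaves. Rule $\intrle{\boxrule[\pl{}]}$ then yields $\utctx{C}{\just{\pl{t}}{(A \IMP B)}}$, and $\intrle{\IMP}$ discharges $C$.
\end{itemize}

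For the direction $\utctx{\Gamma}{A} \implies \hprove[\Gamma]{A}$ we argue by induction on the $\NDJ$ proof.
\begin{itemize}
\item The case $\id$ is an assumption.
\item The case $\elimrle{\IMP}$ is $\mp$.
\item The $\AND$ rules are handled via $\iplax{3}$–$\iplax{5}$ and $\mp$.
\item The case $\intrle{\IMP}$ is the Deduction Theorem (\cref{thm:deduction}) applied to the induction hypothesis $\hprove[\Gamma, A]{B}$.
\item Each modal rule $\intrle{\boxrule[\pdot{}{}]}$, $\intrle{\boxrule[\pai{}{}]}$, $\intrle{\boxrule[\pael{}]}$, $\intrle{\boxrule[\paer{}]}$, $\intrle{\boxrule[\pbang{}]}$ and $\elimrle{\boxrule[\ ]}$ mirrors one of the axioms $\jk$, $\jandi$, $\jandel$, $\jander$, $\jfax$ and $\jtax$. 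Combining that axiom with the induction hypotheses by $\mp$ gives the conclusion.
\item For $\intrle{\boxrule[\ ]}$ we use \cref{prop:typeformula}, $\hprove{A \IMP \just{x}{A}}$, together with $\mp$. This relies on \cref{prop:typeformula} holding for an arbitrary variable $x$, which it does.
\item For $\intrle{\boxrule[\pl{}]}$, the induction hypothesis gives $\hprove[\Gamma, \just{x}{A}]{\just{t}{B}}$. The Deduction Theorem yields $\hprove[\Gamma]{\just{x}{A} \IMP \just{t}{B}}$, and $\mp$ with $\jimpi$ then gives $\hprove[\Gamma]{\just{\pl{t}}{(A \IMP B)}}$.
\end{itemize}

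The only points needing care are the admissibility of weakening in $\NDJ$ and the bookkeeping of the variable $x$. Neither $\intrle{\boxrule[\pl{}]}$ nor $\jimpi$ imposes a freshness condition on $x$, so the two match exactly and no renaming is needed. I expect the $\intrle{\boxrule[\ ]}$ and $\intrle{\boxrule[\pl{}]}$ cases to be the only non-mechanical steps, and both are discharged by earlier results. Hence the theorem reduces to checking the axioms and rules case by case.
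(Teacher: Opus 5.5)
Your proposal is correct and follows the paper's proof essentially step for step. You use induction on derivations in both directions, derive the axioms directly in $\NDJ$ (including the same $\jimpi$ derivation via $\intrle{\boxrule[\pl{}]}$), simulate $\mp$ by $\elimrle{\IMP}$, and get the converse by local soundness of each rule, using the Deduction Theorem and \cref{prop:typeformula} for $\intrle{\boxrule[\ ]}$. Your explicit weakening lemma only makes precise a step the paper leaves implicit, since $\id$ already allows an arbitrary side context.
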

\begin{proof}
	The~$\Leftarrow$ direction proceeds by induction on~$\utctx{\Gamma}{A}$ and showing each rule
	It can be seen each rule is locally sound by usual axiomatic reasoning and utilising \cref{prop:typeformula} for $\intrle{\boxrule[\ ]}$.
	
	The~$\Rightarrow$ direction follows by induction on $\hprove[\Gamma]{A}$.
	In the case $A \in \Gamma$, we have
	$$
	\vlinf{\id}{}
	{
		\utctx{\Gamma}{A}
	}
	{}.
	$$
	In the case~$A$ is an axiom instance, we give a proof of them in the natural deduction calculus.
	We prove a justification axiom here:
	$$
	\vlderivation
	{
		\vlin{\intrle{\IMP}}{}
		{
			\utctx{}{\just{s}{(A \IMP B)} \IMP \just{t}{A} \IMP \just{\pdot{s}{t}}{B}}
		}
		{
			\vlin{\intrle{\IMP}}{}
			{
				\utctx{\just{s}{(A \IMP B)}}{\just{t}{A} \IMP \just{\pdot{s}{t}}{B}}
			}
			{
				\vliin{\intrle{\boxrule[\pdot{}{}]}}{}
				{
					\utctx{\just{s}{(A \IMP B)}, \just{t}{A}}{ \just{\pdot{s}{t}}{B}}
				}
				{
					\vlin{\id}{}
					{
						\utctx{\just{s}{(A \IMP B)}, \just{t}{A}}{ \just{s}{(A \IMP B)}}
					}
					{
						\vlhy{}
					}
				}
				{
					\vlin{\id}{}
					{
						\utctx{\just{s}{(A \IMP B)}, \just{t}{A}}{ \just{t}{B}}
					}
					{
						\vlhy{}
					}
				}
			}
		}
	}
	$$

	\noindent
	For the other justification axioms, see \cref{ex:ndjustax} in the Appendix.
	And $\mp$ is simulated with~$\elimrle{\IMP}$ as expected.
\end{proof}

\subsection{Typed Natural Deduction Calculus $\laJ$}
The term calculus is given by the following.
\begin{definition}
	The term calculus, denoted as the set~$\terms$ is defined by the following grammar:
	\begin{multline*}
		M
		\BNFequals
		\var
		\mid
		\app{M}{M}
		\mid
		\la{M}
		\mid
		\projl{M}
		\mid
		\projr{M}
		\mid
		\andc{M}{M}
		\\
		\mid
		\prom{\pv}{M}
		\mid
		\bang{M}
		\mid
		\boxapp{M}{M}
		\mid
		\boxla{M}
		\mid
		\boxprojl{M}
		\mid
		\boxprojr{M}
		\mid
		\boxandc{M}{M}
		\mid
		\unbox{M}
	\end{multline*}
	where $\var$ ranges over a countable set of variables~$\Vars$.
\end{definition}
The term calculus can be seen as an extension of $\IPLterms$ by taking into account the additional justification operations.
We now define the following.
\begin{definition}
	A~\emph{typed judgement} is written $\ctx{\typset{\Gamma}}{\fm{M}{A}}$, where $\typset{\Gamma}$ is a set of typed formulas in $\langjust$, typed by \emph{distinct} variables and $A \in \langjust$ is defined inductively by
	the natural deduction calculus~$\laJ$ given by the rules in \cref{fig:typednd} with~$\id$ as the base case, and a \emph{proof} in $\laJ$ is constructed as trees from these rules with root $\ctx{\typset{\Gamma}}{\fm{M}{A}}$ and leaves closed by $\id$.
\end{definition}

\begin{figure}[t]
    \fbox
    {
        \begin{minipage}{.95\textwidth}
            \centering
            $
                \vlinf{\id}{}
                {
                    \ctx{\typset{\Gamma}, \fm{\var}{A}}{\fm{\var}{A}}
                }
                {}
            $
            \\[1ex]
            $
                \vlinf{\intrle{\IMP}}{}
                {
                    \ctx{\typset{\Gamma}}{\fm{\la{M}}{A \IMP B}}
                }
                {
                    \ctx{\typset{\Gamma}, \fm{\var}{A}}{\fm{M}{B}}
                }
                \qquad
                \vliinf{\elimrle{\IMP}}{}
                {
                    \ctx{\typset{\Gamma}}{\fm{\app M N}{B}}
                }
                {
                    \ctx{\typset{\Gamma}}{\fm{M}{A \IMP B}}
                }
                {
                    \ctx{\typset{\Gamma}}{\fm N A}
                }
            $
            \\[1ex]
            $
                \vliinf{\intrle{\AND}}{}
                {
                    \ctx{\typset{\Gamma}}{\fm{\andc M N}{A \AND B}}
                }
                {
                    \ctx{\typset{\Gamma}}{\fm{M}{A}}
                }
                {
                    \ctx{\typset{\Gamma}}{\fm{N}{B}}
                }
            $
            \\[1ex]
            $
                \vlinf{\elimrleleft{\AND}}{}
                {
                    \ctx{\typset{\Gamma}}{\fm{\projl M}{A}}
                }
                {
                    \ctx{\typset{\Gamma}}{\fm{M}{A \AND B}}
                }
                \qquad
                \vlinf{\elimrleright{\AND}}{}
                {
                    \ctx{\typset{\Gamma}}{\fm{\projr M}{B}}
                }
                {
                    \ctx{\typset{\Gamma}}{\fm{M}{A \AND B}}
                }
            $
            \\[1ex]
            $
                \vlinf{\intrle{\boxrule[\ ]}}{}
                {
                    \ctx{\typset{\Gamma}}{\fm{\prom{x}{M}}{\just{x}{A}}}
                }
                {
                    \ctx{\typset{\Gamma}}{\fm M A}
                }
                \qquad
                \vlinf{\intrle{\boxrule[\pbang{}]}}{}
                {
                    \ctx{\typset{\Gamma}}{\fm{\bang{M}}{\just{\pbang{t}}{\just{t}{A}}}}
                }
                {
                    \ctx{\typset{\Gamma}}{\fm{M}{\just{t}{A}}}
                }
                \qquad
                \vlinf{\elimrle{\boxrule[\ ]}}{}
                {
                    \ctx{\typset{\Gamma}}{\fm{\unbox{M}}{A}}
                }
                {
                    \ctx{\typset{\Gamma}}{\fm{M}{\just{t}{A}}}
                }
            $
            \\[1ex]
            $
                \vliinf{\intrle{\boxrule[\pdot{}{}]}}{}
                {
                    \ctx{\typset{\Gamma}}{\fm{\boxapp M N}{\just{\pdot{s}{t}}{B}}}
                }
                {
                    \ctx{\typset{\Gamma}}{\fm{M}{\just{s}{(A \IMP B)}}}
                }
                {
                    \ctx{\typset{\Gamma}}{\fm{N}{\just{t}{A}}}
                }
            $
            \\[1ex]
            $
                \vlinf{\intrle{\boxrule[\pl{}]}}{}
                {
                    \ctx{\typset{\Gamma}}{\fm{\boxla M}{\just{\pl{t}}{(A \IMP B)}}}
                }
                {
                    \ctx{\typset{\Gamma}, \fm{\var}{\just{x}{A}}}{\fm{M}{\just{t}{B}}}
                }
            $
            \\[1ex]
            $
                \vlinf{\intrle{\boxrule[\pael{}]}}{}
                {
                    \ctx{\typset{\Gamma}}{\fm{\boxprojl M}{\just{\pael{t}}{A}}}
                }
                {
                    \ctx{\typset{\Gamma}}{\fm{M}{\just{t}{(A \AND B)}}}
                }
                \quad
                \vlinf{\intrle{\boxrule[\paer{}]}}{}
                {
                    \ctx{\typset{\Gamma}}{\fm{\boxprojr M}{\just{\paer{t}}{B}}}
                }
                {
                    \ctx{\typset{\Gamma}}{\fm{M}{\just{t}{(A \AND B)}}}
                }
                $
                \\[1ex]
                $
                \vliinf{\intrle{\boxrule[\pai{}{}]}}{}
                {
                    \ctx{\typset{\Gamma}}{\fm{\boxandc M N}{\just{\pai{s}{t}}{(A \AND B)}}}
                }
                {
                    \ctx{\typset{\Gamma}}{\fm{M}{\just{s}{A}}}
                }
                {
                    \ctx{\typset{\Gamma}}{\fm{N}{\just{t}{B}}}
                }
            $
        \end{minipage}
    }
    \caption{Natural deduction calculus~$\laJ$}
    \label{fig:typednd}
\end{figure}

Now we can state the Curry-Howard correspondence.

\begin{theorem}\label{prop:jlcurryhoward}
	For multiset $\Gamma \sset \langpl$ and $A \in \langpl$.
	We write $\typset{\Gamma}$ the set typing the formulas $\Gamma$ with distinct proof variables.
	Then for some term $M$ constructed from proof variables in $\Gamma$
	$$
	\utctx{\Gamma}{A}
	\iff
	\ctx{\typset{\Gamma}}{\fm{M}{A}}
	$$
\end{theorem}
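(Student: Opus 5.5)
The plan is to prove the two directions separately, each by induction on the height of the given derivation. The proof of $\utctx{\Gamma}{A}$ and the typed judgement $\ctx{\typset{\Gamma}}{\fm{M}{A}}$ will be related by a rule-by-rule correspondence. Although the statement restricts $\Gamma$ and $A$ to $\langpl$, derivations in $\NDJ$ may pass through intermediate formulas of $\langjust$, for instance via $\intrle{\boxrule[\ ]}$ followed by $\elimrle{\boxrule[\ ]}$. I will therefore first prove the stronger statement for arbitrary multisets $\Gamma \sset \langjust$ and $A \in \langjust$, and obtain the stated version as a special case. The rules of \cref{fig:jlnd} and \cref{fig:typednd} are in one-to-one correspondence with the same premise and conclusion formulas, which is what makes the induction go through.

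For $\Leftarrow$ I define a forgetful map that erases all term annotations. Given a $\laJ$ derivation of $\ctx{\typset{\Gamma}}{\fm{M}{A}}$, replacing every typed context $\typset{\Delta}$ by its underlying multiset $\Delta$ and dropping the term yields an $\NDJ$ derivation. Each $\laJ$ rule maps to the $\NDJ$ rule of the same name. For example, $\intrle{\boxrule[\pl{}]}$ with premise context $\typset{\Gamma}, \fm{\var}{\just{x}{A}}$ maps to the $\NDJ$ rule with premise context $\Gamma, \just{x}{A}$, and $\id$ maps to $\id$. The induction is immediate.

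For $\Rightarrow$ I induct on the $\NDJ$ derivation and build $M$ together with the typed derivation. The first step is to fix an injective typing $\typset{\Gamma}$ of the multiset $\Gamma$, so that repeated occurrences of a formula receive distinct variables. In the $\id$ case, $\utctx{\Gamma', A}{A}$, the occurrence of $A$ is typed by some $\var$ and I set $M \colonequals \var$. For rules that discharge an assumption ($\intrle{\IMP}$ and $\intrle{\boxrule[\pl{}]}$), I choose a fresh variable $\var \in \Vars$ not occurring in $\typset{\Gamma}$. I apply the induction hypothesis to the premise with context $\typset{\Gamma}, \fm{\var}{A}$ (respectively $\fm{\var}{\just{x}{A}}$), obtaining $N$, and set $M \colonequals \la{N}$ (respectively $\boxla{N}$). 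All remaining rules are context-preserving. For each of them I apply the induction hypothesis to each premise with the same $\typset{\Gamma}$ and combine the resulting terms with the matching constructor: application, pairing, the projections, $\prom{x}{\cdot}$, $\bang{\cdot}$, $\unbox{\cdot}$, $\boxapp{\cdot}{\cdot}$, $\boxprojl{\cdot}$, $\boxprojr{\cdot}$, or $\boxandc{\cdot}{\cdot}$. The side data in the formulas (the proof terms $s$ and $t$ and the variable $x$) are copied unchanged, since the typed rules carry exactly the same formula annotations.

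I expect the main obstacle to be bookkeeping rather than mathematics. Contexts in $\NDJ$ are multisets, while in $\laJ$ they are sets of distinctly typed formulas. So I must state the induction hypothesis for an arbitrary injective typing of $\Gamma$, not a fixed one, so that it can be applied after extending the context by a fresh variable. I must also ensure that the free variables of $M$ lie among those of $\typset{\Gamma}$. For this I will prove along the way that every free variable of $M$ is declared in $\typset{\Gamma}$; this holds in each case because variables are introduced only at $\id$ and bound exactly at the discharging rules. A secondary point is keeping $\lambda$-variables $\var$ separate from proof variables $x \in \prfvar$ inside formulas: the fresh $\var$ chosen in $\intrle{\boxrule[\pl{}]}$ is unrelated to the formula-level $x$, and no capture issues arise because formulas are never substituted into.
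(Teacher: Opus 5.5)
Your proposal is correct and follows the paper's own proof. The paper proves $\Rightarrow$ by induction on the $\NDJ$ derivation and $\Leftarrow$ by erasing the term annotations from a $\laJ$ derivation, as you do. Your added bookkeeping (proving the statement over all of $\langjust$, quantifying the induction hypothesis over arbitrary injective typings, and choosing fresh variables at the discharging rules) is what the paper's ``routine'' leaves implicit. The generalisation to $\langjust$ is in fact needed, since $\NDJ$ derivations of $\langpl$-judgements can pass through justification formulas; the paper also relies on that general form when it later applies the theorem to $\just{t}{A}$.
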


\begin{proof}
	The~$\Rightarrow$ direction is a routine proof by induction on~$\utctx{\Gamma}{A}$.
	The~$\Leftarrow$ can be seen from the fact that any derivation of~$\laJ$ is a derivation in~$\NDJ$ by translating each formula~$\fm{M}{A}$ contained in the judgement to~$A$.
\end{proof}

The following is an example of typing a proof of the $\jimpi$~axiom.
\begin{example}
	$$
	\vlderivation
	{
		\vlin{\intrle{\IMP}}{}
		{
			\ctx{}{\fm{\la[\var[1]]{\boxla[\var[2]]{\app{\var[1]}{\var[2]}}}}{(\just{x}{A} \IMP \just{t}{B}) \IMP \just{\pl{t}}{(A \IMP B)}}}
		}
		{
			\vlin{\intrle{\boxrule[\pl{}]}}{}
			{
				\ctx{\fm{\var[1]}{\just{x}{A} \IMP \just{t}{B}}}{\fm{\boxla[\var[2]]{\app{\var[1]}{\var[2]}}}{\just{\pl{t}}{(A \IMP B)}}}
			}
			{
				\vliin{\elimrle{\IMP}}{}
				{
					\ctx{\fm{\var[1]}{\just{x}{A} \IMP \just{t}{B}}, \fm{\var[2]}{\just{x}{A}}}{\fm{\app{\var[1]}{\var[2]}}{\just{t}{B}}}
				}
				{
					\vlin{\id}{}
					{
						\ctx{\fm{\var[1]}{\just{x}{A} \IMP \just{t}{B}}, \fm{\var[2]}{\just{x}{A}}}{\fm{\var[1]}{\just{x}{A} \IMP \just{t}{B}}}
					}
					{
						\vlhy{}
					}
				}
				{
					\vlin{\id}{}
					{
						\ctx{\fm{\var[1]}{\just{x}{A} \IMP \just{t}{B}}, \fm{\var[2]}{\just{x}{A}}}{\fm{\var[2]}{\just{x}{A}}}
					}
					{
						\vlhy{}
					}
				}
			}
		}
	}
	$$
\end{example}
For the other justification axioms, see \cref{ex:typjustax} in the Appendix.

Now we will discuss \emph{normalisation}.
We define detours in the usual way.
\begin{definition}
	A proof in $\mathsf{ND}\J$ or $\lambda \J$ contains a \emph{detour} if there is a branch which contains an introduction rule, followed immediately by an elimination rule.
	
	When there is a normal proof of $\utctx{\Gamma}{A}$ (respectively $\ctx{\typset{\Gamma}}{\fm{M}{A}}$) we write $\nutctx{\Gamma}{A}$ (respectively $\nctx{\typset{\Gamma}}{\fm{M}{A}}$).
\end{definition}
Before we define our reductions, we define substitution of terms in the routine way.
\begin{definition}[Substitution of terms]
	Given a term $M \in \terms$, $\subst{M}{\var}{N}$ is defined inductively:
	\begin{itemize}
		\item $\subst{a}{\var}{N} \colonequals N$;
		\item $\subst{b}{\var}{N} \colonequals b$ when $b \neq \var$;
		\item $\subst{\la{M}}{\var}{N} \colonequals \la{M}$;
		\item $\subst{\la[b]{M}}{\var}{N} \colonequals \la[b]{\subst{M}{\var}{N}}$ where $b \neq \var$;
		\item $\subst{\boxla{M}}{\var}{N} \colonequals \boxla{M}$;
		\item $\subst{\boxla[b]{M}}{\var}{N} \colonequals \boxla[b]{\subst{M}{\var}{N}}$ where $b \neq \var$;
		\item For unary constructors $\mathsf{C}$, $\subst{\mathsf{C}(M)}{\var}{N} \colonequals \mathsf{C}(\subst{M}{\var}{N})$;
		\item For the remaining binary constructors $\mathsf{D}$, $\subst{\mathsf{D}(M_1, M_2)}{\var}{N} \colonequals \mathsf{D}(\subst{M_1}{\var}{N},\subst{M_2}{\var}{N})$.
	\end{itemize}
\end{definition}
Similarly to $\lambda$, we also have $\lambda^{\BOX}$ \emph{binding} variables.
We will utilise the following.
\begin{lemma}
	Given $\ctx{\typset{\Gamma}, \fm{\var}{\just{x}{A}}}{\fm{M}{B}}$, we have $\ctx{\typset{\Gamma}, \fm{\var'}{A}}{\fm{\subst{M}{\var}{\prom{x}{\var'}}}{B}}$.
\end{lemma}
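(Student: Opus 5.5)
We prove the statement by induction on the derivation of $\ctx{\typset{\Gamma}, \fm{\var}{\just{x}{A}}}{\fm{M}{B}}$, generalising over $\typset{\Gamma}$ so that the hypothesis also applies to premises whose contexts have been extended by the binding rules $\intrle{\IMP}$ and $\intrle{\boxrule[\pl{}]}$. Throughout we assume, by $\alpha$-renaming, that $\var'$ is fresh for $\typset{\Gamma}$ and $M$. We also assume that bound variables of $M$ differ from $\var$ and $\var'$, so that the substitution is capture-free.

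The key observation is the variable case. Suppose the derivation is $\id$ with conclusion $\fm{\var}{\just{x}{A}}$, so $M = \var$ and $B = \just{x}{A}$. Then $\subst{\var}{\var}{\prom{x}{\var'}} = \prom{x}{\var'}$. We derive $\ctx{\typset{\Gamma}, \fm{\var'}{A}}{\fm{\var'}{A}}$ by $\id$ and apply $\intrle{\boxrule[\ ]}$ with proof variable $x$. This yields exactly $\fm{\prom{x}{\var'}}{\just{x}{A}}$, which is precisely why the substitution uses $P_x$. If instead $\id$ concludes $\fm{b}{C}$ for some $b \neq \var$ with $\fm{b}{C} \in \typset{\Gamma}$, the term is unchanged, and $\id$ still applies in the new context.

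All other rules are handled by the inductive hypothesis followed by re-applying the same rule, using the clauses of the substitution definition.
\begin{itemize}
\item For $\elimrle{\IMP}$, $\intrle{\AND}$, $\intrle{\boxrule[\pdot{}{}]}$ and $\intrle{\boxrule[\pai{}{}]}$, substitution distributes over both components, and the formula annotations (including proof terms such as $\pdot{s}{t}$) are untouched.
\item For the unary rules $\elimrleleft{\AND}$, $\elimrleright{\AND}$, $\intrle{\boxrule[\ ]}$, $\intrle{\boxrule[\pbang{}]}$, $\elimrle{\boxrule[\ ]}$, $\intrle{\boxrule[\pael{}]}$ and $\intrle{\boxrule[\paer{}]}$, substitution commutes with the constructor.
\end{itemize}
Importantly, substitution acts only on $\terms$-variables and never on the proof terms $t \in \prfterms$ appearing inside types. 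The conclusion types are therefore preserved verbatim, which is what makes each case go through.

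The only delicate cases are the binders $\intrle{\IMP}$ and $\intrle{\boxrule[\pl{}]}$, whose premise contexts are $\typset{\Gamma}, \fm{\var}{\just{x}{A}}, \fm{b}{C}$. Since typed contexts use distinct variables, the bound variable $b$ cannot equal $\var$ (and we may rename it away from $\var'$). So we are in the clause $\subst{\la[b]{M}}{\var}{N} = \la[b]{\subst{M}{\var}{N}}$, and similarly for $\lambda^{\BOX}$. We apply the inductive hypothesis with context $\typset{\Gamma}, \fm{b}{C}$ and then re-apply the binder. This bookkeeping about distinctness and freshness, and ensuring that $\var'$ is not captured, is the main obstacle. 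It is resolved by the convention that contexts are sets of distinctly typed variables, together with $\alpha$-renaming.
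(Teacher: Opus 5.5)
Your proposal is correct and takes the same route as the paper, which only says the lemma is a routine induction on the derivation of $\ctx{\typset{\Gamma}, \fm{\var}{\just{x}{A}}}{\fm{M}{B}}$. You carry out that induction explicitly: the $\id$ case is closed by $\intrle{\boxrule[\ ]}$, and the binder cases rely on distinctness of context variables, with an explicit assumption that $\var'$ is fresh, including with respect to the bound variables of $M$. That freshness assumption is actually needed, because the paper's substitution is not capture-avoiding, so it is a worthwhile addition rather than a deviation.
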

\begin{proof}
	This is a routine proof by induction on~$\ctx{\typset{\Gamma}, \fm{\var}{\just{x}{A}}}{\fm{M}{B}}$.
\end{proof}
It is worth noting that the translation above could introduce new detours.

Now we can illustrate our reductions (we omit the reductions of~$\NDIPL$):
{$$
	\vlderivation
	{
		\vlin{\elimrle{\boxrule[\ ]}}{}
		{
			\ctx{\typset{\Gamma}}{\ctxfm{\unbox{\prom{x}{M}}}{A}}
		}
		{
			\vlin{\intrle{\boxrule[\ ]}}{}
			{
				\ctx{\typset{\Gamma}}{\ctxfm{\prom{x}{M}}{\just{x}{A}}}
			}
			{
				\vlhy{\ctx{\typset{\Gamma}}{\ctxfm{M}{A}}}
			}
		}
	}
	\quad
	\rightsquigarrow
	\quad
	\ctx{\typset{\Gamma}}{\ctxfm{M}{A}}
	$$
	
	$$
	\vlderivation
	{
		\vlin{\elimrle{\boxrule[\ ]}}{}
		{
			\ctx{\typset{\Gamma}}{\ctxfm{\unbox{\bang{M}}}{\just{t}{A}}}
		}
		{
			\vlin{\intrle{\boxrule[\pbang{} ]}}{}
			{
				\ctx{\typset{\Gamma}}{\ctxfm{\bang{M}}{\just{\pbang{t}}{\just{t}{A}}}}
			}
			{
				\vlhy{\ctx{\typset{\Gamma}}{\ctxfm{M}{\just{t}{A}}}}
			}
		}
	}
	\quad
	\rightsquigarrow
	\quad
	\ctx{\typset{\Gamma}}{\ctxfm{M}{\just t A}}
	$$
	
	$$
	\vlderivation
	{
		\vlin{\elimrle{\boxrule[\ ]}}{}
		{
			\ctx{\typset{\Gamma}}{\fm{\unbox{\boxapp M N}}{B}}
		}
		{
			\vliin{\intrle{\boxrule[\pdot{}{}]}}{}
			{
				\ctx{\typset{\Gamma}}{\fm{\boxapp M N}{\just{\pdot{s}{t}}{B}}}
			}
			{
				\vlhy{\ctx{\typset{\Gamma}}{\fm{M}{\just{s}{(A \IMP B)}}}}
			}
			{
				\vlhy{\ctx{\typset{\Gamma}}{\fm{N}{\just{t}{A}}}}
			}
		}
	}
	\quad
	\rightsquigarrow
	\quad
	\vlderivation
	{
		\vliin{\elimrle{\IMP}}{}
		{
			\ctx{\typset{\Gamma}}{\fm{\app{\unbox{M}}{\unbox{N}}}{B}}
		}
		{
			\vlin{\elimrle{\boxrule[\ ]}}{}
			{
				\ctx{\typset{\Gamma}}{\fm{\unbox{M}}{A \IMP B}}
			}
			{
				\vlhy{\ctx{\typset{\Gamma}}{\fm{M}{\just{s}{(A \IMP B)}}}}
			}
		}
		{
			\vlin{\elimrle{\boxrule[\ ]}}{}
			{
				\ctx{\typset{\Gamma}}{\fm{\unbox{N}}{A}}
			}
			{
				\vlhy{\ctx{\typset{\Gamma}}{\fm{N}{\just{t}{A}}}}
			}
		}
	}
	$$
	
	$$
	\vlderivation
	{
		\vlin{\elimrle{\boxrule[\ ]}}{}
		{
			\ctx{\typset{\Gamma}}{\fm{\unbox{\boxla M}}{A \IMP B}}
		}
		{
			\vlin{\intrle{\boxrule[\pl{}]}}{}
			{
				\ctx{\typset{\Gamma}}{\fm{\boxla M}{\just{\pl{t}}{(A \IMP B)}}}
			}
			{
				\vlhy{\ctx{\typset{\Gamma}, \fm{\var}{\just{x}{A}}}{\fm{M}{\just{t}{A}}}}
			}
		}
	}
	\quad
	\rightsquigarrow
	\quad
	\vlderivation
	{
		\vlin{\intrle{\IMP}}{}
		{
			\ctx{\typset{\Gamma}}{\fm{\la[\var']{\unbox{\subst{M}{\var}{\prom{x}{\var'}}}}}{A \IMP B}}
		}
		{
			\vlin{\elimrle{\boxrule[\ ]}}{}
			{
				\ctx{\typset{\Gamma}, \fm{\var'}{A}}{\fm{\unbox{\subst{M}{\var}{\prom{x}{\var'}}}}{B}}
			}
			{
				\vlhy{\ctx{\typset{\Gamma}, \fm{\var'}{A}}{\fm{\subst{M}{\var}{\prom{x}{\var'}}}{\just{t}{B}}}}
			}
		}
	}
	$$
	
	$$
	\vlderivation
	{
		\vlin{\elimrle{\boxrule[\ ]}}{}
		{
			\ctx{\typset{\Gamma}}{\fm{\unbox{\boxprojl M}}{A}}
		}
		{
			\vlin{\intrle{\boxrule[\pael{}]}}{}
			{
				\ctx{\typset{\Gamma}}{\fm{\boxprojl M}{\just{\pael{t}}{A}}}
			}
			{
				\vlhy{\ctx{\typset{\Gamma}}{\fm{M}{\just{t}{(A \AND B)}}}}
			}
		}
	}
	\quad
	\rightsquigarrow
	\quad
	\vlderivation
	{
		\vlin{\elimrleleft{\AND}}{}
		{
			\ctx{\typset{\Gamma}}{\fm{\projl{\unbox{M}}}{A}}
		}
		{
			\vlin{\elimrle{\boxrule[\ ]}}{}
			{
				\ctx{\typset{\Gamma}}{\fm{\unbox{M}}{A \AND B}}
			}
			{
				\vlhy{\ctx{\typset{\Gamma}}{\fm{M}{\just{t}{(A \AND B)}}}}
			}
		}
	}
	$$
	
	$$
	\vlderivation
	{
		\vlin{\elimrle{\boxrule[\ ]}}{}
		{
			\ctx{\typset{\Gamma}}{\fm{\unbox{\boxprojr M}}{{B}}}
		}
		{
			\vlin{\intrle{\boxrule[\paer{}]}}{}
			{
				\ctx{\typset{\Gamma}}{\fm{\boxprojr M}{\just{\paer{t}}{B}}}
			}
			{
				\vlhy{\ctx{\typset{\Gamma}}{\fm{M}{\just{t}{(A \AND B)}}}}
			}
		}
	}
	\quad
	\rightsquigarrow
	\quad
	\vlderivation
	{
		\vlin{\elimrleright{\AND}}{}
		{
			\ctx{\typset{\Gamma}}{\fm{\projr{\unbox{M}}}{B}}
		}
		{
			\vlin{\elimrle{\boxrule[\ ]}}{}
			{
				\ctx{\typset{\Gamma}}{\fm{\unbox{M}}{A \AND B}}
			}
			{
				\vlhy{\ctx{\typset{\Gamma}}{\fm{M}{\just{t}{(A \AND B)}}}}
			}
		}
	}
	$$
	
	$$
	\vlderivation
	{
		\vlin{\elimrle{\boxrule[\ ]}}{}
		{
			\ctx{\typset{\Gamma}}{\fm{\unbox{\boxandc M N}}{A \AND B}}
		}
		{
			\vliin{\intrle{\boxrule[\pai{}{}]}}{}
			{
				\ctx{\typset{\Gamma}}{\fm{\boxandc M N}{\just{\pai{s}{t}}{(A \AND B)}}}
			}
			{
				\vlhy{\ctx{\typset{\Gamma}}{\fm{M}{\just{s}{A}}}}
			}
			{
				\vlhy{\ctx{\typset{\Gamma}}{\fm{N}{\just{t}{B}}}}
			}
		}
	}
	\quad
	\rightsquigarrow
	\quad
	\vlderivation
	{
		\vliin{\intrle{\AND}}{}
		{
			\ctx{\typset{\Gamma}}{\fm{\andc{\unbox{M}}{\unbox{N}}}{A \AND B}}
		}
		{
			\vlin{\elimrle{\boxrule[\ ]}}{}
			{
				\ctx{\typset{\Gamma}}{\fm{\unbox{M}}{A}}
			}
			{
				\vlhy{\ctx{\typset{\Gamma}}{\fm{M}{\just{s}{A}}}}
			}
		}
		{
			\vlin{\elimrle{\boxrule[\ ]}}{}
			{
				\ctx{\typset{\Gamma}}{\fm{\unbox{N}}{B}}
			}
			{
				\vlhy{\ctx{\typset{\Gamma}}{\fm{N}{\just{t}{A}}}}
			}
		}
	}
	$$}
\normalsize

\noindent
As can be seen in these reductions, additional detours that can be introduced in applying one of these reductions to a proof could be of higher \emph{degree}, so a different measure will be required to achieve \emph{normalisation}.
Doing this directly in the natural deduction calculus will be difficult so we will achieve this instead through a \emph{cut-elimination} of a Gentzen-style formulation of $\J$ (\cref{sec:pt}).

\subsection{Relating to Self-internalisation}
In this subsection, we are going to relate $\prfterms$, the syntactic proof terms of $\J$ and $\terms$, the computational proof terms obtained from the typed natural deduction of $\J$.
To achieve this, we require some bookkeeping on the variables of both $\prfterms$ and $\terms$:
note that~$\langjust$ is countable so enumerate each variable~$\pv$ and~$\var$ with formulas of the language,~i.e.
$$
\prfvar \colonequals \set[{\pv[A]}]{A \in \langjust}
\qquad
\Vars \colonequals \set[{\var[A]}]{A \in \langjust}
$$
This is in effect a \emph{Church}-style typing of variables.

\textbf{From $\prfterms$ to $\terms$.}
Here, we will define a map $\function{\ptttsymb}{\prfterms}{\terms}$ such that 
we can strengthen the Self-internalisation Theorem (\cref{thm:internalisation}):
\begin{theorem}\label{thm:inttond}
	If $\hprove{A}$, then there exists a closed proof term $t$ such that $\hprove{\just{t}{A}}$
	and $\ctx{}{\fm{\pttt{t}}{A}}$. 
\end{theorem}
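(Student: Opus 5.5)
We first define the map $\ptttsymb$ on $\prfterms$ by structural recursion, using the Church-style indexing of variables. The idea is to read each proof-term constructor as the $\laJ$ constructor that introduces it on the boxed side and then eliminate the box. Concretely:
\begin{itemize}
\item $\pttt{(\pv[A])} \colonequals \var[A]$;
\item $\pttt{(\pdot{s}{t})} \colonequals \app{\pttt{s}}{\pttt{t}}$;
\item $\pttt{(\pl[\pv[A]]{t})} \colonequals \la[\var[A]]{\pttt{t}}$;
\item $\pttt{(\pael{t})} \colonequals \projl{\pttt{t}}$ and $\pttt{(\paer{t})} \colonequals \projr{\pttt{t}}$;
\item $\pttt{\pai{s}{t}} \colonequals \andc{\pttt{s}}{\pttt{t}}$.
\end{itemize}
The one genuinely new case is $\pbang{t}$. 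It should be sent to a term typing $\just{t}{A}$ whenever $\pttt{t}$ types $A$. The natural choice is $\pttt{(\pbang{t})} \colonequals \prom{t}{\pttt{t}}$, a promotion along the $\intrle{\boxrule[\ ]}$-style rule. We will need that $\laJ$ admits an introduction $\ctx{\typset{\Gamma}}{\fm{M}{A}} \Rightarrow \ctx{\typset{\Gamma}}{\fm{\prom{t}{M}}{\just{t}{A}}}$, or else replace it by a derived term built from $\prom{x}{-}$, $\boxla$ and $\boxapp$. This point needs checking against the rules of \cref{fig:typednd}. Free variables of $t$ map to free variables of $\pttt{t}$, so closed terms go to closed terms.

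The main lemma, proved by induction on $t$, is a typing invariant: for every instance of a $\J$ proof obtained as in \cref{thm:internalisation} producing $\hprove{\just{t}{A}}$, we have $\ctx{}{\fm{\pttt{t}}{A}}$. More precisely, we strengthen the statement to contexts, in the style of \cref{thm:lifting}. If the internalisation procedure yields $\hprove[\just{\pv[B_1]}{B_1},\dots]{\just{t}{A}}$, then $\ctx{\fm{\var[B_1]}{B_1},\dots}{\fm{\pttt{t}}{A}}$.

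We then redo the induction of \cref{thm:internalisation} and track terms at each step.
\begin{itemize}
\item For a $\mp$ step, the internalised term is $\pdot{s}{t}$, and $\elimrle{\IMP}$ types $\app{\pttt{s}}{\pttt{t}}$.
\item For an $\IPL$ axiom, by \cref{rem:iplthms} the term is an $\NDIPL$ term, and $\pttt{}$ is the identity embedding on $\IPLterms$. The $\NDIPL$ derivation therefore transfers verbatim to $\laJ$.
\item For the justification axioms, the term has the shape $\pl[x_1]{\dots\pl[x_{n-1}]{\pbang{t_n}}}$. Its image is $\la[\var[1]]{\dots\la[\var[n-1]]{\pttt{(\pbang{t_n})}}}$.
\end{itemize}

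In the justification-axiom case we must build a $\laJ$ derivation of the axiom whose term is exactly this image. We reuse the derivations of the axioms given in \cref{thm:ndsoundcomp} and the example for $\jimpi$, and check that each is typed by the image of the internalised term. Where it is not, we adjust the definition of $\ptttsymb$ on $\pbang{}$, or choose the internalisation proof term suitably. The statement only asks for existence, so we are free to pick $t$ from the proof of \cref{thm:internalisation}.

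I expect the main obstacle to be this $\pbang{}$ clause together with the variable bookkeeping. The term $\pttt{t}$ must be well typed in the exact context of assumptions $\just{x_i}{A_i}$, and the $\jtax$ steps used in internalisation correspond to $\unbox{-}$, which does not appear in $\pttt{t}$. I would resolve this by stating the invariant with the hypotheses typed $\fm{\var[A_i]}{\just{x_i}{A_i}}$, and letting $\pttt{}$ on variables insert $\unbox{\var}$ where needed. After that, each axiom case is a finite check.
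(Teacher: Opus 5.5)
There is a genuine gap, and it is exactly the $\pbang{}$ clause you leave open.

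First, $\pttt{(\pbang{t})} \colonequals \prom{t}{\pttt{t}}$ is not available. $\intrle{\boxrule[\ ]}$ only yields $\fm{\prom{x}{M}}{\just{x}{A}}$ for a proof \emph{variable} $x$. No derived rule can replace it, since such a rule would make $A \IMP \just{t}{A}$ a theorem of $\J$ for every $t$. Reading $\just{t}{A}$ classically as ``$A$ holds, and $t$ is a $\lambda$-abstraction only if $A$ is an implication'' validates every axiom of $\J$ but refutes $p \IMP \just{\pl{x}}{p}$.

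More fundamentally, no compositional clause for $\pbang{r}$ can work. Take the internalised $\jk$ term $\pl[x_1]{\pl[x_2]{\pbang{(\pdot{s}{t})}}}$. Its body is built from the content terms $s, t$ of the axiom instance, which are arbitrary, possibly open, and in general do not mention $x_1, x_2$. The hypothesis types $\just{s}{(A \IMP B)}$ and $\just{t}{A}$ are recorded only in the Church-style indices of the binders, which a clause for $\pbang{r}$ never sees. A structural translation therefore produces $\la[a_1]{\la[a_2]{F}}$ with $F$ not using $a_1, a_2$. Typing this in the empty context would force $F$ itself to be a closed $\laJ$ term of type $\just{\pdot{s}{t}}{B}$. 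Erasing types and using \cref{thm:ndsoundcomp} and $\jtax$, that gives $\hprove{B}$, which fails for atomic $B$.

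The same happens for $\jtax$. Its internalised term $\pl[x_{\just{t}{A}}]{t}$ has body $t$ rather than the bound variable, so your fix of inserting $\unbox{-}$ at variable occurrences has nothing to act on. Your remark that closed terms go to closed terms does not help either, because these internalised terms carry the free variables of $s$ and $t$.

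The missing idea is that $\ptttsymb$ has to be non-compositional. The paper defines it as a \emph{partial} map that recognises each internalised justification-axiom term as a whole. It sends that term to the $\laJ$ term of the axiom's natural deduction proof (\cref{ex:typjustax}), naming the hypothesis variables via the binder indices and discarding the inner $s, t$; this is also why the image is closed. For instance, $\pl[x_{\just{s}{(A \IMP B)}}]{\pl[x_{\just{t}{A}}]{\pbang{(\pdot{s}{t})}}} \mapsto \la[a_1]{\la[a_2]{\boxapp{a_1}{a_2}}}$, with $a_1 = a_{\just{s}{(A \IMP B)}}$ and $a_2 = a_{\just{t}{A}}$. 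Likewise $\pl[x_{\just{t}{A}}]{t} \mapsto \la[a_{\just{t}{A}}]{\unbox{a_{\just{t}{A}}}}$, with analogous clauses for $\jfax$, $\jimpi$, $\jandi$, $\jandel$ and $\jander$. Here $\pbang{}$ only marks that a justification operation is intended. The $\jtax$ clause must take precedence over the $\IPLterms$ clause, since $\pl[x_{\just{t}{A}}]{t}$ may itself lie in $\IPLterms$.

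With that definition, the rest of your plan is the paper's argument: the identity embedding on $\IPLterms$ for the $\IPL$ axioms via \cref{rem:iplthms}, and $\pdot{s}{t} \mapsto \app{\pttt{s}}{\pttt{t}}$ typed by $\elimrle{\IMP}$ for $\mp$. One redoes \cref{thm:internalisation} with Church-style variables and checks that $\pttt{t}$ is the term obtained via \cref{thm:ndsoundcomp,prop:jlcurryhoward}.
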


We define a partial map in the following way.
\begin{definition}
	$\function{\ptttsymb}{\prfterms}{\terms}$
	is a partial map inductively defined by the following:
	\begin{itemize}
		\item If $t(\pv[A_1], \dots. \pv[A_n]) \in \IPLterms$, then $t(\pv[A_1], \dots. \pv[A_n]) \mapsto t(\var[A_1], \dots. \var[A_n])$;
		\item $\pv[A] \mapsto \var[A]$;
		\item $\pdot{s}{t} \mapsto \app{\ptttsymb(s)}{\ptttsymb(t)}$;
		\item $\pl[\pv[\just{s}{(A \IMP B)}]]{\pl[\pv[\just{t}{A}]] {\pbang{(\pdot{s}{t})}}} \mapsto \la[\var[\just{s}{(A \IMP B)}]]{\la[\var[\just{t}{A}]]{\boxapp{\var[\just{s}{(A \IMP B)}]}{\var[\just{t}{A}]}}}$;
		\item $\pl[\pv[\just{t}{A}]]{\pbang{\pbang{t}}} \mapsto \la[{\var[\just{t}{A}]}]{\bang{\var[\just{t}{A}]}}$;
		\item $\pl[\pv[\just{t}{A}]]{t} \mapsto \la[\var[\just{t}{A}]]{\unbox{\var[\just{t}{A}]}}$;
		\item $\pl[\pv[\just{x}{A} \IMP \just{t}{B}]]{\pbang{\pl{t}}} \mapsto \la[\var[\just{x}{A} \IMP \just{t}{B}]]{\boxla[\var[\just{x}{A}]]{\app{\var[\just{x}{A} \IMP \just{t}{B}]}{\var[\just{x}{A}]}}}$
		\item $\pl[\pv[\just{t}{(A \AND B)}]]{\pbang{\pael{t}}} \mapsto \la[\var[\just{t}{(A \AND B)}]]{\boxprojl{\var[\just{t}{(A \AND B)}]}}$;
		\item $\pl[\pv[\just{t}{(A \AND B)}]]{\pbang{\paer{t}}} \mapsto \la[\var[\just{t}{(A \AND B)}]]{\boxprojr{\var[\just{t}{(A \AND B)}]}}$;
		\item $\pl[\pv[\just{s}{A}]]{\pl[\pv[\just{t}{B}]]{\pbang{\pai{s}{t}}}} \mapsto \la[\var[\just{s}{A}]]{\la[\var[\just{t}{B}]]{\boxandc{\var[\just{s}{A}]}{\var[\just{t}{B}]}}}$.
	\end{itemize}
\end{definition}
The $\pbang{}$~operator in $\prfterms$ indicates if we are mapping the terms to a justification operation in~$\terms$.
As was discussed in \cref{rem:iplthms}, proof terms obtained for theorems of~$\IPL$ will be~$\IPLterms$ and its terms in $\laJ$ will be of the same shape but with different variable naming.

\begin{proof}[Proof of \cref{thm:inttond}]
	Repeat the proof of the Self-Internalisation Theorem (\cref{thm:internalisation}), but for each use of a proof variable being used for some formula~$B$, i.e.~$\just{\pv}{B}$, use $\just{\pv[B]}{B}$ instead to construct the proof term~$t$ and to track the necessary information (see \cref{ex:inttyped} in the Appendix for the justification axioms).
	Then, in the proof term obtained for $\utctx{}{\just{t}{A}}$ using \cref{thm:ndsoundcomp,prop:jlcurryhoward},~$M$, we have $\pttt{t} = M$.
\end{proof}

\textbf{From $\terms$ to $\prfterms$.}
Here, we will define a map $\function{\ttptsymb}{\terms}{\prfterms}$ such that we have the following.

\begin{theorem}\label{thm:tmtoptm}
	If $\ctx{\fm{\var[A_1]}{A_1}, \dots, \fm{\var[A_n]}{A_n}}{\fm{M(\var[A_1], \dots, \var[A_n])}{A}}$
	then we have
	$\hprove[\just{\ttpt{\var[A_1]}}{A_1}, \dots, \just{\ttpt{\var[A_n]}}{A_n}]{\just{\ttpt{M(\var[A_1], \dots, \var[A_n])}}{A}}$.
\end{theorem}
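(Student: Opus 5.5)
The map $\ttptsymb$ is not yet defined at this point, so the plan is to define it by structural recursion on $M$, guided by the typing rules of $\laJ$. Then I will prove the statement by induction on the derivation of $\ctx{\typset{\Gamma}}{\fm{M}{A}}$. Throughout, write $\Gamma^\ttptsymb$ for $\just{\ttpt{\var[A_1]}}{A_1}, \dots, \just{\ttpt{\var[A_n]}}{A_n}$.

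The definition is as follows. Variables go to variables, $\ttpt{\var[B]} \colonequals \pv[B]$. The $\NDIPL$ constructors are sent homomorphically: $\ttpt{(\app{M}{N})} \colonequals \pdot{\ttpt M}{\ttpt N}$, $\ttpt{(\la[\var[B]]{M})} \colonequals \pl[\pv[B]]{\ttpt M}$, $\ttpt{\andc MN} \colonequals \pai{\ttpt M}{\ttpt N}$, and similarly for $\pi_l, \pi_r$. Every justification constructor is sent to $\pbang{}$ applied to the proof term that the type of the judgement carries. This mirrors the way $\ptttsymb$ read $\pbang{}$ as marking a justification operation. The clause for $\bang{M}$ is then $\pbang{\ttpt M}$, where $\ttpt M$ is a proof of $\just{t}{A}$, so the target is $\just{\pbang{\ttpt M}}{\just{\pbang{t}}{\just{t}{A}}}$. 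The justification-introduction cases ($\prom{x}{M}$, $\boxapp{M}{N}$, $\boxla M$, $\boxprojl M$, $\boxprojr M$, $\boxandc MN$) and the $\NDIPL$ introduction and elimination rules are handled by this same pattern. In each case I apply the Lifting Lemma (\cref{thm:lifting}) to the one-step $\J$-reasoning that validates the rule. This reasoning is already available from the $\Leftarrow$ direction of \cref{thm:ndsoundcomp}, and it uses \cref{prop:typeformula} for $\intrle{\boxrule[\ ]}$. I then choose $\ttpt{}$ on that constructor to be exactly the term that the lifting construction produces from the $\ttpt{}$ images of the premises. Concretely, for $\elimrle{\IMP}$ the hypotheses $\just{\ttpt M}{(A \IMP B)}$ and $\just{\ttpt N}{A}$ combine by $\jk$ to give $\just{\pdot{\ttpt M}{\ttpt N}}{B}$. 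For $\intrle{\AND}$ and the projections I use $\jandi$, $\jandel$ and $\jander$. For $\unbox{M}$ with $M : \just{t}{A}$, I set $\ttpt{\unbox M} \colonequals \pdot{(\pl[\pv[\just{t}{A}]]{t})}{\ttpt M}$, the internalised $\jtax$ instance from \cref{thm:internalisation} applied via $\jk$. Alternatively it can be made to depend only on the typing, since $\hprove{\just{t}{A}}$-style information is present.

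The key steps of the induction, in order, are these. The base case $\id$ is immediate. For the unary and binary rules whose premises share the context, I combine the induction hypotheses with the relevant justification axiom, or with an internalised axiom, using $\jk$ and $\mp$. For the binder rules $\intrle{\IMP}$ and $\intrle{\boxrule[\pl{}]}$, the induction hypothesis gives a consecution with the extra hypothesis $\just{\pv[B]}{B}$. I discharge it with the Deduction Theorem (\cref{thm:deduction}) to get $\just{\pv[B]}{B} \IMP \just{\ttpt M}{C}$, and then apply $\jimpi$, which yields $\just{\pl[\pv[B]]{\ttpt M}}{(B \IMP C)}$. This matches the clause for $\la{}$. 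The Church-style naming $\var[B]\leftrightarrow\pv[B]$ is exactly what makes the variable in $\jimpi$ coincide with $\ttpt{\var[B]}$.

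I expect the main obstacle to be $\intrle{\boxrule[\pl{}]}$ and the other justification introductions. There the conclusion has the form $\just{\ttpt{(\boxla M)}}{\just{\pl{t}}{(A\IMP B)}}$, so I need a second level of lifting. From $\just{\pv}{\just{x}{A}} \vdash \just{\ttpt M}{\just{t}{B}}$ I first obtain $\just{\pl{\ttpt M}}{(\just{x}{A}\IMP\just{t}{B})}$. I then apply $\jk$ with the internalised $\jimpi$ instance, whose term is $\pl{\pbang{\pl{t}}}$ as in \cref{thm:internalisation}. The delicate point is choosing the defining clauses so that the resulting term is a genuine function of $M$ and of the type indices alone. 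I would handle this by letting $\ttptsymb$ be defined on typed terms, that is, by recursion on derivations, and noting that the Church-style variables determine the types uniquely.
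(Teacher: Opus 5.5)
Your overall plan matches the paper's: induction on the $\laJ$ derivation, with $\ttptsymb$ defined clause by clause on typed terms using the Church-style uniqueness of types. The $\NDIPL$ constructors are sent homomorphically and closed by $\jk$, $\jandi$, $\jandel$, $\jander$, and $\lambda \var[B].\, M$ is handled by the Deduction Theorem plus $\jimpi$.

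The step that fails is your explicit clause $\ttpt{(\bang{M})} \colonequals \pbang{\ttpt{M}}$ with target $\just{\pbang{\ttpt{M}}}{\just{\pbang{t}}{\just{t}{A}}}$. From the induction hypothesis $\just{\ttpt{M}}{\just{t}{A}}$, the axiom $\jfax$ only yields $\just{\pbang{\ttpt{M}}}{\just{\ttpt{M}}{\just{t}{A}}}$. The inner justification there is $\ttpt{M}$, not $\pbang{t}$, and nothing in $\J$ lets you exchange it.

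The target is in fact underivable. Take $\ctx{\fm{\var[\just{y}{p}]}{\just{y}{p}}}{\fm{\bang{\var[\just{y}{p}]}}{\just{\pbang{y}}{\just{y}{p}}}}$. Your clause then requires $\just{\pv[\just{y}{p}]}{\just{y}{p}} \vdash_{\J} \just{\pbang{\pv[\just{y}{p}]}}{\just{\pbang{y}}{\just{y}{p}}}$. This fails in the following evidence model:
\begin{itemize}
\item $\just{u}{C}$ holds iff $C$ holds and $C \in E(u)$;
\item variables and abstractions justify every formula;
\item $E$ is closed under the conditions for $\jk, \jandi, \jandel, \jander$;
\item $E(\pbang{u}) = \{\just{u}{C} \mid C \in E(u)\}$.
\end{itemize}
All axioms of $\J$ are valid there, and the hypothesis holds when $p$ does. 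Yet $\just{\pbang{y}}{\just{y}{p}} \notin E(\pbang{\pv[\just{y}{p}]})$.

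The repair is what your own rule (``$\pbang{}$ applied to the proof term carried by the type'') gives. The type of $\bang{M}$ is $\just{\pbang{t}}{\just{t}{A}}$, so $\ttpt{(\bang{M})} \colonequals \pbang{\pbang{t}}$. This is exactly the clause of \cref{def:ttpt}, obtained from the induction hypothesis by $\jtax$ and then $\jfax$ twice. A lifting-style alternative is $\pdot{(\lambda \pv[\just{t}{A}].\, \pbang{\pbang{t}})}{\ttpt{M}}$, via the internalised $\jfax$ instance and $\jk$.

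Beyond that clause, your proposal mixes two schemes, and you should commit to one per constructor. The paper's \cref{def:ttpt} uses only the first. Every justification constructor goes to $\pbang{}$ of the proof term in its type ($\pbang{x}$, $\pbang{(\pdot{s}{t})}$, $\pbang{(\lambda x.t)}$, \dots), and $\unbox{M}$ goes to $\boxterm{M}$. Each case then closes in three steps, with no lifting needed:
\begin{itemize}
\item strip the induction hypothesis with $\jtax$;
\item apply the matching axiom, using \cref{prop:typeformula} for $\prom{x}{M}$, and for $\boxla{M}$ to pass from $\just{x}{A}$ to $\just{\pv[\just{x}{A}]}{\just{x}{A}}$ before the Deduction Theorem and $\jimpi$;
\item re-justify with $\jfax$.
\end{itemize}

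Your lifting-style clauses are also derivable: $\pdot{(\lambda\pv[\just{t}{A}].\,t)}{\ttpt{M}}$ for $\unbox{M}$, and $\jk$ applied to the internalised $\jimpi$ and $\lambda \pv[\just{x}{A}].\,\ttpt{M}$ for $\boxla{M}$. The latter even avoids \cref{prop:typeformula}. But they define a different map from the one the statement refers to. They also depend on rerunning the lifting construction with the $\ttpt{M_i}$ in the hypothesis positions, rather than substituting into the output of \cref{thm:lifting}. That matters because $\J$ is not closed under substituting terms for proof variables: $A \IMP \just{x}{A}$ holds only for variables $x$.
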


We will utilise the Church-style typing to identify the \emph{unique} type of the terms.

\begin{lemma}\label{lem:ttpt}
	For $\var[A_1], \dots, \var[A_n] \in \Vars$ and $M(\var[A_1], \dots, \var[A_n]) \in \terms$,
	then if $\ctx{\fm{\var[A_1]}{A_1}, \dots, \fm{\var[A_n]}{A_n}}{\fm{M(\var[A_1], \dots, \var[A_n])}{A}}$
	and
	$\ctx{\fm{\var[A_1]}{A_1}, \dots, \fm{\var[A_n]}{A_n}}{\fm{M(\var[A_1], \dots, \var[A_n])}{B}}$,
	then $A \equiv B$.
	
	If $A = \just{t}{A'}$ for some proof term $t$ and formula $A'$, denote $\boxterm{M(\var[A_1], \dots, \var[A_n])} \colonequals t$.
\end{lemma}
\begin{proof}
	This is a similar proof to the case of the $\lambda$-calculus of~$\IPL$~\cite{sorensen_lectures_2006}.
\end{proof}
Being able to trace what formula is being represented allows to define the following map.
\begin{definition}\label{def:ttpt}
	$\function{\ttptsymb}{\terms}{\prfterms}$
	is a partial map defined inductively
	\begin{itemize}
		\item $\ttpt{\var[A]} \colonequals \pv[A]$;
		
		\item $\ttpt{\app{M}{N}} \colonequals \pdot{\ttpt{M}}{\ttpt{N}}$
		
		\item $\ttpt{\la[\var[A]]{M}} \colonequals \pl[\pv[A]]{\ttpt{M}}$;
		
		\item $\ttpt{\projl{M}} \colonequals \pael{\ttpt{M}}$;
		
		\item $\ttpt{\projr{M}} \colonequals \paer{\ttpt{M}}$;
		
		\item $\ttpt{\andc{M}{N}} \colonequals \pai{\ttpt{M}}{\ttpt{N}}$;
		
		\item $\ttpt{\prom{\pv}{M}} \colonequals \pbang{\pv}$;
		
		\item $\ttpt{\bang{M(\var[A_1], \dots, \var[A_n])}} \colonequals \pbang{\pbang{\boxterm{M(\var[A_1], \dots, \var[A_n])}}}$;
		
		\item $\ttpt{\boxapp{M(\var[A_1], \dots, \var[A_n])}{N(\var[A_1], \dots, \var[A_n])}}$ 
		\vspace{-0.3cm}
		\begin{flushright}
			$ \colonequals \pbang{(\pdot{\boxterm{M(\var[A_1], \dots, \var[A_n])}}{\boxterm{N(\var[A_1], \dots, \var[A_n])}})}$;
		\end{flushright}

		\item $\ttpt{\boxla[\var[\just{\pv[A]}{A}]]{M(\var[A_1], \dots, \var[A_n], \var[\just{\pv[A]}{A}])}} \colonequals \pbang{\pl[\pv[A]]{\boxterm{M(\var[A_1], \dots, \var[A_n], \var[\just{\pv[A]}{A}])}}}$;
		
		\item $\ttpt{\boxprojl{M(\var[A_1], \dots, \var[A_n])}} \colonequals \bang{\pael{\boxterm{M(\var[A_1], \dots, \var[A_n])}}}$;
		
		\item $\ttpt{\boxprojr{M(\var[A_1], \dots, \var[A_n])}} \colonequals \bang{\paer{\boxterm{M(\var[A_1], \dots, \var[A_n])}}}$;
		
		\item ${\boxandc{M(\var[{A_1}], \dots, \var[{A_n}])}{N(\var[{A_1}], \dots, \var[{A_n}])}}^{\ttptsymb}$ 
		\vspace{-0.3cm}
		\begin{flushright}
			$\colonequals \pbang{\pai{\boxterm{M(\var[A_1], \dots, \var[A_n])}}{\boxterm{N(\var[A_1], \dots, \var[A_n])}}}$;
		\end{flushright}

		\item $\ttpt{\unbox{M(\var[A_1], \dots, \var[A_n])}} \colonequals \boxterm{M(\var[A_1], \dots, \var[A_n])}$.
	\end{itemize}
\end{definition}
The well-definedness of~$\ttptsymb$ is guaranteed through \cref{lem:ttpt}.

\begin{proof}[Proof of~\cref{thm:tmtoptm}]
	This proceeds by induction on~$\ctx{\fm{\var[A_1]}{A_1}, \dots, \fm{\var[A_n]}{A_n}}{\fm{M(\var[A_1], \dots, \var[A_n])}{A}}$
	unfolding \cref{def:ttpt} at each rule. 
	The soundness of each resulting instance follows routinely (for some cases, see the \cref{sec:appnd} in the Appendix).
\end{proof}

\section{Proof Theory}\label{sec:pt}
In this section, we explore Gentzen-style proof theory for $\J$.
Ultimately, it will be utilised for the purposes of normalisation
however
is also interesting to explore due to the features of the sequent calculus which we shall discuss, as done for lambda calculus in~\cite{BarendregtG00}.
To begin we define the following.
\begin{definition}
    A sequent is a pair $\seq{\Gamma}{A}$ where $\Gamma$ is an (unordered) \emph{multiset} of formulas in $\langjust$ and $A$ is a formula in $\langjust$.
    The \emph{interpretation} of $\seq{\Gamma}{A}$ is the formula
    $$
        \form{\seq{\Gamma}{A}} \colonequals \BIGAND \Gamma \IMP A
    $$
\end{definition}
%

\begin{definition}
    The sequent calculus~$\LJ$ is given by the rules in \cref{fig:LJ}, and the sequent calculus $\LJ + \cut$ is the system which includes the cut rule 
     $$
                    \vliinf{\cut}{}
                    {
    	                    \seq{\Gamma, \Delta}{B}
    	                }
                    {
    	                    \seq{\Gamma}{A}
    	                }
                    {
    	                    \seq{\Delta, A, \dots, A}{B}
    	                }
     $$
    A \emph{proof} in $\LJ$ ($\LJ + \cut$ respectively) of a sequent $\seq{\Gamma}{A}$ is constructed as trees from these rules with root $\seq{\Gamma}{A}$ and leaves closed by axiomatic rules;
    we write $\seqprove{\seq{\Gamma}{A}}$ ($\seqprovecut{\seq{\Gamma}{A}}$ respectively).
\end{definition}

One of the objectives of this section is to prove the following.
\begin{theorem}
    The following are equivalent:
    \begin{enumerate}
        \item $\hprove{A}$;
        \item $\seqprovecut{\seq{}{A}}$;
        \item $\seqprove{\seq{}{A}}$.
    \end{enumerate}
\end{theorem}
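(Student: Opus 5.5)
We will prove the cycle of implications $1 \Rightarrow 2 \Rightarrow 3 \Rightarrow 1$, following the tour in \cref{fig:results}. The sequent calculus $\LJ$ of \cref{fig:LJ} is not reproduced above. We assume it is the expected one: an identity axiom, left and right rules for $\AND$ and $\IMP$, contraction, and, for the justification modality, right rules mirroring the introduction rules $\intrle{\boxrule[\ ]}$, $\intrle{\boxrule[\pbang{}]}$, $\intrle{\boxrule[\pdot{}{}]}$, $\intrle{\boxrule[\pl{}]}$, $\intrle{\boxrule[\pael{}]}$, $\intrle{\boxrule[\paer{}]}$, $\intrle{\boxrule[\pai{}{}]}$ of $\NDJ$, together with a left rule for $\just{t}{A}$ corresponding to $\jtax$ (from $\seq{\Gamma, A}{B}$ infer $\seq{\Gamma, \just{t}{A}}{B}$).

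\textbf{$1 \Rightarrow 2$.} We will not simulate the Hilbert system directly. Instead we use \cref{thm:ndsoundcomp} to pass from $\hprove{A}$ to $\utctx{}{A}$, and then translate $\NDJ$ derivations into $\LJ + \cut$ by induction on the derivation, proving the stronger statement that $\utctx{\Gamma}{A}$ implies $\seqprovecut{\seq{\Gamma}{A}}$. The cases are as follows.
\begin{itemize}
\item Introduction rules become the corresponding right rules.
\item Rule $\id$ becomes the (generalised) identity axiom, which we will need to derive for arbitrary formulas by induction on the formula.
\item Each elimination rule becomes a cut against a small derivation using the matching left rule. For instance, $\elimrle{\IMP}$ is simulated by cutting $\seq{\Gamma}{A \IMP B}$ against $\seq{\Gamma, A \IMP B}{B}$, where the latter is obtained by $\lr{\IMP}$ from $\seq{\Gamma}{A}$ and $\seq{\Gamma, B}{B}$. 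Likewise, $\elimrle{\boxrule[\ ]}$ is simulated by cutting against $\seq{\just{t}{A}}{A}$.
\end{itemize}
Afterwards we contract the duplicated context $\Gamma, \Gamma$ to $\Gamma$.

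\textbf{$2 \Rightarrow 3$.} This is the cut-elimination theorem, and it is the main obstacle. We will prove it Gentzen-style, using the multicut form of $\cut$ (the right premise carries $A, \dots, A$) to handle contraction. The induction is lexicographic on the cut formula's degree and the sum of the heights of the premises.

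The standard propositional principal cases go through as usual. The new principal cases pair a justification right rule with the $\jtax$-style left rule on $\just{t}{A}$. For example, suppose $\seq{\Gamma}{\just{\pdot{s}{t}}{B}}$ is obtained from premises $\seq{\Gamma}{\just{s}{(A \IMP B)}}$ and $\seq{\Gamma}{\just{t}{A}}$, and the other premise uses $\seq{\Delta, B}{C}$. We reduce this to cuts on $\just{s}{(A \IMP B)}$ and $\just{t}{A}$, followed by cuts on $A \IMP B$ and $A$.

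This is exactly where plain degree fails. The formula $\just{s}{(A \IMP B)}$ may be larger than $\just{\pdot{s}{t}}{B}$, because proof terms are compared and $A$ is unrelated to $B$. So the first thing we will try is a measure that ignores proof terms entirely: the modal depth plus connective count of the formula with every $\just{t}{\cdot}$ treated as a unary box, giving $\deg{\just{t}{A}} = \deg{A} + 1$. Under this measure, $\just{s}{(A \IMP B)}$ has degree $\deg{A} + \deg{B} + 2$, which can still exceed $\deg{B} + 1$. Since the extra cuts are on the right premise's side, we will instead proceed as follows.
\begin{enumerate}
\item Permute the right rule's premises upward until each is itself a box-introduction.
\item Use the observation that $\seq{\Gamma}{\just{t}{A}}$ with a cut-free proof yields a cut-free proof of $\seq{\Gamma}{A}$. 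This is an inversion lemma for the $\jtax$ left rule, proved by induction on the term $t$ and mirroring the $\NDJ$ reductions above.
\item Cut directly on the unboxed formula $B$, which has strictly smaller degree.
\end{enumerate}
So the key auxiliary lemma is: if $\seqprove{\seq{\Gamma}{\just{t}{A}}}$, then $\seqprove{\seq{\Gamma}{A}}$. Its $\pl{}$ case needs a substitution step replacing the hypothesis $\just{x}{A}$ by $A$, justified by a derived cut-free proof of $\seq{A}{\just{x}{A}}$ in the spirit of \cref{prop:typeformula}. The commutative cases are routine permutations.

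\textbf{$3 \Rightarrow 1$.} This is \cref{prop:soundness}: by induction on the cut-free proof, we show $\hprove{\form{\seq{\Gamma}{A}}}$ for each sequent, using the Deduction Theorem (\cref{thm:deduction}) and the justification axioms for the modal rules. Alternatively, it follows via \cref{lem:SCtoND} and \cref{thm:ndsoundcomp}. At the root, $\form{\seq{}{A}} = A$, which gives $\hprove{A}$.
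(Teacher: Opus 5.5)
Your directions $1 \Rightarrow 2$ and $3 \Rightarrow 1$ are fine. For $1 \Rightarrow 2$ you go through \cref{thm:ndsoundcomp} and the translation of \cref{lem:NDtoSC}, whereas the paper simulates the axioms directly in \cref{prop:seqcomplete}; either works. The gap is in $2 \Rightarrow 3$, in your key lemma that a cut-free proof of $\seq{\Gamma}{\just{t}{A}}$ yields a cut-free proof of $\seq{\Gamma}{A}$ ``by induction on $t$''. Its application case fails.

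Suppose the last rule is $\rr{\boxrule[\pdot{}{}]}$ with premises $\seq{\Gamma}{\just{s}{(A \IMP B)}}$ and $\seq{\Gamma}{\just{r}{A}}$. The induction hypothesis gives cut-free proofs of $\seq{\Gamma}{A \IMP B}$ and $\seq{\Gamma}{A}$. Combining them into a cut-free proof of $\seq{\Gamma}{B}$ amounts to eliminating a cut on $A \IMP B$ (or, after inverting $\rr \IMP$, on $A$). But $A$ is an arbitrary formula that does not occur in $\just{\pdot{s}{r}}{B}$. So its degree is bounded neither by $\deg{\pdot{s}{r}}$, which drives the lemma's induction, nor by the degree of the cut you are reducing, which drives your outer induction.

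The lemma's statement is true, since it follows from cut elimination, but it cannot be proved beforehand by this induction. Letting the stripped proof contain cuts does not rescue your version either: under a degree measure, those cuts on $A \IMP B$ can be larger than the cut being reduced. Permuting rules upward does not help, because the premises may come from $\rr{\boxrule[\ ]}$ applied to $\seq{\Gamma}{A \IMP B}$ and $\seq{\Gamma}{A}$, leaving exactly the same cut. The $\NDJ$ reduction you want to mirror, $\unbox{\boxapp{M}{N}} \rightsquigarrow \app{\unbox{M}}{\unbox{N}}$, is precisely the one the paper flags as creating detours of possibly higher degree.

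Your other cases really are cut-free. Replacing an antecedent $\just{x}{A}$ by $A$ is cut-free admissible, because only $\id$, $\lr{\boxrule[\ ]}$ and $\cont$ act on it, and $\seq{\Gamma', A}{\just{x}{A}}$ follows by $\rr{\boxrule[\ ]}$. Also $\rr \AND$ is invertible. So the problem is confined to $\rr{\boxrule[\pdot{}{}]}$.

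The missing idea is a measure that depends on the derivation, not only on the cut formula. The paper assigns each formula occurrence a rank relative to its derivation, with $\rank{\seq{\Gamma}{\ann{\just{\pdot{s}{r}}{B}}}}{\pi} = 1 + \rank{\seq{\Gamma}{\ann{\just{s}{(A \IMP B)}}}}{\pi_1} + \rank{\seq{\Gamma}{\ann{\just{r}{A}}}}{\pi_2}$, where $\pi_1, \pi_2$ derive the premises. Rank is always at least degree, so the invisible formula $A$ is accounted for.

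With this measure, the Stripping Lemma (\cref{lem:stripping}) is \emph{allowed} to introduce cuts on $A \IMP B$, $A$, $\just{x}{A}$ and $A \AND B$, provided all have rank strictly below the rank of $\just{t}{A}$. The stripped formula itself also gets strictly smaller rank. Cut elimination then runs by induction on cut-rank and the number of maximal-rank cuts, with height as the inner measure in the cut-reduction lemma. With any measure computed from the cut formula alone, the inversion step cannot be closed.
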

The direction 1~$\implies$~2 is a standard completeness argument shown in \cref{prop:seqcomplete}, 2~$\implies$~3 will follow from a cut-elimination argument (\cref{thm:cutelim}) and 3~$\implies$~1 follows by showing soundness of the rules in $\LJ$ (\cref{prop:soundness}).

As can be seen in \cref{fig:LJ}, the rules of~$\LJ$ are not \emph{analytic}, i.e.~some of the rules, namely~$\rr {\boxrule[\pdot{}{}]}$, $\rr{\boxrule[\pael{}]}$ and~$\rr{\boxrule[\paer{}]}$, do not satisfy the \emph{sub-formula} property.
Due to this, we are required to use a more intricate measure of a cut-formula to achieve cut-elimination.

\begin{figure}[t]
    \fbox
    {
        \begin{minipage}{.95\textwidth}
        \centering
            $
                \vlinf{\id}{}
                {
                    \seq{\Gamma, p}{p}
                }
                {}
                \qquad
                \vlinf{\id}{}
                {
                    \seq{\Gamma, \just{t}{A}}{\just{t}{A}}
                }
                {}
                \qquad
                \vlinf{\cont}{}
                {
                    \seq{\Gamma, A}{B}
                }
                {
                    \seq{\Gamma, A, A}{B}
                }
            $
            \\[1ex]
            $
                \vlinf{\lr \AND}{}
                {
                    \seq{\Gamma, A \AND B}{C}
                }
                {
                    \seq{\Gamma, A, B}{C}
                }
                \qquad
                \vliinf{\rr \AND}{}
                {
                    \seq{\Gamma}{A \AND B}
                }
                {
                    \seq{\Gamma}{A}
                }
                {
                    \seq{\Gamma}{B}
                }
           $
           \\[1ex]
           $
                \vliinf{\lr \IMP}{}
                {
                    \seq{\Gamma, A \IMP B}{C}
                }
                {
                    \seq{\Gamma, B}{C}
                }
                {
                    \seq{\Gamma}{A}
                }
                \qquad
                \vlinf{\rr \IMP}{}
                {
                    \seq{\Gamma}{A \IMP B}
                }
                {
                    \seq{\Gamma, A}{B}
                }
            $
            \\[1ex]
            $
                \vlinf{\lr {\boxrule[\ ]}}{}
                {
                    \seq{\Gamma, \just{t}{A}}{B}
                }
                {
                    \seq{\Gamma, A}{B}
                }
                \qquad
                \vlinf{\rr {\boxrule[\ ]}}{}
                {
                    \seq{\Gamma}{\just{x}{A}}
                }
                {
                    \seq{\Gamma}{A}
                }
                \qquad
                \vlinf{\rr {\boxrule[\pbang{}]}}{}
                {
                    \seq{\Gamma}{\just{\pbang{t}}{\just{t}{A}}}
                }
                {
                    \seq{\Gamma}{\just{t}{A}}
                }
            $
            \\[1ex]
            $
                \vliinf{\rr {\boxrule[\pdot{}{}]}}{}
                {
                    \seq{\Gamma}{\just{\pdot{s}{t}}{B}}
                }
                {
                    \seq{\Gamma}{\just{s}{(A \IMP B)}}
                }
                {
                    \seq{\Gamma}{\just{t}{A}}
                }
                \qquad
                \vlinf{\rr{\boxrule[\pl{}]}}{}
                {
                    \seq{\Gamma}{\just{\pl{t}}{(A \IMP B)}}
                }
                {
                    \seq{\Gamma, \just{x}{A}}{\just{t}{B}}
                }
            $
            \\[1ex]
            $
                \vlinf{\rr{\boxrule[\pael{}]}}{}
                {
                    \seq{\Gamma}{\just{\pael{t}}{A}}
                }
                {
                    \seq{\Gamma}{\just{t}{(A \AND B)}}
                }
                \quad
                \vlinf{\rr{\boxrule[\paer{}]}}{}
                {
                    \seq{\Gamma}{\just{\paer{t}}{B}}
                }
                {
                    \seq{\Gamma}{\just{t}{(A \AND B)}}
                }
                \qquad
                \vliinf{\rr{\boxrule[\pai{}{}]}}{}
                {
                    \seq{\Gamma}{\just{\pai{s}{t}}{(A \AND B)}}
                }
                {
                    \seq{\Gamma}{\just{s}{A}}
                }
                {
                    \seq{\Gamma}{\just{t}{B}}
                }
            $
        \end{minipage}
    }
    \caption{Sequent calculus $\LJ$}
    \label{fig:LJ}
    
\end{figure}

\subsection{Soundness and Completeness with respect to $\J$}
\begin{proposition}\label{prop:seqcomplete}
    $\hprove{A} \implies \seqprovecut{\seq{}{A}}$. 
\end{proposition}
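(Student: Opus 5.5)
We prove the more general statement that $\hprove[\Gamma]{A}$ implies $\seqprovecut{\seq{\Gamma'}{A}}$ for some finite multiset $\Gamma' \sset \Gamma$, by induction on the derivation of $\hprove[\Gamma]{A}$; taking $\Gamma = \emptyset$ gives the proposition. Before the induction we establish two auxiliary facts about $\LJ + \cut$. The first is \emph{generalised identity}: $\seqprovecut{\seq{\Gamma, B}{B}}$ holds for every formula $B$. We prove it by induction on $B$. Atoms and formulas $\just{t}{A}$ are instances of $\id$. For $B_1 \AND B_2$ we use $\lr\AND$ followed by $\rr\AND$, and for $B_1 \IMP B_2$ we use $\rr\IMP$ followed by $\lr\IMP$, each time closing the premises with the inductive hypothesis. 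The second fact is admissibility of \emph{weakening}: if $\seq{\Gamma}{A}$ is provable then so is $\seq{\Gamma, \Delta}{A}$. This follows by a routine induction, because every axiom and rule carries an arbitrary context $\Gamma$. We need it because $\cut$ and the two-premise rules share contexts.

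The base case $A \in \Gamma$ is generalised identity. The inductive case $\mp$ goes as follows. From $\seq{\Gamma_1}{A}$ and $\seq{\Gamma_2}{A \IMP B}$ we build $\seq{A \IMP B, \Gamma_1}{B}$ by $\lr\IMP$, with premises $\seq{\Gamma_1, B}{B}$ (generalised identity) and $\seq{\Gamma_1}{A}$. A $\cut$ on $A \IMP B$ against $\seq{\Gamma_2}{A \IMP B}$ then gives $\seq{\Gamma_1,\Gamma_2}{B}$. If we want to stay inside $\Gamma$ as a set, we apply contraction $\cont$ to duplicates.

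The remaining base case is the axiom instances. For the $\IPL$ axioms $\iplax{1}$ to $\iplax{5}$ we give the usual cut-free derivations from generalised identity, using $\rr\IMP$, $\lr\IMP$, $\lr\AND$ and $\rr\AND$. Each justification axiom is handled by applying $\rr\IMP$ the appropriate number of times and then the matching right box rule.
\begin{itemize}
\item For $\jk$, from $\seq{\just{s}{(A\IMP B)},\just{t}{A}}{\just{s}{(A\IMP B)}}$ and $\seq{\just{s}{(A\IMP B)},\just{t}{A}}{\just{t}{A}}$ (both $\id$), the rule $\rr{\boxrule[\pdot{}{}]}$ gives the consequent.
\item For $\jandi$ we use $\rr{\boxrule[\pai{}{}]}$. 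For $\jandel$ and $\jander$ we use $\rr{\boxrule[\pael{}]}$ and $\rr{\boxrule[\paer{}]}$ respectively. For $\jfax$ we use $\rr{\boxrule[\pbang{}]}$ on $\id$.
\item For $\jtax$, we apply $\lr{\boxrule[\ ]}$ to the generalised identity $\seq{A}{A}$.
\item For $\jimpi$, after $\rr\IMP$ we must prove $\seq{\just{x}{A}\IMP\just{t}{B}}{\just{\pl{t}}{(A\IMP B)}}$. We apply $\rr{\boxrule[\pl{}]}$ to reduce to $\seq{\just{x}{A}\IMP\just{t}{B}, \just{x}{A}}{\just{t}{B}}$. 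This follows by $\lr\IMP$ with premises $\seq{\just{x}{A}, \just{t}{B}}{\just{t}{B}}$ and $\seq{\just{x}{A}}{\just{x}{A}}$, both instances of $\id$.
\end{itemize}

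No step is a genuine obstacle, since $\cut$ is available. The only care needed is in the bookkeeping. The context-sharing in $\cut$, $\lr\IMP$ and the two-premise rules requires weakening or generalised identity with a padded context. In the $\jimpi$ case we must check that the variable $x$ is exactly the one required by $\rr{\boxrule[\pl{}]}$, and it is, since the axiom is stated with the same $x$. The $\mp$ case is the only place where $\cut$ is used, which is why the result lands in $\LJ+\cut$ rather than $\LJ$.
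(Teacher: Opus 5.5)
Your proposal is correct and follows the paper's proof: induction on the Hilbert derivation, explicit $\LJ$ derivations of each axiom (your $\jk$, $\jimpi$, $\jtax$ and $\jfax$ derivations match the paper's), and $\mp$ simulated by $\lr\IMP$ followed by a $\cut$ on $A \IMP B$. Your extra bookkeeping (general contexts, admissible weakening) is harmless but unnecessary: every consecution in a derivation of $\hprove{A}$ has empty context, and the paper's $\cut$ is context-splitting rather than context-sharing, so weakening is never actually used in your argument.
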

\begin{proof}
    Proceed by induction on $\hprove{A}$.

    For the base cases, we show a proof of the axioms.
    For the propositional axioms see~\cite{troelstra_basic_2000,mancosu_introduction_2021}.

    We highlight the following
    case  $\jk : \just{s}{(A \IMP B)} \IMP (\just{t}{A} \IMP \just{\pdot{s}{t}}{B})$:
    $$
        \vlderivation
        {
            \vlin{\rr \IMP}{}
            {
                \seq{}{\just{s}{(A \IMP B)} \IMP (\just{t}{A} \IMP \just{\pdot{s}{t}}{B})}
            }
            {
                \vlin{\rr \IMP}{}
                {
                    \seq{\just{s}{(A \IMP B)}}{\just{t}{A} \IMP \just{\pdot{s}{t}}{B}}
                }
                {
                    \vliin{\rr {\boxrule[\pdot{}{}]}}{}
                    {
                        \seq{\just{s}{(A \IMP B)}, \just{t}{A}}{\just{\pdot{s}{t}}{B}}
                    }
                    {
                        \vlin{\id}{}
                        {
                            \seq{\just{s}{(A \IMP B)}, \just{t}{A}}{\just{s}{(A \IMP B)}}
                        }
                        {
                            \vlhy{}
                        }
                    }
                    {
                        \vlin{\id}{}
                        {
                            \seq{\just{s}{(A \IMP B)}, \just{t}{A}}{\just{t}{A}}
                        }
                        {
                            \vlhy{}
                        }
                    }
                }
            }
        }
    $$

    \noindent
    For the remaining cases, see \cref{ex:axscproofs} in the Appendix.
    
    The inductive case of modus ponens is simulated using $\cut$ as expected.
\end{proof}

\begin{proposition}\label{prop:soundness}
    $\seqprove{\seq{\Gamma}{A}} \implies \hprove{\form{\seq{\Gamma}{A}}}$.
\end{proposition}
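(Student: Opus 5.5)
The plan is an induction on the height of the $\LJ$ proof of $\seq{\Gamma}{A}$, showing for each rule that if the interpretations of its premises are theorems of $\J$, then so is the interpretation of its conclusion. By the Deduction Theorem (\cref{thm:deduction}), $\hprove{\form{\seq{\Gamma}{A}}}$ is equivalent to $\hprove[\Gamma]{A}$ (reading $\Gamma$ as a set and using the $\AND$-axioms of $\IPL$ to split $\BIGAND\Gamma$). It is therefore cleaner to prove the stronger-looking but equivalent statement $\seqprove{\seq{\Gamma}{A}} \implies \hprove[\Gamma]{A}$. Alternatively, by \cref{thm:ndsoundcomp}, it suffices to show $\utctx{\Gamma}{A}$, which lets us reuse the rules of $\NDJ$ directly. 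I would take this second route.

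\textbf{Axiomatic and structural rules.} Both $\id$ rules are instances of $\id$ in $\NDJ$. For contraction, the multiset $\Gamma, A, A$ and $\Gamma, A$ give the same set of hypotheses in $\J$, so $\hprove[\Gamma,A,A]{B}$ is literally $\hprove[\Gamma,A]{B}$. In $\NDJ$ one checks admissibility of contraction by a routine induction, or simply passes through \cref{thm:ndsoundcomp}.

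\textbf{Right rules.} Each right rule of $\LJ$ is the identically-shaped introduction rule of $\NDJ$: $\rr\AND$, $\rr\IMP$, $\rr{\boxrule[\ ]}$, $\rr{\boxrule[\pbang{}]}$, $\rr{\boxrule[\pdot{}{}]}$, $\rr{\boxrule[\pl{}]}$, $\rr{\boxrule[\pael{}]}$, $\rr{\boxrule[\paer{}]}$ and $\rr{\boxrule[\pai{}{}]}$ match $\intrle{\AND}$, $\intrle{\IMP}$, $\intrle{\boxrule[\ ]}$, and so on. These cases are immediate.

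\textbf{Left rules.} These need a substitution (cut-admissibility) property of $\NDJ$: if $\utctx{\Gamma}{C}$ and $\utctx{\Gamma, C}{B}$, then $\utctx{\Gamma}{B}$. In $\J$ this is just the Deduction Theorem followed by $\mp$, so it is free via \cref{thm:ndsoundcomp}. With it:
\begin{itemize}
\item For $\lr\AND$, from $\utctx{\Gamma,A,B}{C}$ we obtain $A$ and $B$ from the hypothesis $A\AND B$ by $\elimrleleft{\AND}$ and $\elimrleright{\AND}$, then substitute. Weakening is needed to add $A\AND B$ to the context, and it is clearly admissible.
\item For $\lr\IMP$, from $\utctx{\Gamma}{A}$ and the hypothesis $A\IMP B$ we get $B$ by $\elimrle\IMP$, then substitute into $\utctx{\Gamma,B}{C}$.
\item For $\lr{\boxrule[\ ]}$, from the hypothesis $\just{t}{A}$ we get $A$ by $\elimrle{\boxrule[\ ]}$ (axiom $\jtax$), then substitute.
\end{itemize}

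The main obstacle is not any particular rule but bookkeeping. We must be careful that the multiset-versus-set discrepancy and weakening are handled, and that the final conversion from $\hprove[\Gamma]{A}$ to $\hprove{\BIGAND\Gamma \IMP A}$ is justified by repeated Deduction Theorem plus the $\IPL$ fact $(\BIGAND\Gamma\IMP A)$ being equivalent to the curried form. None of this should pose real difficulty, since $\J$ is closed under $\IPL$ reasoning.
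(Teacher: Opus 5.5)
Your proof is correct, but it is organised differently from the paper's. The paper checks each $\LJ$ rule directly in $\J$:
\begin{itemize}
\item the propositional rules by standard $\IPL$ reasoning;
\item each justification right rule by the matching justification axiom;
\item $\lr{\boxrule[\ ]}$ via $\jtax$;
\item $\rr{\boxrule[\ ]}$ via \cref{prop:typeformula}.
\end{itemize}

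You instead translate cut-free $\LJ$ derivations into $\NDJ$ derivations and then apply the $\Leftarrow$ direction of \cref{thm:ndsoundcomp}. In your translation the right rules are literally introduction rules, and each left rule becomes an elimination followed by a substitution. In effect you compose a version of \cref{lem:SCtoND} that ignores normality with \cref{thm:ndsoundcomp}.

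The $\Leftarrow$ direction of \cref{thm:ndsoundcomp} was itself proved by exactly this rule-by-rule axiomatic check, with \cref{prop:typeformula} used for $\intrle{\boxrule[\ ]}$. So the underlying content is the same, and the difference is in packaging:
\begin{itemize}
\item Your version gets the nine right rules for free and confines the real work to three small left-rule derivations.
\item The paper's version is self-contained. It avoids the weakening, contraction and multiset-versus-set bookkeeping you flag. It also keeps the one non-obvious dependency, on \cref{prop:typeformula}, visible rather than hidden inside \cref{thm:ndsoundcomp}.
\end{itemize}

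Two minor points:
\begin{itemize}
\item The substitution property you need holds inside $\NDJ$ directly, by $\intrle{\IMP}$ followed by $\elimrle{\IMP}$. There is no need to pass back through $\J$ and use both directions of \cref{thm:ndsoundcomp}.
\item $\langjust$ has no $\top$, so $\form{\seq{}{A}}$ must be read as $A$ when $\Gamma$ is empty. The paper also leaves this convention implicit.
\end{itemize}
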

\begin{proof}
    This proceeds by showing each rule is sound through axiomatic reasoning.
    The cases of the propositional rules are covered in~\cite{troelstra_basic_2000,mancosu_introduction_2021}.
    The justification rules follows from seeing a direct application of the justification axioms, except for $\rr {\boxrule[\ ]}$ for which we use \cref{prop:typeformula}.
\end{proof}

\subsection{Cut-elimination}
Our objective is to prove the following.
\begin{theorem}[Cut-elimination]\label{thm:cutelim}
    $\seqprovecut{\seq{\Gamma}{A}} \implies \seqprove{\seq{\Gamma}{A}}$.
\end{theorem}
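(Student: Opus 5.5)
We eliminate cuts in the usual Gentzen fashion, using the multicut form of $\cut$ already given. It suffices to show that a single topmost cut whose premisses are cut-free proofs can be removed; iterating this removes all cuts. The main lemma is proved by a lexicographic induction on $(\text{rank of the cut formula}, \height{\pi_1}+\height{\pi_2})$, where $\pi_1$ proves $\seq{\Gamma}{A}$ and $\pi_2$ proves $\seq{\Delta, A, \dots, A}{B}$.

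The rank cannot simply be $\deg{A}$. The rules $\rr{\boxrule[\pdot{}{}]}$, $\rr{\boxrule[\pael{}]}$ and $\rr{\boxrule[\paer{}]}$ lack the subformula property, and principal reductions on $\just{t}{A}$ lead to cuts on formulas that are not subformulas of $\just{t}{A}$. For instance, consider $\just{\pdot{s}{t}}{B}$, introduced on the right by $\rr{\boxrule[\pdot{}{}]}$ and on the left by $\lr{\boxrule[\ ]}$. This reduces to cuts on $B$, which we obtain by cutting $A \IMP B$ and $A$ against an $\lr\IMP$-inference. These in turn come from the premisses $\just{s}{(A\IMP B)}$ and $\just{t}{A}$, after their own unboxing.

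The key idea is to measure a boxed formula by the \emph{proof term} as well as by its body. Define $\rank{t}{A}$ so that:
\begin{itemize}
\item for a box-free $A$ it is $\deg{A}$;
\item $\rank{}{\just{x}{A}} = \rank{}{A}+1$;
\item $\rank{}{\just{\pbang{t}}{\just{t}{A}}} = \rank{}{\just{t}{A}}+1$;
\item $\rank{}{\just{\pdot{s}{t}}{B}} = \max{\rank{}{\just{s}{(A\IMP B)}}}{\rank{}{\just{t}{A}}}+1$;
\item for $\pi_l,\pi_r$, $\langle\cdot,\cdot\rangle$ and $\lambda$, the rank is one more than that of the boxed formulas appearing in the premisses.
\end{itemize}
The formula $A$ in the $\pdot{}{}$ case is not determined by $\just{\pdot{s}{t}}{B}$. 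We therefore fix the rank along the proof by annotating each boxed formula with the rule that introduced it, using the $\ann{\cdot}$ notation; alternatively, the rank can be taken as a maximum over the witnessing derivation. We must then check two things. First, the rank of every formula arising in a principal reduction is strictly smaller: in the $\pdot{}{}$ case, $A\IMP B$, $A$ and $B$ have rank below the cut formula because $A \IMP B$ sits inside $\just{s}{(A\IMP B)}$. Second, this measure is stable under the permutations of cuts over other rules.

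The cases then proceed as follows.
\begin{enumerate}
\item Axiom cases with $\id$ are immediate.
\item Commutative cases, where the cut formula is not principal in one premiss, are handled by permuting the cut upwards, which lowers height at equal rank. Because this is a multicut, contraction $\cont$ on the cut formula is absorbed without difficulty.
\item Principal propositional cases ($\AND$, $\IMP$) are standard, as in~\cite{troelstra_basic_2000}.
\item Principal boxed cases have $\just{t}{A}$ introduced on the right by some $\rr{\boxrule[\cdot]}$ rule and on the left by $\lr{\boxrule[\ ]}$. Here we first prove an auxiliary \emph{inversion} lemma: if $\seq{\Gamma}{\just{t}{A}}$ is cut-free provable, then $\seq{\Gamma}{A}$ is cut-free provable by a derivation with cuts of rank below $\rank{}{\just{t}{A}}$ only. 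This holds by induction on $t$, mirroring the reductions in $\laJ$:
\begin{itemize}
\item $\rr{\boxrule[\ ]}$ and $\rr{\boxrule[\pbang{}]}$ give the premiss directly;
\item $\rr{\boxrule[\pdot{}{}]}$ gives $A\IMP B$ and $A$, which we combine by a cut on $A\IMP B$ against $\lr\IMP$ with $\id$;
\item $\rr{\boxrule[\pl{}]}$ gives $\seq{\Gamma,\just{x}{A}}{B}$, from which $\seq{\Gamma,A}{B}$ follows by cutting with $\seq{A}{\just{x}{A}}$, and then $\rr\IMP$;
\item the $\pael{}$, $\paer{}$ and $\pai{}{}$ cases are analogous.
\end{itemize}
All of these auxiliary cuts have smaller rank, so the induction hypothesis removes them, and the principal cut is replaced by a cut on $A$.
\end{enumerate}

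The main obstacle is defining the rank so that the auxiliary cuts in the $\pdot{}{}$ and $\pl{}$ inversions are genuinely smaller. The $\pdot{}{}$ case produces a cut on $A\IMP B$, which does not occur in $\just{\pdot{s}{t}}{B}$. The $\pl{}$ case cuts on $\just{x}{A}$, which is introduced by $\rr{\boxrule[\ ]}$. I would first try the derivation-annotated rank above, checking in particular that $\rank{}{\just{x}{A}} \le \rank{}{A}+1 < \rank{}{\just{\pl{t}}{(A\IMP B)}}$. If annotations prove too fragile, I would fall back to the multiset ordering of the boxed subformulas together with their term depth.
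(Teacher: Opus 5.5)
\textbf{Verdict: genuine gap.} Your architecture matches the paper's: a derivation-dependent rank, an inversion lemma mirroring the $\laJ$ detour reductions (the paper's Stripping Lemma, \cref{lem:stripping}), and a principal case that trades the cut on $\just{t}{A}$ for a cut on $A$. The gap is the measure itself. You leave it open, and the clauses you do give fail.

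The inequality that matters is not about the auxiliary cuts. It is that the stripped body $A$ must have rank strictly below the rank of the cut on $\just{t}{A}$. Read your $\pai{}{}$ and $\pl{}$ clauses the way you wrote the $\pdot{}{}$ clause, as a maximum plus one. Take the cut formula $\just{\pai{x}{y}}{(A \AND B)}$, with $A$ and $B$ box-free of degree $2$, each boxed by $\rr{\boxrule[\ ]}$. Your rank is $\max{3}{3}+1=4$, which is below even the degree of the formula. Inversion returns $\seq{\Gamma}{A \AND B}$, and the new cut on the box-free $A \AND B$ has rank $\deg{A \AND B}=5$. The formula $\just{\pl{y}}{(A \IMP B)}$ behaves the same way. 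So the lexicographic induction on (rank, height) does not decrease in the principal boxed case. Your clauses also assign nothing to antecedent occurrences, or to non-boxed formulas that contain boxes.

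The paper assigns a rank to every occurrence via the last rule, using additive clauses:
\begin{itemize}
\item $1+r_1+r_2$ for $\rr{\boxrule[\pdot{}{}]}$;
\item $2+r_1+r_2$ for $\rr{\boxrule[\pl{}]}$ and $\rr{\boxrule[\pai{}{}]}$;
\item $\deg{t}+r$ for $\lr{\boxrule[\ ]}$;
\item the usual sums for $\AND$ and $\IMP$.
\end{itemize}
With these, rank dominates degree and $\gid$ has ranks equal to degrees (\cref{lem:rankfacts}). This is exactly what makes condition~1 of the Stripping Lemma hold. For example, in the $\pl{}$ case the stripped $A \IMP B$ has rank at most $\deg{A}+r_t$. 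The formula $\just{\pl{t}}{(A\IMP B)}$ has rank $2+r_x+r_t$, where $r_x$ and $r_t$ are the ranks of $\just{x}{A}$ and $\just{t}{B}$ in the premiss, and $r_x > \deg{A}$.

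Second, because the rank depends on the derivation, your step ``the induction hypothesis removes the auxiliary cuts, and the principal cut is replaced by a cut on $A$'' needs bookkeeping you do not supply. The new cut on $A$ is measured in two new derivations: the left one obtained after the auxiliary cuts are removed, and the derivation of $\seq{\Gamma, \Delta, A}{B}$ produced by the height induction on the remaining copies of $\just{t}{A}$. Your induction hypothesis only says a cut can be removed. It says nothing about the ranks of the formulas in the resulting end-sequent, which is precisely what the comparison needs.

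The paper never eliminates the auxiliary cuts inside the reduction step:
\begin{itemize}
\item The Stripping Lemma only bounds $\cutrank{\tilde\pi} \leq \max{\cutrank{\pi}}{r-1}$, with $r$ the rank of $\just{t}{A}$. It also guarantees that the rank of $A$ drops and that ranks in $\Gamma$ are unchanged.
\item The Cut-reduction Lemma only lowers the cut-rank, and carries ``ranks in $\Gamma, \Delta$ unchanged'' as an explicit invariant.
\item The theorem then follows by induction on the cut-rank and the number of cuts of maximal rank, not by removing topmost cuts outright.
\end{itemize}

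Finally, your inversion has to go by induction on the derivation, not on $t$. The premisses of $\rr{\boxrule[\pdot{}{}]}$, $\rr{\boxrule[\pai{}{}]}$ and so on may end in left rules, $\cont$ or $\id$, where $t$ does not shrink. The $\id$ case $\seq{\Gamma, \just{t}{A}}{\just{t}{A}}$ is closed by $\lr{\boxrule[\ ]}$ and $\gid$.
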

We will show this through a cut-elimination argument.
To prove cut-elimination, we define the rank of a formula contained in a derivation.
We assume the reader is familiar with the standard notion of the degree of a formula and proof term (i.e.~the number of connectives contained in it), denoted $\deg{A}$ or $\deg{t}$ for some formula~$A$ or proof term~$t$.
This is similar to the measure used for $\iLP$~\cite{artemov_unified_2002}.

\begin{definition}[Rank of formula]
    The rank of a formula $A$ occurring in a sequent with derivation $\pi$, $\rank{\seq{\Gamma, \ann{A}}{B}}{\pi}$ or $\rank{\seq{\Gamma}{\ann{A}}}{\pi}$ is defined inductively on the derivation~$\pi$.
    There are 17 cases in total to consider, see \cref{def:rank} in the Appendix for all cases. We highlight some cases here.
    \begin{enumerate}

        \item If the preceding rule of $\pi$ is $\rr {\boxrule[\ ]}$ or $\rr {\boxrule[\pbang{}]}$:
        \[
            \vlderivation
            {
                \vlin{\rr {\boxrule[\ ]}}{}
                {
                    \seq{\Gamma}{\just{x}{A}}
                }
                {
                    \vlhtr{\pi'}{\seq{\Gamma}{A}}
                }
            }
            \qquad
            \vlderivation
            {
                \vlin{\rr {\boxrule[\pbang{}]}}{}
                {
                    \seq{\Gamma}{\just{\pbang{t}}{\just{t}{A}}}
                }
                {
                    \vlhtr{\pi'}{\seq{\Gamma}{\just{t}{A}}}
                }
            }
        \]
        then $\rank{\seq{\Gamma}{\ann{\just{x}{A}}}}{\pi} = 1 + \rank{\seq{\Gamma}{\ann{A}}}{\pi'}$ or $\rank{\seq{\Gamma}{\ann{\just{\pbang{t}}{\just{t}{A}}}}}{\pi} = \deg{\pbang{t}} + \rank{\seq{\Gamma}{\ann{\just{t}{A}}}}{\pi'}$.

        \item If the preceding rule of $\pi$ is $\rr {\boxrule[\pdot{}{}]}$:
        \[
            \vlderivation
            {
                \vliin{\rr {\boxrule[\pdot{}{}]}}{}
                {
                    \seq{\Gamma}{\just{\pdot{s}{t}}{B}}
                }
                {
                    \vlhtr{\pi_1}{\seq{\Gamma}{\just{s}{(A \IMP B)}}}
                }
                {
                    \vlhtr{\pi_2}{\seq{\Gamma}{\just{t}{A}}}
                }
            }
        \]
        then $\rank{\seq{\Gamma}{\ann{\just{\pdot{s}{t}}{B}}}}{\pi} = 1 + \rank{\seq{\Gamma}{\ann{\just{s}{(A \IMP B)}}}}{\pi_1} + \rank{\seq{\Gamma}{\ann{\just{t}{A}}}}{\pi_2}$.

        \item If the preceding rule of $\pi$ is $\rr{\boxrule[\pl{}]}$:
        \[
            \vlderivation
            {
                \vlin{\rr{\boxrule[\pl{}]}}{}
                {
                    \seq{\Gamma}{\just{\pl{t}}{(A \IMP B)}}
                }
                {
                    \vlhtr{\pi'}{\seq{\Gamma, \just{x}{A}}{\just{t}{B}}}
                }
            }
        \]
        then $\rank{\seq{\Gamma}{\ann{\just{\pl{t}}{(A \IMP B)}}}}{\pi} = 2 + \rank{\seq{\Gamma, \ann{\just{x}{A}}}{\just{t}{B}}}{\pi'} + \rank{\seq{\Gamma, \just{x}{A}}{\ann{\just{t}{B}}}}{\pi'}$.

    \end{enumerate}
\end{definition}
We are then able to define the \emph{cut-rank} of a formula and height of a deriviation.
\begin{definition}[Cut-rank and height]
    Given a cut rule instance
    \[
        \vlderivation
        {
            \vliin{\cut}{}
            {
                \seq{\Gamma, \Delta}{B}
            }
            {
                \vlhtr{\pi_1}{\seq{\Gamma}{A}}
            }
            {
                \vlhtr{\pi_2}{\seq{\Delta, A, \dots, A}{B}}
            }
        }
    \]
    the \emph{cut-rank} of the formula $A$ is given by $\max{\rank{\seq{\Gamma}{\ann{A}}}{\pi_1}, \rank{\seq{\Delta, \ann{A}, \dots, A}{B}}{\pi_2}, \dots}{\rank{\seq{\Delta, A, \dots, \ann{A}}{B}}{\pi_2}}$.

    Given a derivation $\pi$, its cut-rank, $\cutrank{\pi}$, is the maximal of the cut-ranks of formulas occurring in the derivation, if $\pi$ is cut-free, then the cut-rank is 0.

    The \emph{height} of a derivation $\pi$, $\height{\pi}$, is defined as standard, c.f.~\cite{troelstra_basic_2000,mancosu_introduction_2021}.
\end{definition}

Before proceeding with the cut-elimination, we will make use of the following technical results.
\begin{lemma}\label{lem:rankfacts}
    \begin{enumerate}
        \item For any derivation $\pi$ of a sequent $\seq{\Gamma, A}{B}$, we have $\deg{A} \leq \rank{\seq{\Gamma, \ann{A}}{B}}{\pi}$ and $\deg{B} \leq \rank{\seq{\Gamma, A}{\ann{B}}}{\pi}$;
        \item The \emph{generalised} identity rule
        \[
            \vlinf{\gid}{}
            {
                \seq{\Gamma, A}{A}
            }
            {
            }
        \]
        is \emph{admissible}, i.e.~there exists a cut-free proof $\pi$ of $\seq{\Gamma, A}{A}$ where the ranks of the formulas contained in the sequent in $\pi$ is equal to the degree of the formulas.
    \end{enumerate}
\end{lemma}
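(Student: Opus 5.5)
Both parts go by induction, but on different things: part 1 by induction on the derivation $\pi$, part 2 by induction on the structure of the formula $A$.

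\textbf{Part 1.} I would proceed by induction on $\pi$, following the clauses of the rank definition (\cref{def:rank}). At an axiom ($\id$) or at a formula that is passive in the last rule, the rank is either set to the degree or inherited from the premise, so the inequality holds directly or by the induction hypothesis. For principal formulas the rank adds a constant together with ranks of the premise formulas, and these premise formulas are exactly the immediate components of the principal formula. Some typical cases:
\begin{itemize}
\item For $\rr{\boxrule[\ ]}$: $\deg{\just{x}{A}} = 1 + \deg{A} \le 1 + \rank{\seq{\Gamma}{\ann{A}}}{\pi'}$.
\item For $\rr{\boxrule[\pbang{}]}$: $\deg{\just{\pbang{t}}{\just{t}{A}}} = \deg{\pbang{t}} + \deg{\just{t}{A}}$, up to the constant counting the outer modality. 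I would need the appendix clause to contain that constant; otherwise the degree convention should absorb it, since $\deg{\pbang{t}} \ge 1 + \deg{t}$.
\item For $\rr{\boxrule[\pl{}]}$: $\deg{\just{\pl{t}}{(A\IMP B)}}$ is bounded by $2 + \deg{\just{x}{A}} + \deg{\just{t}{B}}$, because the premise contains $x$, $A$, $t$ and $B$.
\end{itemize}

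\textbf{The non-analytic rules.} The main obstacle is $\rr{\boxrule[\pdot{}{}]}$, $\rr{\boxrule[\pael{}]}$ and $\rr{\boxrule[\paer{}]}$, where the premises contain formulas absent from the conclusion. There the inequality goes the favourable way. For instance, $\deg{\just{\pdot{s}{t}}{B}} \le 1 + \deg{\just{s}{(A\IMP B)}} + \deg{\just{t}{A}}$, since the premises contain $s$, $t$ and $B$. So the induction hypothesis applied to both premises suffices. I would check each of the 17 clauses of \cref{def:rank} in this way, and I expect only routine degree arithmetic.

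\textbf{Part 2.} By induction on $A$ I would build a cut-free derivation of $\seq{\Gamma, A}{A}$ together with the rank claim:
\begin{itemize}
\item \emph{Atoms and formulas $\just{t}{B}$:} use the axiom $\id$ directly. Its rank clause gives degree.
\item \emph{$A = B\AND C$:} apply $\lr\AND$, then $\rr\AND$ over the derivations of $\seq{\Gamma,B,C}{B}$ and $\seq{\Gamma,B,C}{C}$ obtained from the induction hypothesis.
\item \emph{$A = B\IMP C$:} apply $\rr\IMP$, then $\lr\IMP$ with premises $\seq{\Gamma,B,C}{C}$ and $\seq{\Gamma,B}{B}$, both from the induction hypothesis.
\end{itemize}

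For the rank claim, I would compute from the connective clauses of \cref{def:rank} (presumably rank $= 1 +$ the ranks of the components) that each occurrence has rank equal to its degree, given that the components already do. Because modal formulas are closed by $\id$, no further justification rule is ever needed in this construction.
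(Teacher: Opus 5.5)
Your proposal follows the same route as the paper. Part 1 goes by induction on $\pi$ through the clauses of the rank definition. Part 2 goes by induction on $A$ (the paper says on $\deg{A}$): atoms and justification formulas are closed with $\id$, and $\AND$ and $\IMP$ are expanded via $\lr{\AND}$/$\rr{\AND}$ and $\rr{\IMP}$/$\lr{\IMP}$.

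The one soft spot is your hedge at $\rr{\boxrule[\pbang{}]}$. The tight clauses force the convention $\deg{\just{t}{A}} = \deg{t} + \deg{A}$, with no separate count for the modality. For example, in $\seq{\just{t}{p}}{p}$, derived by $\lr{\boxrule[\ ]}$ from the axiom $\seq{p}{p}$, the formula $\just{t}{p}$ has rank exactly $\deg{t}$. Under this convention the $\rr{\boxrule[\pbang{}]}$ case holds with equality. Your appeal to $\deg{\pbang{t}} \geq 1 + \deg{t}$ would not repair the inequality if such an extra constant were present.
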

\begin{proof}
    1.~is proved by induction on the derivation~$\pi$.
    2.~is proved by induction on $\deg{A}$.
\end{proof}

The key objective for cut-elimination is to prove that cuts of a higher rank can be reduced to using cuts of smaller rank.
\begin{lemma}[Cut-reduction lemma]\label{lem:cut-reduction}
    Given a cut rule instance
    \[
        \vlderivation
        {
            \vliin{\cut}{}
            {
                \seq{\Gamma, \Delta}{B}
            }
            {
                \vlhtr{\pi_1}{\seq{\Gamma}{A}}
            }
            {
                \vlhtr{\pi_2}{\seq{\Delta, A, \dots, A}{B}}
            }
        }
    \]
    where $\cutrank{\pi_1}$ and $\cutrank{\pi_2}$ is strictly less than the cut-rank of the formula $A$, then there is a proof $\pi'$ of $\seq{\Gamma, \Delta}{B}$ where:
    \begin{enumerate}
        \item $\cutrank{\pi'}$ is strictly less than the cut-rank of the formula $A$;
        \item the rank of formulas in $\Gamma$ and $\Delta$ remains unchanged.
    \end{enumerate}
\end{lemma}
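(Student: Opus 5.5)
We argue by induction on $\height{\pi_1} + \height{\pi_2}$, following the standard Gentzen-style argument (cf.~\cite{troelstra_basic_2000}), but measuring cuts by the cut-rank of the cut formula rather than its degree. Write $r$ for the cut-rank of $A$. The case split is on the last rules $\rho_1$ of $\pi_1$ and $\rho_2$ of $\pi_2$.

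\textbf{Axiom cases.} If either premise is an axiom, the cut is removed directly.
\begin{itemize}
\item If $\pi_1$ ends in $\id$ with $A$ principal, then $A \in \Gamma$. We obtain the conclusion from $\pi_2$ by weakening and contractions; weakening is height- and rank-preserving by an easy induction.
\item If $\pi_2$ ends in $\id$ with some occurrence of $A$ principal, we use $\pi_1$ followed by weakening.
\item If $\pi_2$ ends in $\id$ and no occurrence of $A$ is principal, then the conclusion is itself an axiom.
\end{itemize}

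\textbf{Commutative cases.} Suppose $A$ is not principal in $\rho_1$, which happens for any left rule or $\cont$ in $\pi_1$. Then we permute the cut upward into the premise(s) of $\rho_1$ and apply the induction hypothesis; this is possible because the height decreases. The same applies if no occurrence of $A$ is principal in $\rho_2$. For $\rho_2$ this includes all right rules, in particular the non-analytic rules $\rr{\boxrule[\pdot{}{}]}$, $\rr{\boxrule[\pael{}]}$, $\rr{\boxrule[\paer{}]}$ and $\rr{\boxrule[\pl{}]}$, where the context $\Delta$, and for $\rr{\boxrule[\pl{}]}$ also the extra $\just{x}{A'}$, is simply carried along.

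Two checks are needed here. First, the rank of a formula occurrence in the premises is no larger than in the conclusion, since every clause of the rank definition is inherited from sub-derivations. Hence the rank of $A$ in the premises is bounded by $r$, and the new cut is again a cut on $A$ of rank at most $r$ whose sub-derivations have cut-rank $< r$, so the induction hypothesis applies. Second, re-applying $\rho_i$ below the new cut leaves the ranks of formulas in $\Gamma,\Delta$ unchanged, which gives clause 2. For $\cont$ on $A$ in $\pi_2$, we absorb the contraction into the multi-occurrence cut; this is exactly why the cut rule allows $A,\dots,A$.

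\textbf{Principal cases.} Here $A$ is principal in $\rho_1$ and in every $\rho_2$-occurrence. We first use the induction hypothesis to cut away the non-principal copies of $A$ in the premise of $\rho_2$; this is a cut of rank $\le r$ on a shorter derivation. This leaves a single principal occurrence to reduce.
\begin{itemize}
\item For $\AND$ and $\IMP$, the reduction is the usual one to cuts on $A_1$ and $A_2$. Since the rank clauses for $\rr\AND$ and $\rr\IMP$ add at least $1$ to the ranks of the immediate subformula occurrences, these new cuts have cut-rank $< r$.
\item The genuinely new case is $\rho_2 = \lr{\boxrule[\ ]}$ on $\just{t}{C}$ against a right box rule $\rho_1$.
  \begin{itemize}
  \item For $\rr{\boxrule[\ ]}$ and $\rr{\boxrule[\pbang{}]}$, we cut directly with the premise of $\rho_1$. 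Its rank is $r - 1$, respectively $r - \deg{\pbang{t}}$, and hence smaller.
  \item For $\rr{\boxrule[\pdot{}{}]}$ with $C = B'$, we have premises for $\just{s}{(A'\IMP B')}$ and $\just{t}{A'}$. We first derive $\seq{\Gamma}{A'\IMP B'}$ and $\seq{\Gamma}{A'}$ by cutting each premise against $\lr{\boxrule[\ ]}$ applied to $\gid$ (\cref{lem:rankfacts}). Then we combine them via $\lr\IMP$ and a cut on $A'\IMP B'$.
  \item The cases $\pael{}$, $\paer{}$ and $\pai{}{}$ are analogous.
  \item For $\rr{\boxrule[\pl{}]}$ we must turn $\seq{\Gamma,\just{x}{A'}}{\just{t}{B'}}$ into $\seq{\Gamma}{A'\IMP B'}$. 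We do this by cutting against $\seq{A'}{\just{x}{A'}}$, obtained from $\gid$ and $\rr{\boxrule[\ ]}$, and against $\lr{\boxrule[\ ]}$ on $\just{t}{B'}$, and then applying $\rr\IMP$.
  \end{itemize}
\end{itemize}

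\textbf{Main obstacle.} The hard step is verifying that in these box principal cases every newly introduced cut has cut-rank strictly below $r$. The new cut formulas, for instance $A'\IMP B'$ in the $\pdot{}{}$ case, are not subformulas, and their degree can exceed $\deg{\just{\pdot{s}{t}}{B'}}$. This is precisely why rank is defined additively, e.g.\ $1+\rank{\cdot}{\pi_1}+\rank{\cdot}{\pi_2}$ for $\pdot{}{}$ and $2+\dots$ for $\pl{}$.

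I would prove an auxiliary claim: if $\seq{\Gamma}{\just{u}{C}}$ has rank $k$ in $\pi$, then cutting against $\lr{\boxrule[\ ]}\circ\gid$ yields a derivation of $\seq{\Gamma}{C}$ in which $C$ has rank $\le k$ and all cuts have cut-rank $<k$. The proof goes by induction along the right box rules ending $\pi$, using that $\gid$ derivations have ranks equal to degrees (\cref{lem:rankfacts}.2), so that the $\gid$ side never dominates.

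With this claim, each new cut formula has rank bounded by a proper summand of $r$, and the $+1$ or $+2$ in the rank clauses provides the strict decrease. Finally, the ranks of formulas in $\Gamma$ and $\Delta$ are untouched, because we only ever rebuild derivations of the side formulas from their original sub-derivations.
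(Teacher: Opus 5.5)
Your skeleton matches the paper's. Both argue by induction on $\height{\pi_1}+\height{\pi_2}$, import the standard cases from the literature, and handle the one new case ($\lr{\boxrule[\ ]}$ principal in $\pi_2$) by turning $\seq{\Gamma}{\just{t}{C}}$ into $\seq{\Gamma}{C}$ and cutting on $C$. Your auxiliary claim plays the role of the Stripping Lemma. The difference is organisation: the paper strips $\pi_1$ whatever its last rule, whereas you first commute until $\rho_1$ is a right box rule.

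In your organisation the auxiliary claim is not even needed. That is fortunate, because as stated it cannot hold. Literally cutting against $\lr{\boxrule[\ ]}$ over $\gid$ is a cut of rank exactly $k$, and $\pi$ may already contain cuts of rank at least $k$ (compare the paper's condition~3). But every premise of $\rho_1$ has rank $k<r$. So that literal cut is already an admissible new cut, and it leaves $C$ with rank $\deg{C}$.

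The real gap is in the rank bookkeeping. The cut-rank of a new cut is the maximum of the ranks on both sides. Your justification (``bounded by a proper summand of $r$'') only controls the side coming from $\pi_1$.

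\textbf{Where this fails.} In the $\pdot{}{}$ case, ``combine them via $\lr\IMP$ and a cut on $A'\IMP B'$'' reads as feeding the $\lr{\boxrule[\ ]}$-premise $\seq{\Delta,B'}{B}$ into $\lr\IMP$. Let $r_2$ be the rank of the principal $\just{\pdot{s}{t}}{B'}$ in $\pi_2$, and let $\rho\ge\deg{A'}$ be the rank of $A'$ in the right premise of $\lr\IMP$. Then the right-hand rank of $A'\IMP B'$ is $1+(r_2-\deg{\pdot{s}{t}})+\rho$. This is at least $r_2+\deg{A'}-\deg{s}-\deg{t}$. It therefore reaches $r$ whenever $r_2=r$ and $\deg{A'}\ge\deg{s}+\deg{t}$, for example when $s,t$ are variables. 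The same happens in the $\pael{}$ and $\paer{}$ cases if you feed the $\rho_2$-premise into $\lr\AND$.

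\textbf{The repair} is the paper's construction. First derive $\seq{\Gamma}{B'}$ by cutting against $\lr\IMP$ applied to two instances of $\gid$. This way the non-subformulas $A'\IMP B'$ and $A'$ only meet derivations in which rank equals degree. Only then cut $B'$ against the $\lr{\boxrule[\ ]}$-premise, where its rank is $r_2-\deg{\pdot{s}{t}}<r$.

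\textbf{A further dependence.} Run the same two-sided check in your $\rr{\boxrule[\ ]}$ case, with $r_2$ now the rank of the principal $\just{x}{A}$ in $\pi_2$. The right-hand rank of the new cut formula $A$ is then $r_2-\deg{x}$. So you need $\deg{x}\ge 1$. The paper's own argument silently depends on this too, so you should fix the degree convention explicitly. With these repairs your argument goes through and differs from the paper only in organisation.
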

Condition~1 ensures that we are indeed reducing the cut-rank of the derivation;
condition~2 ensures that the ranks of other formulas which could depend on the formulas in $\Gamma$ are unchanged, more importantly not increased, in this derivation and other derivations for which~$\pi'$ is used to replace~$\pi$.

The proof will proceed by induction on~$\height{\pi_1} + \height{\pi_2}$  in the usual way, where we will look at the preceding rules in the derivations of $\pi_1$ and $\pi_2$.
Most of these cases are standard and can be found in~\cite{troelstra_basic_2000,mancosu_introduction_2021}.
The case that is required to be considered is the following:
\[
    \vlderivation
    {
        \vliin{\cut}{}
        {
            \seq{\Gamma, \Delta}{B}
        }
        {
            \vlhtr{\pi_1}{\seq{\Gamma}{\just{t}{A}}}
        }
        {
            \vlin{\lr {\boxrule[\ ]}}{}
            {
                \seq{\Delta, \just{t}{A}, \just{t}{A}, \dots, \just{t}{A}}{B}
            }
            {
                \vlhtr{\pi_2'}{\seq{\Delta, A, \just{t}{A}, \dots, \just{t}{A}}{B}}
            }
        }
    }
\]
To deal with the sequent $\seq{\Gamma}{\just{t}{A}}$, we utilise the following.
\begin{lemma}[Stripping Lemma]\label{lem:stripping}
    Given a derivation $\pi$ of $\seq{\Gamma}{\just{t}{A}}$, there exists a derivation $\tilde{\pi}$ of $\seq{\Gamma}{A}$ such that
    \begin{enumerate}
        \item $\rank{\seq{\Gamma}{\ann{A}}}{\tilde{\pi}} < \rank{\seq{\Gamma}{\ann{\just{t}{A}}}}{\pi}$;
        \item the ranks of formulas in $\Gamma$ are the same in both $\pi$ and $\tilde{\pi}$;
        \item $\cutrank{\tilde{\pi}} \leq \max{\cutrank{\pi}}{\rank{\seq{\Gamma}{\ann{\just{t}{A}}}}{\pi} - 1}$.
    \end{enumerate}
\end{lemma}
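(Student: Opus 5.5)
We argue by induction on the height of $\pi$, distinguishing cases on the last rule applied; the three conditions are proved together. Two kinds of last rule can produce a succedent $\just{t}{A}$. First, \emph{left rules and contraction} ($\lr\AND$, $\lr\IMP$, $\lr{\boxrule[\ ]}$, $\cont$), which leave the succedent untouched. Second, \emph{right justification rules} ($\id$, $\rr{\boxrule[\ ]}$, $\rr{\boxrule[\pbang{}]}$, $\rr{\boxrule[\pdot{}{}]}$, $\rr{\boxrule[\pl{}]}$, $\rr{\boxrule[\pael{}]}$, $\rr{\boxrule[\paer{}]}$, $\rr{\boxrule[\pai{}{}]}$), which introduce the box in the succedent. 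A $\cut$ concluding in $\seq{\Gamma}{\just{t}{A}}$ is handled like the left rules: we strip its right premise, whose succedent is $\just{t}{A}$.

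\textbf{Left rules.} We apply the induction hypothesis to the premise(s) whose succedent is $\just{t}{A}$ and then reapply the same rule. For $\lr\IMP$ only the left premise $\seq{\Gamma,B}{\just{t}{A}}$ is stripped; the other premise is kept unchanged. We assume (this must be checked against \cref{def:rank}) that in these cases the rank of the succedent is inherited from the relevant premise. Condition~1 then transfers directly. Condition~2 holds because the antecedent ranks are computed from unchanged subderivations together with the premise's antecedent ranks, and those are unchanged by the induction hypothesis. For condition~3, the cut instances are the old ones (of unchanged rank by condition~2) together with those coming from the hypothesis; in the $\cut$ case the new cut formula has the same ranks on both sides as before.

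\textbf{Right rules.} Here the work is done by reconstructing $A$ directly.
\begin{itemize}
\item $\id$: $\just{t}{A}$ occurs in $\Gamma$. We take $\tilde\pi$ to be $\gid$ for $A$ followed by $\lr{\boxrule[\ ]}$. By \cref{lem:rankfacts} the succedent then has rank $\deg{A} < \deg{\just{t}{A}}$, and the latter equals the rank in $\pi$.
\item $\rr{\boxrule[\ ]}$ and $\rr{\boxrule[\pbang{}]}$: $\tilde\pi$ is the premise $\pi'$, and the rank drops by $1$ or by $\deg{\pbang t}$ respectively.
\item $\rr{\boxrule[\pdot{}{}]}$: we strip both premises by the induction hypothesis, obtaining $\seq{\Gamma}{A'\IMP B}$ and $\seq{\Gamma}{A'}$. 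We combine them by a cut on $A'\IMP B$ against $\seq{\Gamma,A'\IMP B}{B}$, the latter obtained by $\lr\IMP$ from $\gid$ instances, and then contract $\Gamma$. The new cut formula has rank at most $\rank{\cdot}{\pi_1}-1$ on the left and its degree on the right, and both are below $1+\rank{\cdot}{\pi_1}+\rank{\cdot}{\pi_2}$. The succedent $B$ gets its degree via the $\gid$ branch, which is strictly smaller than the original rank.
\item $\rr{\boxrule[\pael{}]}$, $\rr{\boxrule[\paer{}]}$: we strip, then cut against $\seq{A\AND B}{A}$ (resp.\ $B$).
\item $\rr{\boxrule[\pai{}{}]}$: we strip both premises and apply $\rr\AND$ directly; no cut is needed.
\end{itemize}

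\textbf{Main obstacle: $\rr{\boxrule[\pl{}]}$.} The premise is $\pi'$ of $\seq{\Gamma,\just{x}{A}}{\just{s}{B}}$, and the induction hypothesis strips it to $\seq{\Gamma,\just{x}{A}}{B}$. We must then replace the antecedent $\just{x}{A}$ by $A$. We do this by cutting with $\seq{A}{\just{x}{A}}$, built from $\gid$ and $\rr{\boxrule[\ ]}$, whose succedent has rank $1+\deg{A}$. The cut formula $\just{x}{A}$ has rank on the other side equal to $\rank{\seq{\Gamma,\ann{\just{x}{A}}}{\just{s}{B}}}{\pi'}$, which the induction hypothesis keeps unchanged. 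This is exactly why the rank clause for $\rr{\boxrule[\pl{}]}$ adds both ranks from $\pi'$ plus $2$: together with \cref{lem:rankfacts}(1), it makes the cut rank $<\rank{\cdot}{\pi}$, so condition~3 holds. We finish with $\rr\IMP$, which gives $A\IMP B$ a rank bounded by the sum of the component ranks, plus possibly $1$, and this is strictly below $2+\ldots$. This bookkeeping, and in particular checking that the rank recorded for the conclusion of the new cut does not exceed the old one, is the delicate step, and we expect it to dictate the precise constants in \cref{def:rank}.
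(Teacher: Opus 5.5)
Your proposal is correct and is essentially the paper's proof. It uses the same induction on $\pi$, and it handles $\cont$, the left rules and $\cut$ the same way, by stripping the premise that carries $\just{t}{A}$. Every right case uses the same construction as the paper, including the cut against $\seq{A}{\just{x}{A}}$ followed by $\rr\IMP$ for $\rr{\boxrule[\pl{}]}$.

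There is one slip, in the $\rr{\boxrule[\pdot{}{}]}$ case: $\seq{\Gamma, A'\IMP B}{B}$ does not come from $\gid$ instances alone. You can fix it in either of two ways, and the bound holds in both.

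\begin{itemize}
\item Do as the paper does: first cut the stripped derivation $\tilde{\pi}_2$ of $\seq{\Gamma}{A'}$ against $\seq{A', A'\IMP B}{B}$. This cut has rank below the original, and afterwards $A'\IMP B$ really does have rank equal to its degree.
\item Use $\tilde{\pi}_2$ directly as the right premise of $\lr\IMP$. Then $A'\IMP B$ gets rank $1+\deg{B}+r$, where $r$ is the rank of $A'$ in $\tilde{\pi}_2$, rather than its degree. This is still below $1+\rank{\cdot}{\pi_1}+\rank{\cdot}{\pi_2}$.
\end{itemize}
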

For proof of the Stripping Lemma, go to \cref{sec:apppt} in the Appendix.
We use the sequent obtained here to cut against the smaller formula $A$.
Condition~1 ensures that the cut-rank is smaller;
condition~2 similarly ensures that the ranks of other formulas which could depend on the formulas in $\Gamma$ are unchanged;
condition~3 ensures that the rank of new cuts introduced isn't greater than or equal to~$\rank{\seq{\Gamma}{\ann{\just{t}{A}}}}{\pi}$.

Now we can conclude with the following proof.
\begin{proof}[Proof of the Cut-reduction Lemma (\cref{lem:cut-reduction})]
    We proceed by induction on $\height{\pi_1} + \height{\pi_2}$.
    The cut case mentioned above
    is reduced in the following way.
    By the Stripping Lemma (\cref{lem:stripping}), there exists a proof $\tilde{\pi_1}$ of $\seq{\Gamma}{A}$ with the necessary conditions.
    We obtain
    \[
        \vlderivation
        {
            \vliq{\cont}{}
            {
                \seq{\Gamma, \Delta}{B}
            }
            {
                \vliin{\cut}{}
                {
                    \seq{\Gamma, \Gamma, \Delta}{B}
                }
                {
                    \vlhtr{\tilde{\pi_1}}{\seq{\Gamma}{A}}
                }
                {
                    \vliin{\cut}{\text{I.H.}}
                    {
                        \seq{\Gamma, \Delta, A}{B}
                    }
                    {
                        \vlhtr{\pi_1}{\seq{\Gamma}{\just{t}{A}}}
                    }
                    {
                        \vlhtr{\pi_2'}{\seq{\Delta, A, \just{t}{A}, \dots, \just{t}{A}}{B}}
                    }
                }
            }
        }
    \]
    where the cut on $\just{t}{A}$ is using the inductive hypothesis on height.
    The multiple lines indicate multiple (potentially zero) uses of $\cont$.
\end{proof}

Now we can conclude with the following.
\begin{proof}[Proof of the Cut-elimination Theorem (\cref{thm:cutelim})]
    Given a proof $\pi$ of $\seq{\Gamma}{A}$ in $\LJ + \cut$.
    The argument proceeds by induction on the lexicographic ordering on 
    $\cutrank{\pi}$ against the number of cuts in $\pi$ of $\cutrank{\pi}$ using the Cut-reduction Lemma (\cref{lem:cut-reduction}) as standard.
\end{proof}
\subsection{Normalisation}
In this section, we show normalisation.
For convenience, we will prove normalisation for the untyped system~$\NDJ$ but the same argument holds for the typed system~$\laJ$.
The objective is to prove the following.
\begin{theorem}[Normalisation]\label{thm:normalisation}
    $\utctx{\Gamma}{A} \implies \nutctx{\Gamma}{A}$.
\end{theorem}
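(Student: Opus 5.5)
The plan follows the tour in \cref{fig:results}: pass from $\NDJ$ to $\LJ + \cut$, eliminate cuts, and read the cut-free sequent proof back as a normal natural deduction proof. Concretely, we establish two translation lemmas, \cref{lem:NDtoSC} and \cref{lem:SCtoND}. The first says that if $\utctx{\Gamma}{A}$ then $\seqprovecut{\seq{\Gamma}{A}}$. The second says that if $\seqprove{\seq{\Gamma}{A}}$ then $\nutctx{\Gamma}{A}$. Normalisation then follows by composing \cref{lem:NDtoSC}, \cref{thm:cutelim} and \cref{lem:SCtoND}. Alternatively, \cref{lem:NDtoSC} could be replaced by \cref{thm:ndsoundcomp}, \cref{thm:deduction} and \cref{prop:seqcomplete}, followed by cut-elimination and left rules to strip the interpretation $\form{\cdot}$. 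The direct translation is cleaner, so we use it.

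For \cref{lem:NDtoSC} we proceed by induction on the $\NDJ$ derivation.
\begin{itemize}
\item Introduction rules map directly to the corresponding right rules of $\LJ$, since $\intrle{\boxrule[\pl{}]}$ matches $\rr{\boxrule[\pl{}]}$ and similarly for the others.
\item $\id$ is handled by $\gid$ from \cref{lem:rankfacts}.
\item Each elimination rule becomes a cut against a small cut-free derivation. For $\elimrle{\IMP}$ we cut $\seq{\Gamma}{A \IMP B}$ against $\seq{A \IMP B, A}{B}$, which is obtained from $\lr\IMP$ and $\gid$. We then cut with $\seq{\Gamma}{A}$ and finish with contractions.
\item $\elimrleleft{\AND}$ and $\elimrleright{\AND}$ use $\seq{A\AND B}{A}$ via $\lr\AND$ and $\gid$.
\item $\elimrle{\boxrule[\ ]}$ uses $\seq{\just{t}{A}}{A}$ via $\lr{\boxrule[\ ]}$ and $\gid$.
\end{itemize}
Since the multiset context $\Gamma$ is shared in $\NDJ$, some weakening is needed. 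Weakening should be admissible in $\LJ$ because both kinds of $\id$ carry arbitrary contexts, and we verify this by a routine height-preserving induction.

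For \cref{lem:SCtoND} we proceed by induction on the cut-free $\LJ$ derivation, proving the standard strengthened statement. Every cut-free proof of $\seq{\Gamma}{A}$ yields a normal $\NDJ$ proof in which each formula of $\Gamma$ is used only at the top of a chain of eliminations (a ``main branch''). Equivalently, we show that a hypothesis $C\in\Gamma$ can be replaced by any normal derivation whose end-segment consists of eliminations ending in $C$ without creating a detour.
\begin{itemize}
\item Right rules become introduction rules; they stay normal because they sit below everything.
\item Left rules are translated by substituting an elimination into the hypothesis. For $\lr\IMP$, the assumption $B$ in the normal proof of $\seq{\Gamma,B}{C}$ is replaced by $\elimrle{\IMP}$ applied to the hypothesis $A \IMP B$ and the normal proof of $\seq{\Gamma}{A}$.
\item $\lr\AND$ and $\lr{\boxrule[\ ]}$ replace $A$ by $\elimrleleft{\AND}$, $\elimrleright{\AND}$ or $\elimrle{\boxrule[\ ]}$ applied to the hypothesis.
\item Contraction identifies the two hypotheses.
\end{itemize}

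The main obstacle is showing that these substitutions introduce no detour. The inserted derivation ends in an elimination whose major premise is a hypothesis. Substituting it for an assumption leaf places an elimination-headed derivation where an assumption stood. An assumption leaf is followed either by an elimination, which continues an elimination chain and so is not an introduction–elimination pair, or by an introduction, which is also harmless. The bookkeeping needed is the invariant that in normal proofs, major premises of eliminations are never conclusions of introductions, so replacing leaves by elimination chains rooted at hypotheses preserves this. The justification rules need extra care because $\elimrle{\boxrule[\ ]}$ eliminates $\just{t}{A}$, which may be introduced by any of the seven justification introduction rules. However, $\lr{\boxrule[\ ]}$ only ever yields eliminations applied to hypotheses, so the same invariant covers these cases. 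The typed system $\laJ$ is then handled by \cref{prop:jlcurryhoward}, since the translation there preserves the shape of derivations.
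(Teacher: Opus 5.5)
Your proposal is correct and follows the paper's proof exactly. You translate the $\NDJ$ derivation into $\LJ + \cut$ via \cref{lem:NDtoSC}, apply \cref{thm:cutelim}, and read the cut-free proof back as a normal $\NDJ$ derivation via \cref{lem:SCtoND}; the paper treats introductions, $\id$ (via $\gid$), eliminations, and left rules in the same way you do. (A small aside: \cref{lem:NDtoSC} needs no weakening, since cut followed by contraction already restores the shared context, whereas the $\lr{\IMP}$ case of \cref{lem:SCtoND} does need $\NDJ$ weakening, which is trivially admissible.)
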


We have the following translation of proofs of $\NDJ$ into $\LJ + \cut$.
\begin{lemma}\label{lem:NDtoSC}
    $\utctx{\Gamma}{A} \implies \seqprovecut{\seq{\Gamma}{A}}$.
\end{lemma}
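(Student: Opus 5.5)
We prove the statement by induction on the derivation of $\utctx{\Gamma}{A}$ in $\NDJ$, translating each natural deduction rule into a derivation fragment of $\LJ + \cut$. Throughout, introduction rules are simulated by the corresponding right rules, and elimination rules by a $\cut$ against a small cut-free derivation built from the matching left rule and the generalised identity $\gid$ (\cref{lem:rankfacts}, part 2).

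\textbf{Base case and introduction rules.} The base case $\id$ gives $\seq{\Gamma, A}{A}$, which is exactly an instance of $\gid$. Each introduction rule of $\NDJ$ has a literal counterpart in $\LJ$ with the same premises and conclusion:
\begin{itemize}
\item $\intrle{\IMP}$ and $\intrle{\AND}$ become $\rr\IMP$ and $\rr\AND$;
\item $\intrle{\boxrule[\ ]}$ and $\intrle{\boxrule[\pbang{}]}$ become $\rr{\boxrule[\ ]}$ and $\rr{\boxrule[\pbang{}]}$;
\item $\intrle{\boxrule[\pdot{}{}]}$ and $\intrle{\boxrule[\pl{}]}$ become $\rr{\boxrule[\pdot{}{}]}$ and $\rr{\boxrule[\pl{}]}$;
\item $\intrle{\boxrule[\pael{}]}$, $\intrle{\boxrule[\paer{}]}$ and $\intrle{\boxrule[\pai{}{}]}$ become $\rr{\boxrule[\pael{}]}$, $\rr{\boxrule[\paer{}]}$ and $\rr{\boxrule[\pai{}{}]}$.
\end{itemize}
So in these cases we simply apply the induction hypothesis to the premises and then the matching right rule.

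\textbf{Elimination rules.} Here we combine the induction hypothesis with a $\cut$.
\begin{itemize}
\item For $\elimrle{\IMP}$, the induction hypothesis gives $\seq{\Gamma}{A \IMP B}$ and $\seq{\Gamma}{A}$. We build $\seq{\Gamma, A \IMP B}{B}$ by $\lr\IMP$ from $\seq{\Gamma, B}{B}$ (by $\gid$) and $\seq{\Gamma}{A}$. We then cut on $A \IMP B$ to get $\seq{\Gamma, \Gamma}{B}$ and contract with $\cont$ to $\seq{\Gamma}{B}$.
\item For $\elimrleleft{\AND}$, we derive $\seq{A \AND B}{A}$ by $\lr\AND$ from $\seq{A, B}{A}$ (by $\gid$), cut it against the induction hypothesis $\seq{\Gamma}{A \AND B}$, and obtain $\seq{\Gamma}{A}$. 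The case $\elimrleright{\AND}$ is symmetric.
\item For $\elimrle{\boxrule[\ ]}$, we derive $\seq{\just{t}{A}}{A}$ by $\lr{\boxrule[\ ]}$ from $\seq{A}{A}$ (by $\gid$), and cut against $\seq{\Gamma}{\just{t}{A}}$.
\end{itemize}
Because the $\cut$ rule carries separate contexts $\Gamma$ and $\Delta$, in each case we either take $\Delta$ empty or apply contraction afterwards, so that the conclusion has context exactly $\Gamma$.

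\textbf{Main obstacle.} The only point needing care is bookkeeping rather than any deep step. The calculus $\LJ$ has no explicit weakening, so the auxiliary derivations must be built with the right context; using $\gid$ with the context already included handles this, since $\gid$ admits an arbitrary context. The multiset contexts of $\NDJ$ must also match those of the sequents, which contraction $\cont$ resolves. The translation places no constraints on cut-ranks, so \cref{lem:rankfacts} is needed only for the admissibility of $\gid$.
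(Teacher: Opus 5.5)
Your proposal is correct and follows the paper's proof. The paper also proceeds by induction on the $\NDJ$ derivation: $\id$ is simulated by $\gid$, introduction rules by the matching right rules, and elimination rules by a $\cut$ against a small cut-free derivation built from a left rule and $\gid$. Your $\elimrle{\boxrule[\ ]}$ case is exactly the one the paper displays, and your $\elimrle{\IMP}$ construction (a $\gid$-built $\seq{\Gamma, B}{B}$ plus the minor premise into $\lr\IMP$, then one cut and contractions) correctly fills in a case the paper leaves implicit.
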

\begin{proof}
    This is a routine proof by induction on~$\utctx{\Gamma}{A}$.
    The $\id$ rule is simulated with $\gid$,
    introduction rules are simulated with $\mathsf{r}$ rules of $\LJ$ 
    and elimination rules are simulated with a cut rule.
    We highlight the translation for the $\elimrle{\boxrule[\ ]}$ rule:
    \[
        \vlderivation
        {
            \vlin{\elimrle{\boxrule[\ ]}}{}
            {
                \utctx{\Gamma}{A}
            }
            {
                \vlhtr{}{\utctx{\Gamma}{\just{t}{A}}}
            }
        }
        \quad
        \rightsquigarrow
        \quad
        \vlderivation
        {
            \vliin{\cut}{}
            {
                \seq{\Gamma}{A}
            }
            {
                \vlhtr{}{\seq{\Gamma}{\just{t}{A}}}
            }
            {
                \vlin{\lr{\boxrule[\ ]}}{}
                {
                    \seq{\just{t}{A}}{A}
                }
                {
                    \vlin{\gid}{}
                    {
                        \seq{A}{A}
                    }
                    {
                        \vlhy{}
                    }
                }
            }
        }
        \qedhere
    \]
\end{proof}

Conversely, the following is a translation of proofs of $\LJ + \cut$ in normal derivations of~$\NDJ$.
\begin{lemma}\label{lem:SCtoND}
    $\seqprove{\seq{\Gamma}{A}} \implies \nutctx{\Gamma}{A}$.
\end{lemma}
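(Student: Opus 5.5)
We proceed by induction on the cut-free derivation of $\seq{\Gamma}{A}$ in $\LJ$, proving a strengthened statement: every cut-free sequent proof yields a normal $\NDJ$ proof which is moreover in the standard \emph{normal form} shape. Each branch consists of a (possibly empty) sequence of elimination rules applied to an assumption (an $\id$ leaf), followed by introduction rules only. Equivalently, we track two kinds of normal derivations: those whose final rule is an elimination whose main branch ends in an assumption (``neutral''), and arbitrary normal ones. Only neutral derivations may be used as the major premise of an elimination. This invariant lets us avoid creating detours when simulating left rules.

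The right rules are immediate. $\id$ (both atomic and $\just{t}{A}$) is the $\NDJ$ rule $\id$. $\rr\AND$, $\rr\IMP$ and all the justification right rules $\rr{\boxrule[\ ]}$, $\rr{\boxrule[\pbang{}]}$, $\rr{\boxrule[\pdot{}{}]}$, $\rr{\boxrule[\pl{}]}$, $\rr{\boxrule[\pael{}]}$, $\rr{\boxrule[\paer{}]}$, $\rr{\boxrule[\pai{}{}]}$ have exact introduction counterparts in $\NDJ$ with identical premises. Applying an introduction to normal derivations keeps them normal, since the new introduction is at the root and is not followed by an elimination. Contraction is handled by identifying the two copies of $A$ in the multiset context: every $\id$ leaf using either copy uses the single one.

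The left rules require substitution. For $\lr\AND$, $\lr\IMP$ and $\lr{\boxrule[\ ]}$ we use a lemma saying that if $\nutctx{\Gamma, A}{C}$ and $A$ is derivable by a \emph{neutral} derivation $D$ from $\Gamma'$, then replacing every $\id$ leaf for $A$ by $D$ (weakened to the larger context) gives a normal derivation of $\utctx{\Gamma,\Gamma'}{C}$. This holds because the grafted derivation ends in an elimination, so no introduction-then-elimination pattern is created at the graft point. The lemma itself is a routine induction on the normal derivation, together with admissibility of weakening.

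With this lemma the left rules go as follows. For $\lr{\boxrule[\ ]}$, the derivation $\utctx{\just{t}{A}}{A}$ obtained by $\id$ followed by $\elimrle{\boxrule[\ ]}$ is neutral; substituting it for $A$ in the premise gives $\nutctx{\Gamma,\just{t}{A}}{B}$. For $\lr\AND$, we substitute the neutral derivations $\id;\elimrleleft{\AND}$ and $\id;\elimrleright{\AND}$ for $A$ and $B$. For $\lr\IMP$, the induction hypothesis on the right premise gives a normal derivation of $A$ from $\Gamma$. We then form the neutral derivation of $B$ by $\elimrle{\IMP}$ with major premise the $\id$ leaf $A\IMP B$ and minor premise that normal derivation. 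This is neutral because the minor premise may be arbitrary, and we substitute it into the left premise.

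The step needing most care is this substitution lemma together with keeping the neutral/normal invariant consistent. In particular, we must check that eliminations of the justification modality, i.e. $\elimrle{\boxrule[\ ]}$ applied to an assumption $\just{t}{A}$, never sit directly below a justification introduction. This holds because their major premise is always neutral. Since no justification elimination other than $\elimrle{\boxrule[\ ]}$ exists, the argument does not depend on the non-analytic right rules; they only ever appear as introductions.
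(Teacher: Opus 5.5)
Your proposal is correct and takes essentially the same route as the paper. Right rules become introductions at the root, and $\lr{\boxrule[\ ]}$ is handled by replacing each $\id$ leaf for the stripped formula with $\id$ followed by $\elimrle{\boxrule[\ ]}$, which cannot create a detour because the grafted piece ends in an elimination. Your neutral-substitution lemma makes explicit the same mechanism for $\lr\AND$ and $\lr\IMP$, which the paper delegates to the standard propositional treatment; there you also need admissible contraction to merge the duplicated contexts.
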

\begin{proof}
    This is by induction on~$\seqprove{\seq{\Gamma}{A}}$.
    The propositional cases are covered in~\cite{sorensen_lectures_2006}.

    The $\mathsf{r}$~rules are simulated by introduction rules at the root of the derivation meaning a detour is not introduced.
    For the $\lr {\boxrule[\ ]}$ rule
    the inductive hypothesis gives a normal proof of $\utctx{\Gamma, B}{A}$.
    Translating this derivation into $\utctx{\Gamma, \just{t}{B}}{A}$ requires replacing the following $\id$ instances in the derivation:
    \[
        \vlinf{\id}{}
        {
            \utctx{\Gamma', B}{B}
        }
        {}
        \quad
        \rightsquigarrow
        \quad
        \vlderivation
        {
            \vlin{\elimrle{\boxrule[\ ]}}{}
            {
                \utctx{\Gamma', \just{t}{B}}{B}
            }
            {
                \vlin{\id}{}
                {
                    \utctx{\Gamma', \just{t}{B}}{\just{t}{B}}
                }
                {
                    \vlhy{}
                }
            }
        }
    \]
    and replacing the instance of the $B$ on the left being replaced by $\just{t}{B}$   
    noting that this will not introduce any detours hence also being a normal proof.
\end{proof}

\begin{proof}[Proof of Normalisation (\cref{thm:normalisation})]
    Given $\utctx{\Gamma}{A}$, apply \cref{lem:NDtoSC} to get $\seqprovecut{\seq{\Gamma}{A}}$.
    Apply the Cut-elimination Theorem (\cref{thm:cutelim}) to get $\seqprove{\seq{\Gamma}{A}}$.
    Finally apply \cref{lem:SCtoND} to get $\nutctx{\Gamma}{A}$.
\end{proof}

\section{Conclusion and Future Work}
We have introduced $\J$, a justification logic in which the proof terms of the modality are exactly the $\lambda$-terms of $\IPL$, rather than an encoding of Hilbert-style combinatory proofs as in the Logic of Proofs.
We developed both its untyped and typed natural deduction calculus along with its proof theory to achieve a normalisation result.
We have shown that $\J$ is able to self-internalise and reason about its own proof terms.
Additionally, these proof terms are able to relate to the term calculus of $\laJ$ and vice-versa. 
The results are visualised as a `tour' in \cref{fig:results}.

For future work, it is a natural question to show strong normalisation.
Methods of reducibility~\cite{girard_proofs_1993} should be appropriate. 
Extending these results to the full language of $\IPL$ including~$\BOT$ and~$\OR$ will also be appropriate.

We believe methods of embedding computational proofs into the justification logic could work for other systems such as the $\lambda\mu$-calculus~\cite{sorensen_lectures_2006}, linear logic~\cite{guerrini_proof_2004} and modal $\lambda$-calculi~\cite{kavvos_many_2016}.

Another direction is understanding $\J$ through semantics.
Justification logic can be interpreted into valuation semantics~\cite{mkrtychev_models_1997} -- in the intuitionistic setting, this can be understood through possible world semantics where results in~\cite{marti_intutionistic_2016,marin_intuitionistic_2026,KasterovicG20,kasterovic_logic_2022} should provide a good base.
On the arithmetic side, the logic of proofs ($\LP$) enjoys a connection to Peano Arithmetic ($\PA$), however the intuitionistic logic of proofs $\iLP$ does not have a complete interpretation into Heyting Arithmetic ($\HA$)~\cite{artemov_basic_2007,dashkov_arithmetical_2011}. 
Potentially, $\J$ could have a complete connection through the view of realisability.

There is a question on whether there is a meaningful connection to the underlying basic modal logic which is an extension of $\mathsf{iS4}$ with the axiom $(\BOX A \IMP \BOX B) \IMP \BOX (A \IMP B)$ which has been studied in the literature~\cite{litak_constructive_2014,van_der_giessen_uniform_2022}  through a realisation theorem, that is taking each theorem of the modal logic into a theorem of $\J$ by replacing each $\BOX$ occurrence with precisely one proof term. 

\bibliography{bibliography}

\appendix
\section{Appendix for \cref{sec:JL}}\label{sec:appjl}
\begin{proof}[Proof of \cref{thm:internalisation}]
    We proceed by induction on $\hprove A$.
    For the base cases we consider when $A$ is an axiom instance.
    
    Case $\iplax{1} : A \IMP (B \IMP A)$. 
    We have
    $$
        \hprove[\just{x}{A}, \just{y}{B}]{\just{x}{A}}
    $$
    applying the Deduction Theorem (\cref{thm:deduction}) and $\jimpi$ twice, we get
    $$
        \hprove{\just{\pl{\pl[y]{x}}}{A}}
    $$

    Case $\iplax{2} : (A \IMP (B \IMP C)) \IMP (A \IMP B) \IMP (A \IMP C)$.
    We have
    $$
        \hprove[\just{x}{(A \IMP (B \IMP C))}   ,   \just{y}{(A \IMP B)}     ,     \just{z}{A}]
        {
            \just{x}{(A \IMP (B \IMP C))}
        }
    $$
    $$
        \hprove[\just{x}{(A \IMP (B \IMP C))}   ,   \just{y}{(A \IMP B)}     ,     \just{z}{A}]
        {
            \just{y}{(A \IMP B)}
        }
    $$
    $$
        \hprove[\just{x}{(A \IMP (B \IMP C))}   ,   \just{y}{(A \IMP B)}     ,     \just{z}{A}]
        {
            \just{z}{A}
        }
    $$
    Applying $\jk$ we get
    $$
        \hprove[\just{x}{(A \IMP (B \IMP C))}   ,   \just{y}{(A \IMP B)}     ,     \just{z}{A}]
        {
            \just{\pdot{x}{z}}{(B \IMP C)}
        }
    $$
    and
    $$
        \hprove[\just{x}{(A \IMP (B \IMP C))}   ,   \just{y}{(A \IMP B)}     ,     \just{z}{A}]
        {
            \just{\pdot{y}{z}}{B}
        }
    $$
    and then apply $\jk$ on the above
    $$
        \hprove[\just{x}{(A \IMP (B \IMP C))}   ,   \just{y}{(A \IMP B)}     ,     \just{z}{A}]
        {
            \just{\pdot{(\pdot{x}{z})}{(\pdot{y}{z})}}{C}
        }
    $$
    Applying the Deduction Theorem (\cref{thm:deduction}) and $\jimpi$ several times, we get
    $$
        \hprove
        {
            \just{\pl{\pl[y]{\pl[z]{\pdot{(\pdot{x}{z})}{(\pdot{y}{z})}}}}}{((A \IMP (B \IMP C)) \IMP ((A \IMP B) \IMP (A \IMP C)))}
        }
    $$

    Case $\iplax{3} : A \IMP B \IMP (A \AND B)$.
    We have
    $$
        \hprove[ \just{x}{A}    , \just{y}{B}]
        {
            \just{x}{A}
        }
    $$
    and
    $$
        \hprove[ \just{x}{A}    , \just{y}{B}]
        {
            \just{y}{B}
        }
    $$
    Propositional reasoning with $\jandi$ gives
    $$
        \hprove[ \just{x}{A}    , \just{y}{B}]
        {
            \just{\pai{x}{y}}{(A \AND B)}
        }
    $$
    Applying the Deduction Theorem (\cref{thm:deduction}) and $\jimpi$ several times, we get
    $$
        \hprove
        {
            \just{\pl\pl[y]\pai{x}{y}}{(A \IMP B \IMP (A \AND B))}
        }
    $$

    Case $\iplax{4} : (A \AND B) \IMP A$.
    Using the $\jandel$ axiom we have
    $$
        \hprove
        {
            \just{x}{(A \AND B)}
            \IMP
            \just{\pael{x}}{A}
        }
    $$
    By the $\jimpi$ axiom, we get
    $$
        \hprove
        {
            \just{\pl{\pael{x}}}
            {
            ((A \AND B)
            \IMP
            A)
            }
        }
    $$
    The Case $\iplax{5} : (A \AND B) \IMP B$ follows from a similar argument.

    Case $A$ is an instance of the $\jk ,\jimpi, \jandi , \jandel , \jander ,\jfax$, we have
    $$
        A \colonequals \just{t_1}{A_1} \IMP \just{t_2}{A_2} \IMP \dots \IMP \just{t_n}{A_n} 
    $$
    for some proof terms $t_1, \dots, t_n$ and formulas $A_1, \dots, A_n$.
    For each $i = 1, \dots, n$ we have
    $$
        \hprove[\just{x_1}{\just{t_1}{A_1}}, \just{x_2}{\just{t_2}{A_2}}, \dots \just{x_{n-1}}{\just{t_{n-1}}{A_{n-1}}}]
        {\just{x_i}{\just{t_i}{A_i}}}
    $$
    Using the $\jtax$ axiom, we get
    $$
        \hprove[\just{x_1}{\just{t_1}{A_1}}, \just{x_2}{\just{t_2}{A_2}}, \dots \just{x_{n-1}}{\just{t_{n-1}}{A_{n-1}}}]
        {\just{t_i}{A_i}}
    $$
    Using the axiom $A$ and $\mp$ repeatedly, we get
    $$
        \hprove[\just{x_1}{\just{t_1}{A_1}}, \just{x_2}{\just{t_2}{A_2}}, \dots \just{x_{n-1}}{\just{t_{n-1}}{A_{n-1}}}]
        {
            \just{t_n}{A_n}
        }
    $$
    Using the $\jfax$ axiom, we have
    $$
        \hprove[\just{x_1}{\just{t_1}{A_1}}, \just{x_2}{\just{t_2}{A_2}}, \dots \just{x_{n-1}}{\just{t_{n-1}}{A_{n-1}}}]
        {
            \just{\pbang{t_n}}{\just{t_n}{A_n}}
        }
    $$
    Applying the Deduction Theorem (\cref{thm:deduction}) and $\jimpi$ several times, we get
    $$
        \hprove
        {
            \just{\pl[x_1]{\pl[x_2]{ \dots \pl[x_{n-1}}}]{\pbang{t_n}}}{(\just{t_1}{A_1} \IMP \just{t_2}{A_2} \IMP \dots \IMP \just{t_n}{A_n} )}
        }
    $$

    Case $\jtax : \just{t}{A} \IMP A$.
    Using the $\jtax$ axiom we get
    $$
        \hprove
        {
            \just{x}{\just{t}{A}}
            \IMP
            \just{t}{A}
        }
    $$
    By the $\jimpi$ axiom, we get
    $$
        \hprove
        {
            \just{\pl{t}}{(\just{t}{A} \IMP A)}
        }
    $$

    For the inductive case, a conclusion of $\mp$
    $$
        \vliinf{\mp}{}{\hprove{B}}{\hprove{A}}{\hprove{A \IMP B}}
    $$
    By the inductive hypothesis, we have $\hprove{\just{t}{A}}$ and $\hprove{\just{s}{(A \IMP B)}}$ for some closed proof terms $s$ and $t$.
    Using the $\jk$ axiom, we have
    $$
        \hprove{\just{\pdot{s}{t}}{B}}
    $$
    noting that $\pdot{s}{t}$ is a closed proof term.
\end{proof}

\section{Appendix for \cref{sec:ND}}\label{sec:appnd}
\begin{example}\label{ex:ndjustax}
    $$
        \vlderivation
        {
            \vlin{\intrle{\IMP}}{}
            {
                \utctx{}{(\just{x}{A} \IMP \just{t}{B}) \IMP \just{\pl{t}}{(A \IMP B)}}
            }
            {
                \vlin{\intrle{\boxrule[\pl{}]}}{}
                {
                    \utctx{\just{x}{A} \IMP \just{t}{B}}{\just{\pl{t}}{(A \IMP B)}}
                }
                {
                    \vliin{\elimrle{\IMP}}{}
                    {
                        \utctx{\just{x}{A} \IMP \just{t}{B}, \just{x}{A}}{\just{t}{B}}
                    }
                    {
                        \vlin{\id}{}
                        {
                            \utctx{\just{x}{A} \IMP \just{t}{B}, \just{x}{A}}{\just{x}{A} \IMP \just{t}{B}}
                        }
                        {
                            \vlhy{}
                        }
                    }
                    {
                        \vlin{\id}{}
                        {
                            \utctx{\just{x}{A} \IMP \just{t}{B}, \just{x}{A}}{\just{x}{A}}
                        }
                        {
                            \vlhy{}
                        }
                    }
                }
            }
        }
    $$
    $$
        \vlderivation
        {
            \vlin{\intrle{\IMP}}{}
            {
                \utctx{}{\just{s}{A} \IMP \just{t}{B} \IMP \just{\pai{s}{t}}{(A \AND B)}}
            }
            {
                \vlin{\intrle{\IMP}}{}
                {
                    \utctx{\just{s}{A}}{\just{t}{B} \IMP \just{\pai{s}{t}}{(A \AND B)}}
                }
                {
                    \vliin{\intrle{\boxrule[\pai{}{}]}}{}
                    {
                        \utctx{\just{s}{A}, \just{t}{B}}{\just{\pai{s}{t}}{(A \AND B)}}
                    }
                    {
                        \vlin{\id}{}
                        {
                            \utctx{\just{s}{A}, \just{t}{B}}{\just{s}{A}}
                        }
                        {
                            \vlhy{}
                        }
                    }
                    {
                        \vlin{\id}{}
                        {
                            \utctx{\just{s}{A}, \just{t}{B}}{\just{t}{B}}
                        }
                        {
                            \vlhy{}
                        }
                    }
                }
            }
        }
    $$

    $$
        \vlderivation
        {
            \vlin{\intrle{\IMP}}{}
            {
                \utctx{}{\just{t}{(A \AND B)} \IMP \just{\pael{t}}{A}}
            }
            {
                \vlin{\intrle{\boxrule[\pael{}]}}{}
                {
                    \utctx{\just{t}{(A \AND B)}}{\just{\pael{t}}{A}}
                }
                {
                    \vlin{\id}{}
                    {
                        \utctx{\just{t}{(A \AND B)}}{\just{t}{(A \AND B)}}
                    }
                    {
                        \vlhy{}
                    }
                }
            }
        }
    $$

    $$
        \vlderivation
        {
            \vlin{\intrle{\IMP}}{}
            {
                \utctx{}{\just{t}{(A \AND B)} \IMP \just{\paer{t}}{B}}
            }
            {
                \vlin{\intrle{\boxrule[\paer{}]}}{}
                {
                    \utctx{\just{t}{(A \AND B)}}{\just{\paer{t}}{B}}
                }
                {
                    \vlin{\id}{}
                    {
                        \utctx{\just{t}{(A \AND B)}}{\just{t}{(A \AND B)}}
                    }
                    {
                        \vlhy{}
                    }
                }
            }
        }
    $$

    $$
        \vlderivation
        {
            \vlin{\intrle{\IMP}}{}
            {
                \utctx{}{\just{t}{A} \IMP \just{\pbang{t}}{\just{t}{A}}}
            }
            {
                \vlin{\intrle{\boxrule[\pbang{}]}}{}
                {
                    \utctx{\just{t}{A}}{\just{\pbang{t}}{\just{t}{A}}}
                }
                {
                    \vlin{\id}{}
                    {
                        \utctx{\just{t}{A}}{\just{t}{A}}
                    }
                    {
                        \vlhy{}
                    }
                }
            }
        }
    $$

    $$
        \vlderivation
        {
            \vlin{\intrle{\IMP}}{}
            {
                \utctx{}{\just{t}{A} \IMP A}
            }
            {
                \vlin{\elimrle{\boxrule[\ ]}}{}
                {
                    \utctx{\just{t}{A}}{A}
                }
                {
                    \vlin{\id}{}
                    {
                        \utctx{\just{t}{A}}{\just{t}{A}}
                    }
                    {
                        \vlhy{}
                    }
                }
            }
        }
    $$
\end{example}

\begin{example}\label{ex:typjustax}
    $$
        \vlderivation
        {
            \vlin{\intrle{\IMP}}{}
            {
                \ctx{}{\fm{\la[\var[1]]{\la[\var[2]]{\boxapp{\var[1]}{\var[2]}}}}{\just{s}{(A \IMP B)} \IMP \just{t}{A} \IMP \just{\pdot{s}{t}}{B}}}
            }
            {
                \vlin{\intrle{\IMP}}{}
                {
                    \ctx{\fm{\var[1]}{\just{s}{(A \IMP B)}}}{\fm{\la[\var[2]]{\boxapp{\var[1]}{\var[2]}}}{\just{t}{A} \IMP \just{\pdot{s}{t}}{B}}}
                }
                {
                    \vliin{\intrle{\boxrule[\pdot{}{}]}}{}
                    {
                        \ctx{\fm{\var[1]}{\just{s}{(A \IMP B)}}, \fm{\var[2]}{\just{t}{A}}}{ \fm{\boxapp{\var[1]}{\var[2]}}{\just{\pdot{s}{t}}{B}}}
                    }
                    {
                        \vlin{\id}{}
                        {
                            \ctx{\fm{\var[1]}{\just{s}{(A \IMP B)}}, \fm{\var[2]}{\just{t}{A}}}{ \fm{\var[1]}{\just{s}{(A \IMP B)}}}
                        }
                        {
                            \vlhy{}
                        }
                    }
                    {
                        \vlin{\id}{}
                        {
                            \ctx{\fm{\var[1]}{\just{s}{(A \IMP B)}}, \fm{\var[2]}{\just{t}{A}}}{ \fm{\var[2]}{\just{t}{B}}}
                        }
                        {
                            \vlhy{}
                        }
                    }
                }
            }
        }
    $$

    $$
        \vlderivation
        {
            \vlin{\intrle{\IMP}}{}
            {
                \ctx{}{\fm{\la[\var[1]]{\la[\var[2]]{\boxandc{\var[1]}{\var[2]}}}}{\just{s}{A} \IMP \just{t}{B} \IMP \just{\pai{s}{t}}{(A \AND B)}}}
            }
            {
                \vlin{\intrle{\IMP}}{}
                {
                    \ctx{\fm{\var[1]}{\just{s}{A}}}{\fm{\la[\var[2]]{\boxandc{\var[1]}{\var[2]}}}{\just{t}{B} \IMP \just{\pai{s}{t}}{(A \AND B)}}}
                }
                {
                    \vliin{\intrle{\boxrule[\pai{}{}]}}{}
                    {
                        \ctx{\fm{\var[1]}{\just{s}{A}}, \fm{\var[2]}{\just{t}{B}}}{\fm{\boxandc{\var[1]}{\var[2]}}{\just{\pai{s}{t}}{(A \AND B)}}}
                    }
                    {
                        \vlin{\id}{}
                        {
                            \ctx{\fm{\var[1]}{\just{s}{A}}, \fm{\var[2]}{\just{t}{B}}}{\fm{\var[1]}{\just{s}{A}}}
                        }
                        {
                            \vlhy{}
                        }
                    }
                    {
                        \vlin{\id}{}
                        {
                            \ctx{\fm{\var[1]}{\just{s}{A}}, \fm{\var[2]}{\just{t}{B}}}{\fm{\var[2]}{\just{t}{B}}}
                        }
                        {
                            \vlhy{}
                        }
                    }
                }
            }
        }
    $$

    $$
        \vlderivation
        {
            \vlin{\intrle{\IMP}}{}
            {
                \ctx{}{\fm{\la{\boxprojl{\var}}}{\just{t}{(A \AND B)} \IMP \just{\pael{t}}{A}}}
            }
            {
                \vlin{\intrle{\boxrule[\pael{}]}}{}
                {
                    \ctx{\fm{\var}{\just{t}{(A \AND B)}}}{\fm{\boxprojl{\var}}{\just{\pael{t}}{A}}}
                }
                {
                    \vlin{\id}{}
                    {
                        \ctx{\fm{\var}{\just{t}{(A \AND B)}}}{\fm{\var}{\just{t}{(A \AND B)}}}
                    }
                    {
                        \vlhy{}
                    }
                }
            }
        }
    $$

    $$
        \vlderivation
        {
            \vlin{\intrle{\IMP}}{}
            {
                \ctx{}{\fm{\la{\boxprojr{\var}}}{\just{t}{(A \AND B)} \IMP \just{\paer{t}}{B}}}
            }
            {
                \vlin{\intrle{\boxrule[\paer{}]}}{}
                {
                    \ctx{\fm{\var}{\just{t}{(A \AND B)}}}{\fm{\boxprojr{\var}}{\just{\paer{t}}{B}}}
                }
                {
                    \vlin{\id}{}
                    {
                        \ctx{\fm{\var}{\just{t}{(A \AND B)}}}{\fm{\var}{\just{t}{(A \AND B)}}}
                    }
                    {
                        \vlhy{}
                    }
                }
            }
        }
    $$

    $$
        \vlderivation
        {
            \vlin{\intrle{\IMP}}{}
            {
                \ctx{}{\fm{\la{\bang{\var}}}{\just{t}{A} \IMP \just{\pbang{t}}{\just{t}{A}}}}
            }
            {
                \vlin{\intrle{\boxrule[\pbang{}]}}{}
                {
                    \ctx{\fm{\var}{\just{t}{A}}}{\fm{\bang{\var}}{\just{\pbang{t}}{\just{t}{A}}}}
                }
                {
                    \vlin{\id}{}
                    {
                        \ctx{\fm{\var}{\just{t}{A}}}{\fm{\var}{\just{t}{A}}}
                    }
                    {
                        \vlhy{}
                    }
                }
            }
        }
    $$

    $$
        \vlderivation
        {
            \vlin{\intrle{\IMP}}{}
            {
                \ctx{}{\fm{\la{\unbox{\var}}}{\just{t}{A} \IMP A}}
            }
            {
                \vlin{\elimrle{\boxrule[\ ]}}{}
                {
                    \ctx{\fm{\var}{\just{t}{A}}}{\fm{\unbox{\var}}{A}}
                }
                {
                    \vlin{\id}{}
                    {
                        \ctx{\fm{\var}{\just{t}{A}}}{\fm{\var}{\just{t}{A}}}
                    }
                    {
                        \vlhy{}
                    }
                }
            }
        }
    $$
\end{example}

\begin{example}\label{ex:inttyped}
    The following are theorems obtained from $\cref{thm:internalisation}$:
    \begin{enumerate}
        \item $\hprove{\just{\pl[\pv[\just{s}{(A \IMP B)}]]{\pl[\pv[\just{t}{A}]] {\pbang{(\pdot{s}{t})}}}}{(\just{s}{(A \IMP B)} \IMP \just{t}{A} \IMP \just{\pdot{s}{t}}{B})}}$;

        \item $\hprove{\just{\pl[\pv[\just{t}{A}]]{\pbang{\pbang{t}}}}{(\just{t}{A} \IMP \just{\pbang{t}}{\just{t}{A}})}}$;

        \item $\hprove{\just{\pl[\pv[\just{t}{A}]]{t}}{(\just{t}{A} \IMP A)}}$;

        \item $\hprove{\just{\pl[\pv[\just{x}{A} \IMP \just{t}{B}]]{\pbang{\pl{t}}}}{((\just{x}{A} \IMP \just{t}{B}) \IMP \just{\pl{t}}{(A \IMP B)})}}$;

        \item $\hprove{\just{\pl[\pv[\just{t}{(A \AND B)}]]{\pbang{\pael{t}}}}{(\just{t}{(A \AND B)} \IMP \just{\pael{t}}{A})}}$;

        \item $\hprove{\just{\pl[\pv[\just{t}{(A \AND B)}]]{\pbang{\paer{t}}}}{(\just{t}{(A \AND B)} \IMP \just{\paer{t}}{B})}}$;

        \item $\hprove{\just{\pl[\pv[\just{s}{A}]]{\pl[\pv[\just{t}{B}]]{\pbang{\pai{s}{t}}}}}{(\just{s}{A} \IMP \just{t}{B} \IMP \just{\pai{s}{t}}{(A \AND B)})}}$.
    \end{enumerate}
\end{example}

\begin{proof}[Proof of \cref{thm:tmtoptm}]
    This proceeds by induction on~$\ctx{\fm{\var[A_1]}{A_1}, \dots, \fm{\var[A_n]}{A_n}}{\fm{M(\var[A_1], \dots, \var[A_n])}{A}}$
    unfolding \cref{def:ttpt} at each rule. 

    For the base case~$\ctx{\fm{\var[A_1]}{A_1}, \dots, \fm{\var[A_n]}{A_n}}{\fm{\var[A_i]}{A_i}}$,
    we have $\hprove[\just{\ttpt{\var[A_1]}}{A_1}, \dots, \just{\ttpt{\var[A_n]}}{A_n}]{\just{\ttpt{\var[A_i]}}{A_i}}$ trivially.

    For the inductive cases we consider the previous rule applied.
    When the previous rules are the propositional ones, it is a simple unwinding of \cref{def:ttpt} and using the justification axioms to show soundness.

    When the preceding rule is $\intrle{\boxrule[\pbang{}]}$:
    $$
        \vlderivation
        {
            \vlin{\intrle{\boxrule[\pbang{}]}}{}
            {
                \ctx{\fm{\var[A_1]}{A_1}, \dots, \fm{\var[A_n]}{A_n}}{\fm{\bang{M(\var[A_1], \dots, \var[A_n])}}{\just{\pbang{t}}{\just{t}{A}}}}
            }
            {
                \vlhtr{}{\ctx{\fm{\var[A_1]}{A_1}, \dots, \fm{\var[A_n]}{A_n}}{\fm{M(\var[A_1], \dots, \var[A_n])}{\just{t}{A}}}}
            }
        }
    $$
    by the inductive hypothesis,
    $\hprove[\just{\ttpt{\var[A_1]}}{A_1}, \dots, \just{\ttpt{\var[A_n]}}{A_n}]{\just{\ttpt{M(\var[A_1], \dots, \var[A_n])}}{\just{t}{A}}}$, noting that $\boxterm{\ttpt{M(\var[A_1], \dots, \var[A_n])}} = t$.
    Applying the $\jtax$ and $\jfax$ axioms, we have
    $\hprove[\just{\ttpt{\var[A_1]}}{A_1}, \dots, \just{\ttpt{\var[A_n]}}{A_n}]{\just{\pbang{t}}{\just{t}{A}}}$, noting that $\ttpt{\bang{M(\var[A_1], \dots, \var[A_n])}} = \pbang{t}$.
    The other cases are similar.
\end{proof}

\section{Appendix for \cref{sec:pt}}\label{sec:apppt}
\begin{example}\label{ex:axscproofs}
    $\jimpi : (\just{x}{A} \IMP \just{t}{B}) \IMP \just{\pl{t}}{(A \IMP B)}$:
    $$
        \vlderivation
        {
            \vlin{\rr \IMP}{}
            {
                \seq{}{(\just{x}{A} \IMP \just{t}{B}) \IMP \just{\pl{t}}{(A \IMP B)}}
            }
            {
                \vlin{\rr{\boxrule[\pl{}]}}{}
                {
                    \seq{\just{x}{A} \IMP \just{t}{B}}{\just{\pl{t}}{(A \IMP B)}}
                }
                {
                    \vliin{\lr \IMP}{}
                    {
                        \seq{\just{x}{A} \IMP \just{t}{B}, \just{x}{A}}{\just{t}{B}}
                    }
                    {
                        \vlin{\id}{}
                        {
                            \seq{\just{t}{B}, \just{x}{A}}{\just{t}{B}}
                        }
                        {
                            \vlhy{}
                        }
                    }
                    {
                        \vlin{\id}{}
                        {
                            \seq{\just{t}{B}, \just{x}{A}}{\just{x}{A}}
                        }
                        {
                            \vlhy{}
                        }
                    }
                }
            }
        }
    $$

    $\jandi : \just{s}{A} \IMP (\just{t}{B} \IMP \just{\pai{s}{t}}{(A \AND B)})$:
    $$
        \vlderivation
        {
            \vlin{\rr \IMP}{}
            {
                \seq{}{\just{s}{A} \IMP (\just{t}{B} \IMP \just{\pai{s}{t}}{(A \AND B)})}
            }
            {
                \vlin{\rr \IMP}{}
                {
                    \seq{\just{s}{A}}{\just{t}{B} \IMP \just{\pai{s}{t}}{(A \AND B)}}
                }
                {
                    \vliin{\rr{\boxrule[\pai{}{}]}}{}
                    {
                        \seq{\just{s}{A}, \just{t}{B}}{ \just{\pai{s}{t}}{(A \AND B)}}
                    }
                    {
                        \vlin{\id}{}
                        {
                            \seq{\just{s}{A}, \just{t}{B}}{ \just{s}{A}}
                        }
                        {
                            \vlhy{}
                        }
                    }
                    {
                        \vlin{\id}{}
                        {
                            \seq{\just{s}{A}, \just{t}{B}}{ \just{t}{B}}
                        }
                        {
                            \vlhy{}
                        }
                    }
                }
            }
        }
    $$

    $\jandel : \just{t}{(A \AND B)} \IMP \just{\pael{t}}{A}$:
    $$
        \vlderivation
        {
            \vlin{\rr \IMP}{}
            {
                \seq{}{\just{t}{(A \AND B)} \IMP \just{\pael{t}}{A}}
            }
            {
                \vlin{}{\rr{\boxrule[\pael{}]}}
                {
                    \seq{\just{t}{(A \AND B)}}{\just{\pael{t}}{A}}
                }
                {
                    \vlin{\id}{}
                    {
                        \seq{\just{t}{(A \AND B)}}{\just{t}{(A \AND B)}}
                    }
                    {
                        \vlhy{}
                    }
                }
            }
        }
    $$

    $\jander : \just{t}{(A \AND B)} \IMP \just{\paer{t}}{B}$ is similar.

    $\jtax : \just{t}{A} \IMP A$:
    $$
        \vlderivation
        {
            \vlin{\lr \IMP}{}
            {
                \seq{}{\just{t}{A} \IMP A}
            }
            {
                \vlin{\lr {\boxrule[\ ]}}{}
                {
                    \seq{\just{t}{A}}{ A}
                }
                {
                    \vlin{\id}{}
                    {
                        \seq{A}{A}
                    }
                    {
                        \vlhy{}
                    }
                }
            }
        }
    $$

    $\jfax : \just{t}{A} \IMP \just{\pbang{t}}{\just{t}{A}}$:
    $$
        \vlderivation
        {
            \vlin{\lr \IMP}{}
            {
                \seq{}{\just{t}{A} \IMP \just{\pbang{t}}{\just{t}{A}}}
            }
            {
                \vlin{\rr {\boxrule[\pbang{}]}}{}
                {
                    \seq{\just{t}{A}}{ \just{\pbang{t}}{\just{t}{A}}}
                }
                {
                    \vlin{\id}{}
                    {
                        \seq{\just{t}{A}}{\just{t}{A}}
                    }
                    {
                        \vlhy{}
                    }
                }
            }
        }
    $$
\end{example}

\begin{definition}[Rank of formula]\label{def:rank}
    The rank of a formula $A$ occuring in a sequent with derivation $\pi$, $\rank{\seq{\Gamma, \ann{A}}{B}}{\pi}$ or $\rank{\seq{\Gamma}{\ann{A}}}{\pi}$ is defined inductively on the derivation~$\pi$:

    \begin{enumerate}
        \item If $\pi$ is a conclusion of $\id$, then $\rank{\seq{\Gamma, \ann{A}}{B}}{\pi} = \deg{A}$ or $\rank{\seq{\Gamma}{\ann{A}}}{\pi} = \deg{A}$.
        
        \item If the preceding rule of $\pi$ is $\cont$:
        $$
            \vlderivation
            {
                \vlin{\cont}{}
                    {
                        \seq{\Gamma, A}{B}
                    }
                    {
                        \vlhtr{\pi'}{\seq{\Gamma, A, A}{B}}
                    }
            }
        $$
        then $\rank{\seq{\Gamma, \ann{A}}{B}}{\pi} = \max{\rank{\seq{\Gamma, \ann{A}, A}{B}}{\pi'}}{\rank{\seq{\Gamma, A, \ann{A}}{B}}{\pi'}}$.

        \item If the preceding rule of $\pi$ is $\lr \AND$:
        $$
            \vlderivation
            {
                \vlin{\lr \AND}{}
                {
                    \seq{\Gamma, A \AND B}{C}
                }
                {
                    \vlhtr{\pi'}{\seq{\Gamma, A, B}{C}}
                }
            }
        $$
        then $\rank{\seq{\Gamma, \ann{A \AND B}}{C}}{\pi} = 1 + \rank{\seq{\Gamma, \ann{A}, B}{C}}{\pi'} + \rank{\seq{\Gamma, A, \ann{B}}{C}}{\pi'}$.

        \item If the preceding rule of $\pi$ is $\rr \AND$:
        $$
            \vlderivation
            {
                \vliin{\rr \AND}{}
                {
                    \seq{\Gamma}{A \AND B}
                }
                {
                    \vlhtr{\pi_1}{\seq{\Gamma}{A}}
                }
                {
                    \vlhtr{\pi_2}{\seq{\Gamma}{B}}
                }
            }
        $$
        then $\rank{\seq{\Gamma}{\ann{A \AND B}}}{\pi} = 1 + \rank{\seq{\Gamma}{\ann{A}}}{\pi_1} + \rank{\seq{\Gamma}{\ann{B}}}{\pi_2}$.

        \item If the preceding rule of $\pi$ is $\lr \IMP$:
        $$
            \vlderivation
            {
                \vliin{\lr \IMP}{}
                {
                    \seq{\Gamma, A \IMP B}{C}
                }
                {
                    \vlhtr{\pi_1}{\seq{\Gamma, B}{C}}
                }
                {
                    \vlhtr{\pi_2}{\seq{\Gamma}{A}}
                }
            }
        $$
        then $\rank{\seq{\Gamma, \ann{A \IMP B}}{C}}{\pi} = 1 + \rank{\seq{\Gamma, \ann{B}}{C}}{\pi_1} + \rank{\seq{\Gamma}{\ann{A}}}{\pi_2}$ and $\rank{\seq{\Gamma, A \IMP B}{\ann{C}}}{\pi} = \rank{\seq{\Gamma, B}{\ann{C}}}{\pi_1}$.

        \item If the preceding rule of $\pi$ is $\rr \IMP$:
        $$
            \vlderivation
            {
                \vlin{\rr \IMP}{}
                {
                    \seq{\Gamma}{A \IMP B}
                }
                {
                    \vlhtr{\pi'}{\seq{\Gamma, A}{B}}
                }
            }
        $$
        then $\rank{\seq{\Gamma}{\ann{A \IMP B}}}{\pi} = 1 + \rank{\seq{\Gamma, \ann{A}}{B}}{\pi'} + \rank{\seq{\Gamma, A}{\ann{B}}}{\pi'}$.

        \item If the preceding rule of $\pi$ is $\lr {\boxrule[\ ]}$:
        $$
            \vlderivation
            {
                \vlin{\lr {\boxrule[\ ]}}{}
                {
                    \seq{\Gamma, \just{t}{A}}{B}
                }
                {
                    \vlhtr{\pi'}{\seq{\Gamma, A}{B}}
                }
            }
        $$
        then $\rank{\seq{\Gamma, \ann{\just{t}{A}}}{B}}{\pi} = \deg{t} + \rank{\seq{\Gamma, \ann{A}}{B}}{\pi'}$.

        \item If the preceding rule of $\pi$ is $\rr {\boxrule[\ ]}$:
        $$
            \vlderivation
            {
                \vlin{\rr {\boxrule[\ ]}}{}
                {
                    \seq{\Gamma}{\just{x}{A}}
                }
                {
                    \vlhtr{\pi'}{\seq{\Gamma}{A}}
                }
            }
        $$
        then $\rank{\seq{\Gamma}{\ann{\just{x}{A}}}}{\pi} = 1 + \rank{\seq{\Gamma}{\ann{A}}}{\pi'}$.

        \item If the preceding rule of $\pi$ is $\rr {\boxrule[\pbang{}]}$:
        $$
            \vlderivation
            {
                \vlin{\rr {\boxrule[\pbang{}]}}{}
                {
                    \seq{\Gamma}{\just{\pbang{t}}{\just{t}{A}}}
                }
                {
                    \vlhtr{\pi'}{\seq{\Gamma}{\just{t}{A}}}
                }
            }
        $$
        then $\rank{\seq{\Gamma}{\ann{\just{\pbang{t}}{\just{t}{A}}}}}{\pi} = \deg{\pbang{t}} + \rank{\seq{\Gamma}{\ann{\just{t}{A}}}}{\pi'}$.

        \item If the preceding rule of $\pi$ is $\rr {\boxrule[\pdot{}{}]}$:
        $$
            \vlderivation
            {
                \vliin{\rr {\boxrule[\pdot{}{}]}}{}
                {
                    \seq{\Gamma}{\just{\pdot{s}{t}}{B}}
                }
                {
                    \vlhtr{\pi_1}{\seq{\Gamma}{\just{s}{(A \IMP B)}}}
                }
                {
                    \vlhtr{\pi_2}{\seq{\Gamma}{\just{t}{A}}}
                }
            }
        $$
        then $\rank{\seq{\Gamma}{\ann{\just{\pdot{s}{t}}{B}}}}{\pi} = 1 + \rank{\seq{\Gamma}{\ann{\just{s}{(A \IMP B)}}}}{\pi_1} + \rank{\seq{\Gamma}{\ann{\just{t}{A}}}}{\pi_2}$.

        \item If the preceding rule of $\pi$ is $\rr{\boxrule[\pl{}]}$:
        $$
            \vlderivation
            {
                \vlin{\rr{\boxrule[\pl{}]}}{}
                {
                    \seq{\Gamma}{\just{\pl{t}}{(A \IMP B)}}
                }
                {
                    \vlhtr{\pi'}{\seq{\Gamma, \just{x}{A}}{\just{t}{B}}}
                }
            }
        $$
        then $\rank{\seq{\Gamma}{\ann{\just{\pl{t}}{(A \IMP B)}}}}{\pi} = 2 + \rank{\seq{\Gamma, \ann{\just{x}{A}}}{\just{t}{B}}}{\pi'} + \rank{\seq{\Gamma, \just{x}{A}}{\ann{\just{t}{B}}}}{\pi'}$.

        \item If the preceding rule of $\pi$ is $\rr{\boxrule[\pael{}]}$:
        $$
            \vlderivation
            {
                \vlin{\rr{\boxrule[\pael{}]}}{}
                {
                    \seq{\Gamma}{\just{\pael{t}}{A}}
                }
                {
                    \vlhtr{\pi'}{\seq{\Gamma}{\just{t}{(A \AND B)}}}
                }
            }
        $$
        then $\rank{\seq{\Gamma}{\ann{\just{\pael{t}}{A}}}}{\pi} = 1 + \rank{\seq{\Gamma}{\ann{\just{t}{(A \AND B)}}}}{\pi'}$.

        \item If the preceding rule of $\pi$ is $\rr{\boxrule[\paer{}]}$:
        $$
            \vlderivation
            {
                \vlin{\rr{\boxrule[\paer{}]}}{}
                {
                    \seq{\Gamma}{\just{\paer{t}}{B}}
                }
                {
                    \vlhtr{\pi'}{\seq{\Gamma}{\just{t}{(A \AND B)}}}
                }
            }
        $$
        then $\rank{\seq{\Gamma}{\ann{\just{\paer{t}}{B}}}}{\pi} = 1 + \rank{\seq{\Gamma}{\ann{\just{t}{(A \AND B)}}}}{\pi'}$.

        \item  If the preceding rule of $\pi$ is $\rr{\boxrule[\pai{}{}]}$:
        $$
            \vlderivation
            {
                \vliin{\rr{\boxrule[\pai{}{}]}}{}
                {
                    \seq{\Gamma}{\just{\pai{s}{t}}{(A \AND B)}}
                }
                {
                    \vlhtr{\pi_1}{\seq{\Gamma}{\just{s}{A}}}
                }
                {
                    \vlhtr{\pi_2}{\seq{\Gamma}{\just{t}{B}}}
                }
            }
        $$
        then $\rank{\seq{\Gamma}{\ann{\just{\pai{s}{t}}{(A \AND B)}}}}{\pi} = 2 + \rank{\seq{\Gamma}{\ann{\just{s}{A}}}}{\pi_1} + \rank{\seq{\Gamma}{\ann{\just{t}{B}}}}{\pi_2}$.

        \item  If the preceding rule of $\pi$ is $\cut$:
        $$
            \vlderivation
            {
                \vliin{\cut}{}
                {
                    \seq{\Gamma, A_1, \Delta, A_2}{B}
                }
                {
                    \vlhtr{\pi_1}{\seq{\Gamma, A_1}{A}}
                }
                {
                    \vlhtr{\pi_2}{\seq{\Delta, A_2, A, \dots, A}{B}}
                }
            }
        $$
        then $\rank{\seq{\Gamma, \ann{A_1}, \Delta, A_2}{B}}{\pi} = \rank{\seq{\Gamma, \ann{A_1}}{A}}{\pi_1}$ and $\rank{\seq{\Gamma, A_1, \Delta, \ann{A_2}}{B}}{\pi} = \rank{\seq{\Delta, \ann{A_2}, A, \dots, A}{B}}{\pi_2}$.

        \item  If the preceding rule of $\pi$ is of the form:
        $$
            \vlderivation
            {
                \vlin{\rle}{}
                {
                    \seq{\Gamma}{A}
                }
                {
                    \vlhtr{\pi'}{\seq{\Gamma'}{A}}
                }
            }
        $$
        then $\rank{\seq{\Gamma}{\ann{A}}}{\pi} = \rank{\seq{\Gamma'}{\ann{A}}}{\pi'}$.

        \item  If the preceding rule of $\pi$ is of the form:
        $$
            \vlderivation
            {
                \vliiin{\rle}{}
                {
                    \seq{\Gamma, A}{B}
                }
                {
                    \vlhtr{\pi_1}{\seq{\Gamma, A}{B_1}}
                }
                {
                    \vlhy{\cdots}
                }
                {
                    \vlhtr{\pi_n}{\seq{\Gamma, A}{B_n}}
                }
            }
        $$
        for $n = 1, 2$, then $\rank{\seq{\Gamma, \ann{A}}{B}}{\pi} = \max{\rank{\seq{\Gamma, \ann{A}}{B_1}}{\pi_1}, \dots}{\rank{\seq{\Gamma, \ann{A}}{B_n}}{\pi_n}}$.
    \end{enumerate}
\end{definition}

\begin{proof}[Proof of the Stripping Lemma (\cref{lem:stripping})]
    We proceed by induction on $\pi$.

    For the base case $\pi$ is of the form
    $$
        \vlinf{\id}{}
        {
            \seq{\Gamma, \just{t}{A}}{\just{t}{A}}
        }
        {}
    $$
    we construct $\tilde\pi$ through the following
    $$
        \vlderivation
        {
            \vlin{\lr {\boxrule[\ ]}}{}
            {
                \seq{\Gamma, \just{t}{A}}{A}
            }
            {
                \vlin{\gid}{}
                {
                    \seq{\Gamma, A}{A}
                }
                {
                    \vlhy{}
                }
            }
        }
    $$
    noting that the conditions 1-3 are met.

    For the inductive cases, we look at the preceding rule of $\pi$.
    The cases of the rules $\cont, \lr \IMP, \lr \AND, \lr {\boxrule[\ ]}, \cut$ are straightforward as these rules do not act on the formula $\just{t}{A}$ in the succedent of the sequent.

    If the preceding rule is $\rr {\boxrule[\ ]}$ or $\rr {\boxrule[\pbang{}]}$:
    $$
        \vlderivation
        {
            \vlin{\rr {\boxrule[\ ]}}{}
            {
                \seq{\Gamma}{\just{x}{A}}
            }
            {
                \vlhtr{\pi'}{\seq{\Gamma}{A}}
            }
        }
        \text{ or }
        \vlderivation
        {
            \vlin{\rr {\boxrule[\pbang{}]}}{}
            {
                \seq{\Gamma}{\just{\pbang{t}}{\just{t}{A}}}
            }
            {
                \vlhtr{\pi'}{\seq{\Gamma}{\just{t}{A}}}
            }
        }
    $$
    then we set $\tilde{\pi} = \pi'$ noting the conditions are met.

    If the preceding rule of $\pi$ is $\rr {\boxrule[\pdot{}{}]}$:
    $$
        \vlderivation
        {
            \vliin{\rr {\boxrule[\pdot{}{}]}}{}
            {
                \seq{\Gamma}{\just{\pdot{s}{t}}{B}}
            }
            {
                \vlhtr{\pi_1}{\seq{\Gamma}{\just{s}{(A \IMP B)}}}
            }
            {
                \vlhtr{\pi_2}{\seq{\Gamma}{\just{t}{A}}}
            }
        }
        $$
    by the inductive hypothesis, there exists proofs $\tilde{\pi_1}$ of $\seq{\Gamma}{A \IMP B}$ and $\tilde{\pi_2}$ of $\seq{\Gamma}{A}$ with $\rank{\seq{\Gamma}{\ann{A \IMP B}}}{\tilde{\pi_1}} < \rank{\seq{\Gamma}{\ann{\just{s}{(A \IMP B)}}}}{\pi_1}$, $\rank{\seq{\Gamma}{\ann{A}}}{\tilde{\pi_2}} < \rank{\seq{\Gamma}{\ann{\just{t}{A}}}}{\pi_2}$, $\cutrank{\tilde{\pi_1}} \leq \max{\cutrank{\pi_1}}{\rank{\seq{\Gamma}{\ann{\just{s}{(A \IMP B)}}}}{\pi_1} - 1}$ and
    $\cutrank{\tilde{\pi_2}} \leq \max{\cutrank{\pi_1}}{\rank{\seq{\Gamma}{\ann{\just{t}{A}}}}{\pi_2} - 1}$.

    We construct the proof $\pi'$ from the following
    $$
        \vlderivation
        {
            \vliq{\cont}{}
            {
                \seq{\Gamma}{B}
            }
            {
                \vliin{\cut}{}
                {
                    \seq{\Gamma, \Gamma}{B}
                }
                {
                    \vlhtr{\tilde{\pi_1}}{\seq{\Gamma}{A \IMP B}}
                }
                {
                    \vliin{\cut}{}
                    {
                        \seq{\Gamma, A \IMP B}{B}
                    }
                    {
                        \vlhtr{\tilde{\pi_2}}{\seq{\Gamma}{A}}
                    }
                    {
                        \vliin{\lr \IMP}{}
                        {
                            \seq{A, A \IMP B}{B}
                        }
                        {
                            \vlin{\gid}{}
                            {
                                \seq{A, B}{B}
                            }
                            {
                                \vlhy{}
                            }
                        }
                        {
                            \vlin{\gid}{}
                            {
                                \seq{A}{A}
                            }
                            {
                                \vlhy{}
                            }
                        }
                    }
                }
            }
        }
    $$
    noting that $\rank{\seq{\Gamma}{\ann{B}}}{\pi'} = \deg{B} < \deg{\just{\pdot{s}{t}}{B}} \leq \rank{\seq{\Gamma}{\ann{\just{\pdot{s}{t}}{B}}}}{\pi}$ (by \cref{lem:rankfacts}) and the cuts introduced to the system will be of most $\rank{\seq{\Gamma}{\ann{A \IMP B}}}{\tilde{\pi_1}}, \rank{\seq{\Gamma}{\ann{A}}}{\tilde{\pi_2}} < \rank{\seq{\Gamma}{\ann{\just{\pdot{s}{t}}{B}}}}{\pi}$.

    If the preceding rule of $\pi$ is $\rr{\boxrule[\pl{}]}$:
    $$
        \vlderivation
        {
            \vlin{\rr{\boxrule[\pl{}]}}{}
            {
                \seq{\Gamma}{\just{\pl{t}}{(A \IMP B)}}
            }
            {
                \vlhtr{\pi'}{\seq{\Gamma, \just{x}{A}}{\just{t}{B}}}
            }
        }
    $$
    by the inductive hypothesis there exists a proof $\tilde{\pi'}$ of $\seq{\Gamma, \just{x}{A}}{B}$ with $\rank{\seq{\Gamma, \just{x}{A}}{\ann{B}}}{\tilde{\pi'}} < \rank{\seq{\Gamma, \just{x}{A}}{\ann{\just{t}{B}}}}{\pi'}$ and $\cutrank{\tilde{\pi'}} \leq \max{\cutrank{\pi'}}{\rank{\seq{\Gamma, \just{x}{A}}{\ann{\just{t}{B}}}}{\pi'}-1}$.

    Construct $\pi'$ by
    $$
        \vlderivation
        {
            \vlin{\rr \IMP}{}
            {
                \seq{\Gamma}{A \IMP B}
            }
            {
                \vliin{\cut}{}
                {
                    \seq{\Gamma, A}{B}
                }
                {
                    \vlin{\rr {\boxrule[\ ]}}{}
                    {
                        \seq{A}{\just{x}{A}}
                    }
                    {
                        \vlin{\gid}{}
                        {
                            \seq{A}{A}
                        }
                        {
                            \vlhy{}
                        }
                    }
                }
                {
                    \vlhtr{\tilde{\pi'}}{\seq{\Gamma, \just{x}{A}}{B}}
                }
            }
        }
    $$
    noting that the only new cut introduced will be of $\rank{\seq{\Gamma, \just{x}{A}}{B}}{\tilde{\pi'}} < \rank{\seq{\Gamma}{\ann{\just{\pl{t}}{(A \IMP B)}}}}{\pi}$.

    If the preceding rule of $\pi$ is $\rr{\boxrule[\pael{}]}$:
    $$
        \vlderivation
        {
            \vlin{\rr{\boxrule[\pael{}]}}{}
            {
                \seq{\Gamma}{\just{\pael{t}}{A}}
            }
            {
                \vlhtr{\pi'}{\seq{\Gamma}{\just{t}{(A \AND B)}}}
            }
        }
    $$
    then by the inductive hypothesis there exists a proof $\tilde{\pi'}$ of $\seq{\Gamma}{A \AND B}$ similarly with the necessary conditions.

    Construct $\pi'$ from
    $$
        \vlderivation
        {
            \vliin{\cut}{}
            {
                \seq{\Gamma}{A}
            }
            {
                \vlhtr{\tilde{\pi'}}{\seq{\Gamma}{A \AND B}}
            }
            {
                \vlin{\lr \AND}{}
                {
                    \seq{A \AND B}{A}
                }
                {
                    \vlin{\gid}{}
                    {
                        \seq{A, B}{A}
                    }
                    {
                        \vlhy{}
                    }
                }
            }
        }
    $$
    which will satisfy the necessary conditions.

    The case for $\rr{\boxrule[\paer{}]}$ is similar.

    If the preceding rule of $\pi$ is $\rr{\boxrule[\pai{}{}]}$:
    $$
        \vlderivation
        {
            \vliin{\rr{\boxrule[\pai{}{}]}}{}
            {
                \seq{\Gamma}{\just{\pai{s}{t}}{(A \AND B)}}
            }
            {
                \vlhtr{\pi_1}{\seq{\Gamma}{\just{s}{A}}}
            }
            {
                \vlhtr{\pi_2}{\seq{\Gamma}{\just{t}{B}}}
            }
        }
    $$
    then by the inductive hypotheses, there exista proofs $\tilde{\pi_1}$ of $\seq{\Gamma}{A}$ and $\tilde{\pi_2}$ of $\seq{\Gamma}{B}$ with the necessary conditions.

    Construct $\pi'$ from
    $$
        \vlderivation
        {
            \vliin{\rr \AND}{}
            {
                \seq{\Gamma}{A \AND B}
            }
            {
                \vlhtr{\tilde{\pi_1}}{\seq{\Gamma}{A}}
            }
            {
                \vlhtr{\tilde{\pi_2}}{\seq{\Gamma}{B}}
            }
        }
    $$
    which will satisfy the necessary conditions.
\end{proof}

\end{document}